\definecolor{mypink1}{rgb}{0.858, 0.188, 0.478}
\definecolor{mypink2}{RGB}{219, 48, 122}
\definecolor{mypink3}{cmyk}{0, 0.7808, 0.4429, 0.1412}
\definecolor{mygray}{gray}{0.6}
\theoremstyle{plain}
\newtheorem{theorem}{Theorem}[section]
\newtheorem{lemma}[theorem]{Lemma}
\newtheorem{corollary}[theorem]{Corollary}
\newtheorem{proposition}[theorem]{Proposition}
\theoremstyle{definition}
\newtheorem{definition}[theorem]{Definition}
\theoremstyle{remark}
\newtheorem{remark}[theorem]{Remark}
\newcommand{\tail}[3]{\widehat{#1}_{#2,#3}}
\numberwithin{equation}{section}
\def\cal{\mathcal}
\begin{document}


\title{Data-adaptive trimming of the Hill estimator and detection of outliers in the extremes of heavy-tailed data \thanks{MK and SB were supported by National Science Foundation grant CNS-1422078; SB and SS
	were partially supported by the NSF grant DMS-1462368.}}



\author{Shrijita Bhattacharya\thanks{Department of Statistics, University of Michigan, 311 West Hall, 1085 S. University Ann Arbor, MI 48109-1107, {\tt \{shrijita, sstoev\}@umich.edu}}\quad Michael Kallitsis\thanks{Merit Network, Inc., 1000 Oakbrook Drive, Suite 200, Ann Arbor, MI 48104, {\tt mgkallit@merit.edu}}\quad Stilian Stoev$^\dagger$}




\maketitle

\begin{abstract} We introduce a trimmed version of the Hill estimator for the index of a heavy-tailed distribution, 
	which is robust to perturbations in the extreme order statistics.  In the ideal Pareto setting, the estimator is 
	essentially finite-sample efficient among all unbiased estimators with a given strict upper break-down point.  For general heavy-tailed models, 
	we establish the asymptotic normality of the estimator under second order regular variation conditions and also show it is minimax rate-optimal 
	in the Hall class of distributions.  We also develop an automatic, data-driven method for the choice of the trimming parameter which yields a
	new type of  robust estimator that can {\em adapt} to the unknown level of contamination in the extremes.  This adaptive robustness property
	makes our estimator particularly appealing and superior to other robust estimators in the setting where the extremes of the data are contaminated.
	As an important application of the data-driven selection of the trimming parameters, we obtain a methodology for the principled identification of extreme outliers in heavy tailed data. Indeed, the method has been shown to correctly identify the number of outliers in the previously explored Condroz data set.
\end{abstract}

\section{Introduction}

The estimation of the tail index for heavy-tailed distributions is perhaps one of the most studied problems in extreme value theory. Since the seminal works of \cite{Hill:1975, pickands:1975, hall:1982} among many others, numerous aspects of this problem and 
its applications have been explored (see e.g., the monographs \cite{embrechts:kluppelberg:mikosch:1997} and 
\cite{beirlant:goegebeur:teugels:segers:2004}).  

Let $X_1,\cdots,X_n$ be an i.i.d. sample from a distribution $F$.  We shall say that $F$ has a heavy (right) tail if:
\begin{equation}
\label{e:heavy-tail}
\mathbb{P}(X_1> x) \equiv 1-F(x) \sim \ell(x) x^{-1/\xi},\ \ \mbox{ as }x\to\infty,
\end{equation}
for some $\xi>0$ and a slowly varying function $\ell:(0,\infty)\rightarrow (0,\infty)$, i.e.,  $\ell(\lambda x)/\ell(x)\to 1,\ x\to\infty,$ for all $\lambda>0$. The parameter $\xi$ is referred to as the {\em tail index} of $F$.  Its estimation is of fundamental importance to the applications of extreme value theory (see for example the monographs \cite{beirlant:goegebeur:teugels:segers:2004}, \cite{dehaan:ferreira:2006}, \cite{resnick:2007}, and the references therein).

\begin{figure}[H]
	\hspace{-1mm}	\includegraphics[width=0.36\textwidth]{./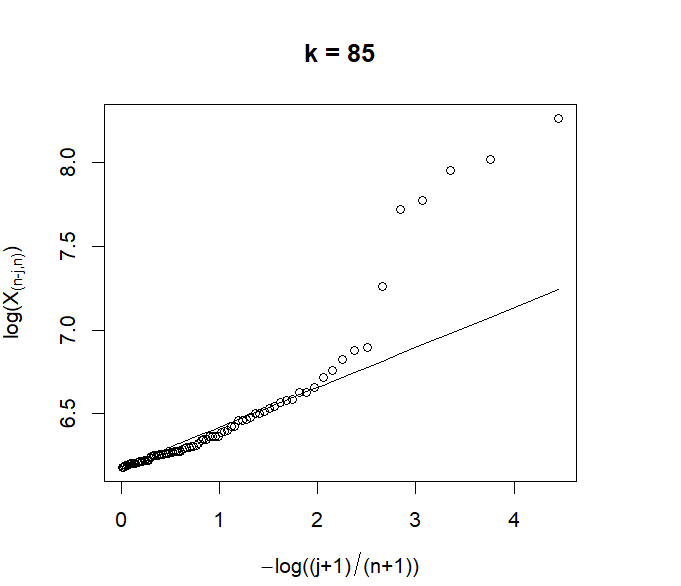}
	\hspace{-9mm}	\includegraphics[width=0.36\textwidth]{./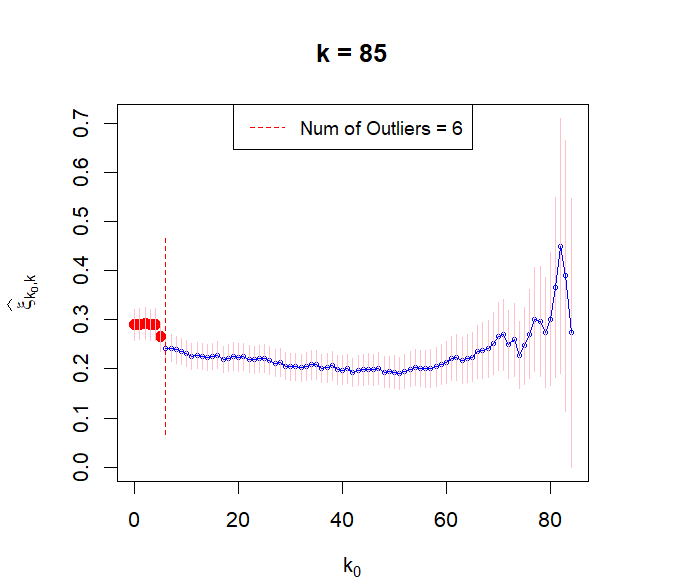}
	\hspace{-8mm}	\includegraphics[width=0.36\textwidth]{./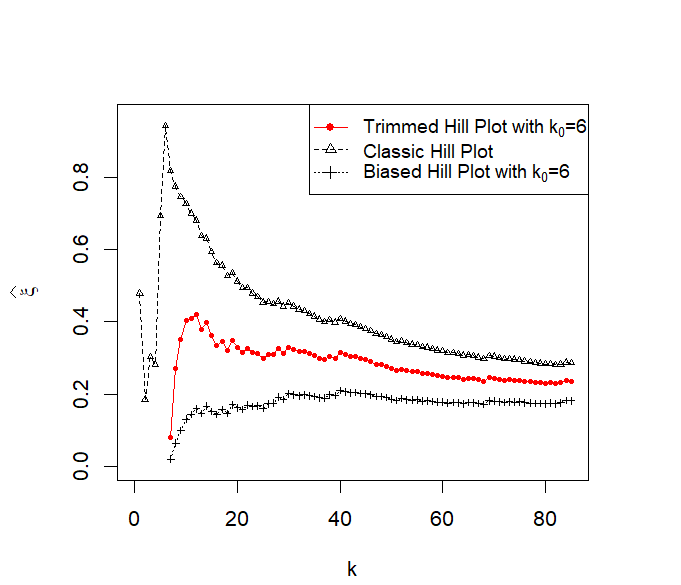}
	\caption{Exploratory plots of the Condroz data set. Left: Pareto quantile plot, Middle: Diagnostic Plot  and Right: Hill plots viz classic Hill plot, trimmed Hill plot and biased Hill plot.}
	\label{fig:condroz}
\end{figure}

The fact that the tail index  $\xi$ governs the asymptotic right tail-behavior of $F$ means that, in practice, one should estimate it by focusing on the most extreme values of the sample. In many applications, one may quickly run out of data since only the largest few order statistics are utilized.  Since every extreme data-point matters, the problem becomes even more challenging when a certain number of these large order statistics are {\em corrupted}. 
Contamination of the top order statistics, if not properly accounted for, can lead to severe bias in the estimation of the tail index. For example, the right 
panel of Figure \ref{fig:condroz} shows the classic Hill plot, its biased version and our new trimmed Hill plot for a data set which has
been previously identified to have 6 outliers (see \cite{goegebeur, VandewalleCa} and Section \ref{sec:real}, below, for more details).  We shall elaborate more on the construction
of these three plots\footnote{https://shrijita-apps.shinyapps.io/adaptive-trimmed-hill/} in the rest of the introduction but observe the drastic difference in the tail-index estimates produced by these methods. 

Recall the  {\em classic Hill} estimator of $\xi$:
\begin{equation}
\label{e:hill}
\widehat \xi_k(n):= \frac{1}{k} \sum_{i=1}^{k} \log \Bigg(\frac{X_{(n-i+1,n)}}{X_{(n-k,n)}} \Bigg), \hspace{5mm} 1\leq k\leq n-1.
\end{equation}
It is based on the top-$k$ of the order statistics: 
$$
X_{(n,n)} \geq X_{(n-1,n)}\geq  \cdots \geq  X_{(1,n)}
$$
of the sample $X_i,\ i=1,\cdots,n$. 

Naturally, one can trim a certain number of the largest order statistics in order to obtain a robust estimator of $\xi$. 
This idea has already been considered  in Brazauskas and Serfling \cite{MR1856199}, who (among other robust estimators)
defined a trimmed version of the Hill estimator:
\begin{equation}\label{e:xi-trim-gen}
\widehat{\xi}^{\rm trim}_{k_0,k}(n):= \sum_{i=k_0+1}^{k} c_{k_0,k}(i) \log \Bigg(\frac{X_{(n-i+1,n)}}{X_{(n-k,n)}} \Bigg),\hspace{5mm}  0\le k_0<k<n.
\end{equation}
where the weights $c_{k_0,k}(i)$ were chosen so that the estimator is asymptotically unbiased for $\xi$ (see Section 3.1 in \cite{MR1856199}). The weights used by Brazauskas and Serfling, however, are not optimal. In Section \ref{sec:tr-hill-plt}, we show that the asymptotically optimal trimmed 
Hill estimator has the form 
\begin{equation}\label{e:xi-trimmed}
\widehat{\xi}_{k_0,k}(n):=\frac{k_0}{k-k_0}\log \Bigg(\frac{X_{(n-k_0,n)}}{X_{(n-k,n)}} \Bigg)+\underbrace{\frac{1}{k-k_0} \sum_{i=k_0+1}^{k} \log\Bigg(\frac{X_{(n-i+1,n)}}{X_{(n-k,n)}}\Bigg)}_{\widehat{\xi}^0_{k_0,k}},\hspace{5mm}  0\le k_0<k<n.
\end{equation}
Note that if $k_0=0$ the  {\em trimmed Hill} estimator $\widehat{\xi}_{k_0,k}$ coincides with the classic Hill estimator. 

A number of authors have also considered trimming but of the {\em models} rather than the data.  Specifically,  the seminal works of  \cite{aban} and \cite{bier_truncated} studied the case where the distribution is {\em truncated} to a potentially unknown
large value.  In contrast, here we assume to have non-truncated heavy-tailed model and trim the data as a way of achieving robustness to outliers in the extremes.

Suppose now that somehow one has identified that the top-$k_0$ order statistics have been corrupted. Following \cite{trHill}, if one were to simply
ignore them and apply the classic Hill estimator to the observations $X_{(n-k_0)}\geq \cdots \geq X_{(n-k,n)}$, the estimator would be biased.  Indeed,
the second summand, $\widehat{\xi}^0_{k_0,k}(n)$ in \eqref{e:xi-trimmed} gives the expression for this {\em biased Hill} estimator.  
The recent work of Zou {\em et al} \cite{zou:davis:samorodnitsky:2017} uses this biased Hill estimator in a different inferential censoring--type  context, 
where {\em an unknown} number $k_0$ of the top order statistics is missing. 

Let us return to Figure \ref{fig:condroz} (right panel).  It shows the classic Hill plot, i.e., the plot of $\widehat{\xi}_k(n)$ as a function of $k$ as well as
the plots of $\widehat\xi_{k_0,k}(n)$ and $\widehat{\xi}_{k_0,k}^{0}(n)$ as a function of $k$.  We refer to the last two plots as to the trimmed Hill and biased Hill plots, respectively.  Since the data exhibits six outliers, the trimmed Hill and biased Hill plots are based on $k_0=6$.  The significant difference in the three
plots demonstrates the effect that outliers can have on the estimation of the tail index.

\medskip
{\em In this paper}, we introduce and study the trimmed Hill estimator $\widehat \xi_{k_0,k}(n)$ defined in \eqref{e:xi-trimmed}.  
We begin by establishing its finite sample optimality and robustness properties. Specifically, for ideal Pareto data, we establish in Theorem \ref{thm:umvue-p} that the trimmed Hill estimator is nearly minimum-variance among all unbiased estimators with given {\em strong upper break-down point} (see Definition \ref{def:ubp-est}).  Since the Pareto regime emerges asymptotically, it is not surprising that the trimmed Hill estimator is also minimax rate-optimal.  
This was shown in Theorem \ref{thm:opt-rate} for the Hall class of heavy-tailed distributions.  Furthermore, under technical second-order regular variation conditions, we establish the asymptotic normality of the trimmed Hill estimator in Section \ref{sec:asy-dist}.

The optimality and asymptotic properties of the trimmed Hill estimator although interesting are not practically useful unless one has a 
data-adaptive method for the choice of the trimming parameter $k_0$. This problem is addressed in Section \ref{sec:aut-trim}. We start by introducing \textit{diagnostic plot}\footnote{\url{https://shrijita-apps.shinyapps.io/adaptive-trimmed-hill/}\label{shiny}} to visually determine the number of outliers $k_0$. It is a plot of the trimmed Hill estimator as function of $k_0$ for a fixed $k$.  Figure \ref{fig:condroz} (middle panel) displays this 
plot for a real data set. A sudden change point at $k_0=6$ further corroborates the hypothesis of six plausible outliers in the data set.  This
value of $k_0$ was automatically identified by the method we introduce in Section \ref{sec:aut-trim}.  The methodology\footref{shiny} for the automatic selection of $k_0$
is based on a weighted sequential testing method, which exploits the elegant structure of the joint distribution of $\widehat{\xi}_{k_0,k}(n),\ k_0=0,1,\dots,k-1$ in the ideal Pareto setting. In Section \ref{sec:asy-dist}, we show that this test is asymptotically consistent in the general heavy-tailed regime
\eqref{e:heavy-tail} under second order conditions on the regularly varying function $\ell$ of \cite{bier}. In fact, the resulting estimator 
$\widehat{\xi}_{\widehat k_0,k}(n)$, where $\widehat k_0$ is automatically selected, has an excellent finite sample performance and it 
is {\em adaptively robust}. This novel adaptive robustness property is not present in other robust estimators of \cite{guillou1,guillou2,Knight_asimple,MR1856199,Peng2001,Brzezinski2016}, which involve hard to 
select tuning parameters. Also none of these estimators is able to identify outliers in the
extremes, a property inherent to the adaptive trimmed Hill estimator.  An R shiny app implementing the trimmed Hill estimator and the methodology for
selection of $k_0$ is available on \url{https://shrijita-apps.shinyapps.io/adaptive-trimmed-hill/}.

\medskip
{\em The paper is structured as follows.} In Section \ref{sec:trim-hill}, we study the benchmark Pareto setting.  We establish finite-sample optimality 
and robustness properties of the trimmed Hill estimator.  We also introduce a sequential testing method for the automatic selection of $k_0$.  
Section \ref{sec:general-heavy} deals with the asymptotic properties of the trimmed Hill estimator in the general heavy-tailed regime.  The consistency of the
sequential testing method is also studied. In Section \ref{sec:simulate}, the finite-sample performance of the trimmed Hill estimator is studied in the context of
various  heavy tailed models, tail indices, and contamination scenarios. In Sections \ref{sec:adap}, \ref{sec:L-xi} and \ref{sec:non-pareto-out}, we demonstrate 
the need for adaptive robustness and the advantages of our estimator in comparison with established robust estimators in the literature. In Section 
\ref{sec:real}, we demonstrate the application of the adaptive trimmed Hill methodology to the Condroz data set and French insurance claim settlements 
data set.

\section{Optimal and Adaptive Trimming: The Pareto Regime}
\label{sec:trim-hill}

In this section, we shall focus on the fundamental ${\rm Pareto}(\sigma,\xi)$ model and assume that
\begin{equation}\label{e:Pareto-model}
\mathbb{P}(X>x) = (x/\sigma)^{-1/\xi},\ x\ge \sigma,
\end{equation}
for some $\sigma>0$ and a tail index $\xi>0$.

 Motivated by the goal to provide a robust estimate of the tail index $\xi$, we consider trimmed versions of the {\em classical Hill} estimator in  Relation \eqref{e:hill} and thereby study the class of statistics,
$\widehat{\xi}^{\rm trim}_{k_0,k}(n)$ as in Relation \eqref{e:xi-trim-gen}. Proposition \ref{prop:xi-opt} below finds the optimal weights, $c_{k_0,k}(i)$ for which the estimator in
Relation \eqref{e:xi-trim-gen} is not only unbiased for $\xi$, but also has the minimum variance. This yields the {\em trimmed Hill}  estimator of Relation \eqref{e:xi-trimmed}. Its performance for general heavy-tailed models is discussed in Section \ref{sec:general-heavy}.
\vspace{-1mm}

\subsection{The Trimmed Hill estimator}
\label{sec:tr-hill-plt}
The following result gives the form of the trimmed Hill estimator, which is indeed the best linear unbiased estimator (BLUE) among the class of estimator in Relation \eqref{e:xi-trim-gen}
\begin{proposition} 
	\label{prop:xi-opt}
	Suppose $X_1, \cdots, X_n$ are i.i.d.\ ${\rm Pareto}(\sigma,\xi)$ random variables, as in Relation \eqref{e:Pareto-model}. Then,
	among the general class of estimators  given by Relation \eqref{e:xi-trim-gen}, the minimum variance linear unbiased estimator of $\xi$ is given by
	\vspace{-2mm}
	\begin{equation}
	\label{e:xi-opt}
	\widehat{\xi}_{k_0,k} (n)=\frac{k_0}{k-k_0} \log \Bigg(\frac{X_{(n-k_0,n)}}{X_{(n-k,n)}} \Bigg)+\underbrace{\frac{1}{k-k_0} \sum_{i=k_0+1}^{k} \log \Bigg(\frac{X_{(n-i+1,n)}}{X_{(n-k,n)}} \Bigg)}_{ \widehat{\xi}^0_{k_0,k}(n)},  \hspace{5mm} 0\leq k_0<k<n.
\vspace{-3mm}	\end{equation}
\end{proposition}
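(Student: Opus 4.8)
The plan is to linearize the whole problem via the exponential (Rényi) representation of Pareto order statistics, which turns the claim into an elementary constrained least--squares computation. Since the $X_i$ are $\mathrm{Pareto}(\sigma,\xi)$ as in \eqref{e:Pareto-model}, the variables $Z_i:=\log(X_{(n-i+1,n)}/\sigma)$, $i=1,\dots,n$, are the descending order statistics of $n$ i.i.d.\ Exponential random variables with mean $\xi$, and by the Rényi representation the normalized spacings $W_j:=j\,(Z_j-Z_{j+1})$, $j=1,\dots,n$ (with $Z_{n+1}:=0$), are themselves i.i.d.\ with mean $\xi$ and variance $\xi^2$. Because $\log(X_{(n-i+1,n)}/X_{(n-k,n)})=Z_i-Z_{k+1}=\sum_{j=i}^{k} W_j/j$ for $k_0<i\le k$, interchanging the order of summation rewrites the general estimator \eqref{e:xi-trim-gen} as
\[
\widehat{\xi}^{\rm trim}_{k_0,k}(n)=\sum_{j=k_0+1}^{k} d_j\,W_j,
\qquad
d_j:=\frac{1}{j}\sum_{i=k_0+1}^{j} c_{k_0,k}(i),
\]
with $\sigma$ cancelling, as it must.

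First I would observe that the map $(c_{k_0,k}(k_0+1),\dots,c_{k_0,k}(k))\mapsto(d_{k_0+1},\dots,d_k)$ is a bijection, since its matrix is lower triangular with nonzero diagonal entries $1/j$; hence optimizing over admissible weight vectors $c$ is the same as optimizing over the vectors $d$. As the $W_j$, $j=k_0+1,\dots,k$, are i.i.d.\ with mean $\xi$ and variance $\xi^2$, the (finite-sample) unbiasedness requirement $\E[\widehat{\xi}^{\rm trim}_{k_0,k}(n)]=\xi$ is equivalent to $\sum_{j=k_0+1}^{k} d_j=1$, while $\var(\widehat{\xi}^{\rm trim}_{k_0,k}(n))=\xi^2\sum_{j=k_0+1}^{k} d_j^2$. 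The problem therefore reduces to minimizing $\sum_j d_j^2$ subject to the single linear constraint $\sum_j d_j=1$.

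Next I would dispatch this minimization in one line by Cauchy--Schwarz (or a Lagrange multiplier): $\sum_{j=k_0+1}^{k} d_j^2\ge(\sum_j d_j)^2/(k-k_0)=1/(k-k_0)$, with equality if and only if $d_j\equiv 1/(k-k_0)$. Finally I would invert the triangular relation to recover the optimal weights: from $\sum_{i=k_0+1}^{j} c_{k_0,k}(i)=j\,d_j=j/(k-k_0)$ one obtains $c_{k_0,k}(i)=1/(k-k_0)$ for $k_0+2\le i\le k$ and $c_{k_0,k}(k_0+1)=(k_0+1)/(k-k_0)$. Substituting these weights into \eqref{e:xi-trim-gen} and splitting the $i=k_0+1$ term via $(k_0+1)/(k-k_0)=k_0/(k-k_0)+1/(k-k_0)$ yields exactly the displayed form \eqref{e:xi-opt}, whose variance equals $\xi^2/(k-k_0)$.

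All of these steps are routine; the only place calling for a bit of care is the bookkeeping at the boundary index $i=k_0+1$, both when passing from the $c$'s to the $d$'s by interchanging sums and when inverting back, together with the final algebraic collapse to \eqref{e:xi-opt}. It is also worth stating the bijectivity of the $c\leftrightarrow d$ correspondence explicitly, since this is what legitimizes carrying out the optimization in the convenient $d$-parametrization rather than directly in the original weights.
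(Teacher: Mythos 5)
Your proof is correct and follows essentially the same route as the paper: both reduce the problem via the R\'enyi representation to i.i.d.\ exponential spacings, rewrite any estimator in the class \eqref{e:xi-trim-gen} as $\sum_j d_j W_j$ with unbiasedness equivalent to $\sum_j d_j=1$ and variance $\xi^2\sum_j d_j^2$, and then invert the triangular change of weights to land on \eqref{e:xi-opt}. The only difference is cosmetic: you settle the constrained minimization by Cauchy--Schwarz, whereas the paper's auxiliary lemma (Lemma \ref{lem:blue}) justifies the equal weights $1/(k-k_0)$ by appealing to the sample mean being UMVUE for the exponential mean (Lehmann--Scheff\'e); the resulting optimal weights and variance $\xi^2/(k-k_0)$ coincide.
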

\noindent The proof is given in Section \ref{sec:proofs-sec2}.

\begin{remark}
	The second summand,  $\widehat{\xi}^0_{k_0,k}(n)$ in Relation \eqref{e:xi-opt} is nothing but the classic Hill estimator applied to the observations $X_{(n-k_0,n)} \geq \cdots \geq X_{(n-k,n)}$ which denote the top $k$ ordered statistics excluding the top $k_0$ ones. Note that, $\widehat{\xi}^0_{k_0,k}(n)$ which belongs to  the class of estimators in Relation \eqref{e:xi-trim-gen}, is not only suboptimal but also biased for the tail index $\xi$. We shall thus refer to it as the {\em biased Hill}  estimator. The biased Hill estimator has been previously used for robust analysis (see \cite{VandewalleCa}) and inference in truncated Pareto models (see \cite{trHill}, \cite{zou:davis:samorodnitsky:2017}). 
\end{remark}

\begin{remark}[\bf Classic, Biased and Trimmed Hill Plots] The {\em classic Hill plot} is a plot of the classic Hill estimator, $\widehat{\xi}_k(n)$ as function of $k$. Likewise, for a fixed $k_0$, a plot of the trimmed Hill estimator, $\widehat{\xi}_{k_0,k}(n)$ and the biased Hill estimator, $\widehat{\xi}^0_{k_0,k}(n)$ as function of $k$  will be referred to as the {\em trimmed Hill plot} and the {\em biased Hill plot}, respectively. Since  $\widehat{\xi}^0_{k_0,k}(n)\leq \widehat{\xi}_{k_0,k}(n)$, the biased Hill plot always lies below the trimmed Hill plot. Depending upon the nature of outliers in the extremes, the classic Hill plot can either lie above or below the trimmed Hill plot (see Figures \ref{fig:condroz} and \ref{fig:freclaim}).
\end{remark}

In the rest of the section, we discuss the robustness and finite-sample optimality properties of the trimmed Hill estimator. In this direction, inspired by \cite{MR1856199}, we define the notion of strict upper breakdown point.

\begin{definition}\label{def:ubp-est} A statistic $\widehat{\theta}$ is said to have a strict upper breakdown point $\beta$, $0\leq\beta<1$, if $\widehat{\theta}=T(X_{(n-[n\beta],n)},\cdots,X_{(1,n)})$ where $X_{(n,n)}\geq\cdots\geq X_{(1,n)}$ are the order statistics of the sample, i.e., $\widehat{\theta}$ is unaffected by the values of the top $[n\beta]$ order statistics.
\end{definition}
In Proposition \label{prop:xi-opt}, we showed that the trimmed Hill estimator is the BLUE for a large class of estimators with strict upper break down point of  $k_0/n$ (see Relation \eqref{e:xi-trim-gen}). We next prove a 
stronger result on the finite sample near-optimality of the trimmed Hill estimator. As stated in the next proposition, the trimmed Hill estimator is essentially the minimum variance unbiased estimator (MVUE) among the class of all 
tail index estimators with a given strict upper break down point.

\begin{theorem}
	\label{thm:umvue-p}
	Consider the class of statistics given by 
	\begin{equation*}
	{\cal{U}}_{k_0}=\left\{T=T(X_{(n-k_0,n)},\cdots,X_{(1,n)}): \: \mathbb{E}(T)=\xi, \: \mbox{ if } X_1, \cdots, X_n \stackrel{i.i.d.}{\sim} {\rm Pareto}(\sigma,{\xi})\right\}
	\end{equation*}
	which are all unbiased estimators of $\xi$ with strict upper breakdown point $\beta=k_0/n$. Then for $\widehat{\xi}_{k_0,n-1}(n)$ as in Relation \eqref{e:xi-opt}, we have
	\begin{equation}
	\label{e:opt-var}
	\frac{\xi^2}{n-k_0} \leq  \inf_{T\in	{\cal{U}}_{k_0}} {\rm Var}(T) \leq {\rm Var}(\widehat{\xi}_{k_0,n-1}(n))=\frac{\xi^2}{n-k_0-1}.
	\end{equation}
	In particular, $\widehat{\xi}_{k_0,n-1}(n)$ is asymptotically MVUE of $\xi$ among the class of estimators described by ${\cal{U}}_{k_0}$.
\end{theorem}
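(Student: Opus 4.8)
\medskip
\noindent\textbf{Proof proposal.} The plan is to pass to the logarithmic scale, where the problem becomes one about Type~II right-censored exponential samples, and then handle the two inequalities in \eqref{e:opt-var} separately. Writing $V_{(j,n)}:=\log X_{(j,n)}$ and $\mu:=\log\sigma$, one has $V_{(j,n)}=\mu+\xi E_{(j,n)}$, where $E_{(1,n)}\le\cdots\le E_{(n,n)}$ are the order statistics of $n$ i.i.d.\ standard exponentials; this exhibits the data as a location--scale family in which $\mu$ is a nuisance parameter and $\xi$ is the scale of interest. I would work throughout with the R\'enyi representation $E_{(j,n)}-E_{(j-1,n)}=D_j'/(n-j+1)$, where $D_1',\dots,D_n'$ are i.i.d.\ standard exponential, and set $D_j:=\xi D_j'$ (i.i.d.\ exponential with mean $\xi$, variance $\xi^2$).

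For the right-hand (upper) bound I would compute ${\rm Var}(\widehat{\xi}_{k_0,n-1}(n))$ directly. Rewriting \eqref{e:xi-opt} with $k=n-1$ in terms of the $V_{(j,n)}$ and substituting the R\'enyi spacings, the two sums telescope after interchanging the order of summation — the weights in \eqref{e:xi-opt} are calibrated precisely so that, after this interchange, every $D_l$ receives the same coefficient — and one obtains the identity
\begin{equation*}
\widehat{\xi}_{k_0,n-1}(n)=\frac{1}{n-k_0-1}\sum_{l=2}^{n-k_0}D_l .
\end{equation*}
Since $D_2,\dots,D_{n-k_0}$ are i.i.d.\ exponential with mean $\xi$, this shows $\widehat{\xi}_{k_0,n-1}(n)$ is unbiased with variance $\xi^2/(n-k_0-1)$; as $\widehat{\xi}_{k_0,n-1}(n)\in\mathcal{U}_{k_0}$ by inspection, the infimum in \eqref{e:opt-var} is at most this value. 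Conceptually, $V_{(1,n)}$ (equivalently $D_1'$) is the only part of the censored sample that carries information about $\mu$, and the trimmed Hill estimator is exactly the ``total time on test beyond $V_{(1,n)}$'' divided by the number $n-k_0-1$ of retained normalized spacings.

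For the left-hand (lower) bound I would use the elementary observation that any $T\in\mathcal{U}_{k_0}$, being unbiased for $\xi$ for \emph{all} $(\sigma,\xi)$, is in particular unbiased in the sub-model where $\sigma=\sigma_0$ is known; hence $\mathcal{U}_{k_0}\subseteq\mathcal{U}_{k_0}^{(\sigma_0)}:=\{T=T(X_{(n-k_0,n)},\dots,X_{(1,n)}):\mathbb{E}_{(\sigma_0,\xi)}(T)=\xi\ \text{for all }\xi>0\}$, so that $\inf_{T\in\mathcal{U}_{k_0}}{\rm Var}(T)\ge\inf_{T\in\mathcal{U}_{k_0}^{(\sigma_0)}}{\rm Var}_{(\sigma_0,\xi)}(T)$. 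In the sub-model with known $\sigma_0$ the retained order statistics form a Type~II right-censored i.i.d.\ exponential sample with scale $\xi$; the total-time-on-test statistic $S:=\sum_{j=1}^{n-k_0}(V_{(j,n)}-\mu_0)+k_0(V_{(n-k_0,n)}-\mu_0)$, with $\mu_0=\log\sigma_0$, equals $D_1+\cdots+D_{n-k_0}$ by the same telescoping computation, the likelihood factors as $\xi^{-(n-k_0)}e^{-S/\xi}$ times a parameter-free indicator, so $S$ is complete and sufficient for $\xi$, and $S/(n-k_0)$ is its Lehmann--Scheff\'e MVUE with variance $\xi^2/(n-k_0)$. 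Combining the two estimates yields \eqref{e:opt-var}, and since $(n-k_0)/(n-k_0-1)\to1$ the trimmed Hill estimator is asymptotically MVUE. The one genuine obstacle is the nuisance parameter $\sigma$: applying Lehmann--Scheff\'e directly in the full two-parameter model would require establishing completeness of $(V_{(1,n)},S)$, so the sub-model reduction is the cleanest route, and it accounts exactly for the gap between $1/(n-k_0)$ and $1/(n-k_0-1)$, i.e.\ the single degree of freedom spent on estimating $\mu$.
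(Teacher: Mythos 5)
Your proposal is correct and follows essentially the same route as the paper's proof: the lower bound uses the identical reduction to the known-$\sigma$ sub-model (unbiasedness for all $(\sigma,\xi)$ implies unbiasedness in each sub-model), where by R\'enyi's representation the retained order statistics reduce to $n-k_0$ i.i.d.\ ${\rm Exp}(\xi)$ spacings whose sum is complete and sufficient, so Lehmann--Scheff\'e yields $\xi^2/(n-k_0)$; and the upper bound is your telescoping identity, which is exactly the representation of $\widehat{\xi}_{k_0,n-1}(n)$ as a mean of $n-k_0-1$ i.i.d.\ ${\rm Exp}(\xi)$ variables established in Proposition \ref{prop:xi-exp}. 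The only cosmetic difference is that you obtain completeness directly from the censored-sample (total-time-on-test) likelihood factorization, whereas the paper first rewrites every statistic in the sub-model class as a function of the i.i.d.\ spacings and then invokes the UMVUE property of the sample mean.
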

\noindent The proof is given in Section \ref{sec:proofs-sec3}.

\medskip
Though the trimmed Hill estimator has nice finite sample properties, it is of limited use in practice unless the value of trimming parameter $k_0$ is known.  In the 
following section, we will develop a data-driven method for the estimation of $k_0$.

\subsection{Automated Selection of the Trimming Parameter}
\label{sec:aut-trim} 

In this section, we introduce a methodology for the automated data-driven selection of the trimming parameter $k_0$. The trimmed Hill estimator with this estimated value of $k_0$ will be referred to as the {\em adaptive trimmed Hill} estimator. Its performance as a robust estimator of the tail index $\xi$ is discussed elaborately under Section \ref{sec:simulate}. In addition, the $k_0$-estimation methodology also provides a tool for the detection of outliers in the extremes of heavy tailed data. 

\medskip
We begin with a result on the joint distribution of the trimmed Hill statistics, which is a starting point towards the estimation of $k_0$.

\begin{proposition}
	\label{prop:xi-exp}
	The joint distribution of $\widehat{\xi}_{k_0,k}(n)$ can be expressed as follows:
	\begin{equation}
	\label{e:xi-jt-big}
	{\Big\{\widehat{\xi}_{k_0,k}(n),\ k_0=0,\ldots,k-1 \Big\}}
	\stackrel{d}{=} {\Big\{\xi \frac{\Gamma_{k-k_0}}{k-k_0},\ k_0=0,\ldots,k-1 \Big\}},
	\end{equation}
	where $\Gamma_i=E_1+\cdots+E_i$ with $E_1, E_2, \cdots$ i.i.d. standard exponential random variables. Consequently, as $k-k_0\rightarrow \infty$,
	\begin{equation}
	\label{e:xi-normal}\\
	\sqrt{k-k_0}(\widehat{\xi}_{k_0,k}(n)-\xi)\stackrel{d}{\implies}N(0,\xi^2)
	\end{equation}
\end{proposition}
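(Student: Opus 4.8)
The plan is to establish the distributional identity \eqref{e:xi-jt-big} via the Rényi representation of Pareto order statistics, and then deduce \eqref{e:xi-normal} by the classical central limit theorem applied to the Gamma variables. First I would recall that if $X_1,\dots,X_n$ are i.i.d.\ ${\rm Pareto}(\sigma,\xi)$, then $Y_i := \log(X_i/\sigma)$ are i.i.d.\ exponential with mean $\xi$, and the order statistics satisfy $\log(X_{(n-i+1,n)}/\sigma) = Y_{(n-i+1,n)}$, the $i$-th largest of the $Y$'s. By the Rényi representation, the upper exponential order statistics can be written as a triangular array of normalized partial sums of i.i.d.\ standard exponentials; concretely, for $i=1,\dots,k$,
\begin{equation*}
Y_{(n-i+1,n)} - Y_{(n-k,n)} \stackrel{d}{=} \xi\sum_{j=i}^{k}\frac{E_j^\ast}{j},
\end{equation*}
where $E_1^\ast,\dots,E_k^\ast$ are i.i.d.\ standard exponential, jointly over $i=1,\dots,k$. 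This is the key input and it is a standard fact; I would either cite it or give the one-line verification from the memoryless property of the exponential spacings.

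Next I would substitute this representation into the definition \eqref{e:xi-opt} of $\widehat\xi_{k_0,k}(n)$. The first term contributes $\frac{k_0}{k-k_0}\,\xi\sum_{j=k_0+1}^{k} E_j^\ast/j$, and the summands of the second term contribute $\frac{1}{k-k_0}\sum_{i=k_0+1}^{k}\xi\sum_{j=i}^{k}E_j^\ast/j$. Interchanging the order of summation in the double sum, the coefficient of $E_j^\ast/j$ (for $j$ between $k_0+1$ and $k$) becomes $\#\{i : k_0+1\le i\le j\} = j-k_0$, so the second term equals $\frac{\xi}{k-k_0}\sum_{j=k_0+1}^{k}(j-k_0)E_j^\ast/j$. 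Adding the two terms, the coefficient of $E_j^\ast/j$ is $\frac{\xi}{k-k_0}\big(k_0 + (j-k_0)\big) = \frac{\xi j}{k-k_0}$, so the $1/j$ factors cancel and we obtain
\begin{equation*}
\widehat\xi_{k_0,k}(n) \stackrel{d}{=} \frac{\xi}{k-k_0}\sum_{j=k_0+1}^{k} E_j^\ast = \frac{\xi}{k-k_0}\,\big(\Gamma_k - \Gamma_{k_0}\big),
\end{equation*}
where $\Gamma_m = E_1^\ast+\cdots+E_m^\ast$. Since $\Gamma_k-\Gamma_{k_0}\stackrel{d}{=}\Gamma_{k-k_0}$, and — crucially — this holds \emph{jointly} in $k_0=0,\dots,k-1$ because the same underlying standard exponentials $E_j^\ast$ drive all the estimators simultaneously through the single Rényi representation, we obtain exactly the claimed identity \eqref{e:xi-jt-big}.

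Finally, \eqref{e:xi-normal} follows immediately: $\Gamma_{k-k_0}$ is a sum of $k-k_0$ i.i.d.\ standard exponentials with mean $k-k_0$ and variance $k-k_0$, so $\sqrt{k-k_0}\big(\widehat\xi_{k_0,k}(n)-\xi\big)\stackrel{d}{=}\xi\big(\Gamma_{k-k_0}-(k-k_0)\big)/\sqrt{k-k_0}\Rightarrow N(0,\xi^2)$ as $k-k_0\to\infty$ by the Lindeberg–Lévy CLT. I expect the only point requiring care to be the bookkeeping in the change of summation order and the verification that the $1/j$ weights cancel cleanly; the joint (rather than merely marginal) nature of the identity is automatic from the construction but is worth stating explicitly, since it is exactly this joint structure that the sequential testing method in Section \ref{sec:aut-trim} exploits.
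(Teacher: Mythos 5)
Your proof is correct and takes essentially the same route as the paper's: the R\'enyi representation of the top exponential spacings, an interchange of the order of summation that collapses the weights, and the classical CLT for the Gamma sums. The only cosmetic difference is that your partial sums run ``from the top'' (you get $\xi(\Gamma_k-\Gamma_{k_0})/(k-k_0)$), so matching the claimed joint family $\{\xi\Gamma_{k-k_0}/(k-k_0)\}$ requires the one-line relabeling $\tilde E_j:=E^*_{k-j+1}$ (exchangeability of the i.i.d.\ exponentials), whereas the paper's indexing of the R\'enyi representation yields $\frac{1}{k-k_0}\sum_{j=1}^{k-k_0}E^*_j$ directly.
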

\noindent The proof is given in Section \ref{sec:proofs-sec2}. This result motivates a simple visual device for the selection of $k_0$.
  
  \medskip
  \noindent {\bf Diagnostic Plot.}  For a fixed value of $k$, the plot of $\widehat{\xi}_{k_0,k}(n)$ as a function of of $k_0$ will be referred to as a trimmed Hill {\em diagnostic plot}.   Figure \ref{fig:knee1}, shows diagnostic plots 
  for simulated data in the cases of no outliers (left panel) and $k_0 = 5$ outliers (right panel).  The vertical lines correspond to $\widehat{\xi}_{k_0,k}(n)\underline{+}\widehat{\sigma}_{k_0,k}(n)$,
  where $\widehat{\sigma}_{k_0,k}(n)=\widehat{\xi}_{k_0,k}(n)/\sqrt{k-k_0}$ is the plug in estimate of the standard error of $\widehat{\xi}_{k_0,k}(n)$ (see Proposition \ref{prop:xi-exp}).

 \vspace{-3mm}
 \begin{figure}[H]
 	\centering
 	\includegraphics[width=0.4\textwidth]{./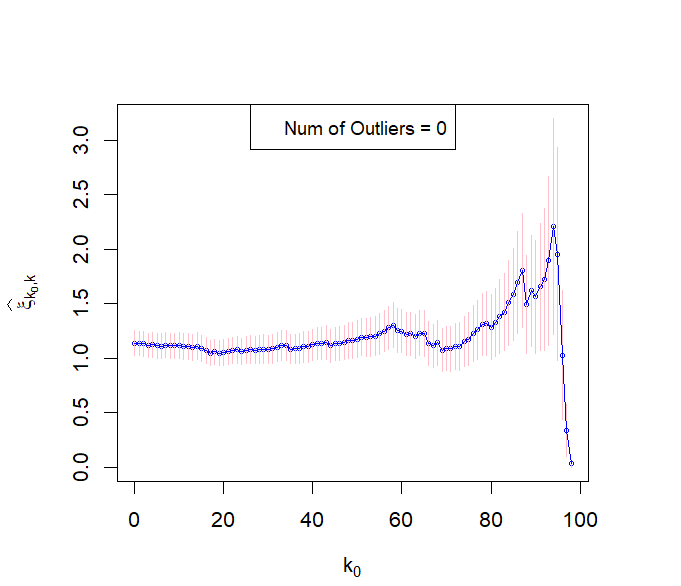}
 	\includegraphics[width=0.4\textwidth]{./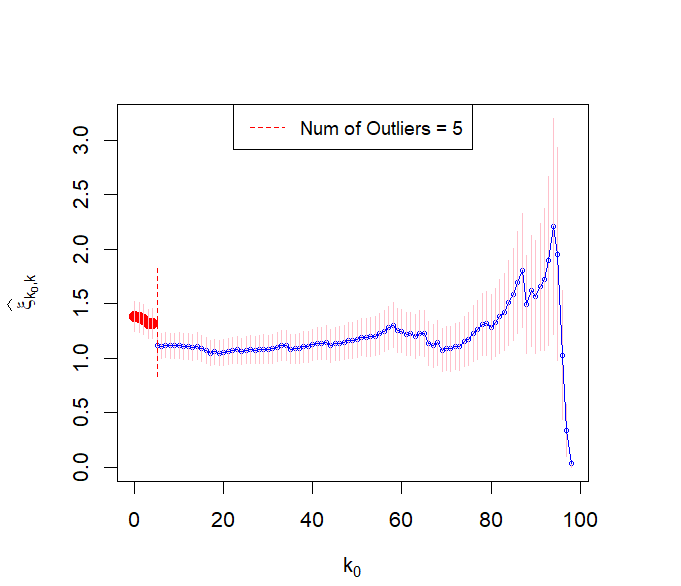}
 	\caption{Diagnostic Plot for Pareto(1,1) with $n=100, k=n-1$. {\em Left}: No outliers. {\em Right:} 5 outliers.}
 	\label{fig:knee1}
 \end{figure}
 
In the absence of outliers, modulo variability, the diagnostic plot should be constant in $k_0$ (see left panel in Figure \ref{fig:knee1}). The right panel in Figure \ref{fig:knee1} corresponds to a case where  extreme outliers have been introduced by raising the top $k_0=5$ order statistics to a power greater than 1. This resulted in a visible kink in the diagnostic plot near $k_0=5$.  Note that, in principle, the presence of outliers could lead to a kink/or change point with an upward or downward trend in the left part of the plot. The diagnostic plot, while useful, requires visual inspection of the data. In practice, an automated procedure is often desirable.

The crux of our methodology for automated selection of $k_0$ lies in the next result. The idea is to automatically detect a change point in the diagnostic plot by examining it sequentially from right to left. Formally, this will be achieved by a sequential testing algorithm involving the ratio statistics introduced next.

\begin{proposition}
	\label{T-def} 
	Suppose all the $X_i$'s are generated from ${\rm Pareto}(\sigma, \xi)$. Then, the statistics
	\begin{equation}
	\label{e:T-i-k}
	T_{k_0,k}(n):=\frac{(k-k_0-1)\widehat{\xi}_{k_0+1,k}(n)}{(k-k_0)\widehat{\xi}_{k_0,k}(n)}, \hspace{5mm} k_0=0,1,\cdots, k-2
	\end{equation} are independent and follow  ${\rm Beta}(k-k_0-1,1)$ distribution for $k_0=0,1,\cdots,k-2$. 
	\end{proposition}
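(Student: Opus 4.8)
The plan is to reduce the statement to the exponential representation of Proposition~\ref{prop:xi-exp} and then to identify the distribution of consecutive ratios of partial sums of i.i.d.\ exponentials. By Proposition~\ref{prop:xi-exp}, the collection $\{(k-k_0)\widehat{\xi}_{k_0,k}(n)\}_{k_0=0}^{k-1}$ has the \emph{joint} distribution of $\{\xi\,\Gamma_{k-k_0}\}_{k_0=0}^{k-1}$, where $\Gamma_i=E_1+\cdots+E_i$ and the $E_i$ are i.i.d.\ standard exponential. Substituting this into the definition \eqref{e:T-i-k}, the scale factor $\xi$ cancels and we obtain, jointly over $k_0=0,1,\ldots,k-2$,
\begin{equation*}
\bigl\{T_{k_0,k}(n)\bigr\}_{k_0=0}^{k-2}\stackrel{d}{=}\Bigl\{\frac{\Gamma_{k-k_0-1}}{\Gamma_{k-k_0}}\Bigr\}_{k_0=0}^{k-2}=\Bigl\{\frac{\Gamma_{j-1}}{\Gamma_{j}}\Bigr\}_{j=2}^{k}.
\end{equation*}
In particular the statistics $T_{k_0,k}(n)$ are pivotal (free of $\sigma$ and $\xi$), and it remains only to show that the ratios $R_j:=\Gamma_{j-1}/\Gamma_j$, $j=2,\ldots,k$, are mutually independent with $R_j\sim{\rm Beta}(j-1,1)$ (so that $R_2$ is uniform on $(0,1)$).

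To do this I would perform the change of variables $(\Gamma_1,\ldots,\Gamma_k)\mapsto(R_2,\ldots,R_k,\Gamma_k)$. Since $\Gamma_j-\Gamma_{j-1}=E_j$ are independent exponentials, the joint density of $(\Gamma_1,\ldots,\Gamma_k)$ on $\{0<\gamma_1<\cdots<\gamma_k\}$ is $e^{-\gamma_k}$; inverting the map through $\gamma_j=\gamma_k\prod_{i=j+1}^{k}R_i$ $(1\le j<k)$ yields a triangular Jacobian whose determinant, in absolute value, equals $\prod_{j=2}^{k}\gamma_j=\gamma_k^{k-1}\prod_{i=3}^{k}R_i^{i-2}$. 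Hence the joint density of $(R_2,\ldots,R_k,\Gamma_k)$ on $(0,1)^{k-1}\times(0,\infty)$ equals $\bigl(\gamma_k^{k-1}e^{-\gamma_k}\bigr)\cdot\prod_{i=3}^{k}R_i^{i-2}$, which factors into a product of the ${\rm Gamma}(k,1)$ density for $\Gamma_k$ and densities proportional to $r^{j-2}$ on $(0,1)$ for each $R_j$. This exhibits $R_2,\ldots,R_k$ and $\Gamma_k$ as jointly independent with $R_j\sim{\rm Beta}(j-1,1)$. Reading this back through the displayed identity gives that $\{T_{k_0,k}(n)\}_{k_0=0}^{k-2}$ is an independent family with $T_{k_0,k}(n)\sim{\rm Beta}(k-k_0-1,1)$, as claimed. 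As an alternative to the Jacobian computation, one can invoke the classical R\'enyi/uniform-spacings representation, according to which $(\Gamma_1/\Gamma_k,\ldots,\Gamma_{k-1}/\Gamma_k)$ is distributed as the order statistics of $k-1$ i.i.d.\ ${\rm Uniform}(0,1)$ variables, together with the standard fact that the consecutive ratios $U_{(j-1)}/U_{(j)}$ of uniform order statistics are independent ${\rm Beta}(j-1,1)$ variables.

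The only genuinely substantive point is the mutual independence of the ratios $R_j$; everything else is a direct application of Proposition~\ref{prop:xi-exp} together with algebra. Accordingly, I expect the bookkeeping in the change of variables---verifying the triangular structure of the Jacobian and tracking the exponents of the $R_i$ correctly---to be the main, though entirely routine, obstacle.
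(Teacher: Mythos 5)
Your proposal is correct. The first step is identical to the paper's: both invoke Proposition \ref{prop:xi-exp} to get, jointly, $\{T_{k_0,k}(n)\}_{k_0=0}^{k-2}\stackrel{d}{=}\{\Gamma_{k-k_0-1}/\Gamma_{k-k_0}\}_{k_0=0}^{k-2}$, after which the marginal ${\rm Beta}(k-k_0-1,1)$ law is immediate. Where you diverge is on the substantive point, the mutual independence of the consecutive ratios $\Gamma_{j-1}/\Gamma_j$: the paper argues structurally, using Lemma \ref{lem:u-ind} (independence of $\Gamma_m$ from the normalized vector $\{\Gamma_i/\Gamma_m\}_{i\le m}$) together with the independence of $\Gamma_m$ from the later increments $E_{m+1},\dots,E_k$, to conclude that the block of ``early'' ratios is independent of the block of ``later'' ratios for every split point $m$, which yields mutual independence; you instead compute the joint density of $(R_2,\dots,R_k,\Gamma_k)$ via the change of variables $\gamma_j=\gamma_k\prod_{i=j+1}^kR_i$, and your bookkeeping checks out (Jacobian $\prod_{j=2}^k\gamma_j=\gamma_k^{k-1}\prod_{i=3}^kR_i^{i-2}$, density factoring into a ${\rm Gamma}(k,1)$ factor and densities $\propto r^{j-2}$, which integrates to one). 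The Jacobian route is self-contained and delivers slightly more, namely the explicit joint law and the additional independence of the ratios from $\Gamma_k$, at the cost of a computation; the paper's route avoids all density calculations by leaning on the quoted lemma, and your stated alternative (R\'enyi/uniform order-statistics representation plus the standard fact about consecutive ratios of uniform order statistics) is essentially that same argument in disguise.
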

\begin{proof}
	In view of Relations \eqref{e:xi-jt-big} and \eqref{e:T-i-k}, we have
	\begin{equation}
	\label{e:T-joint}
	\Big(T_{0,k}(n),\cdots,T_{k-2,k}(n)\Big)\stackrel{d}{=}\Big(\frac{\Gamma_{k-1}}{\Gamma_k},\cdots,\frac{\Gamma_1}{\Gamma_2}\Big),
	\end{equation}
	which implies
	$$T_{k_0,k}(n)\stackrel{d}{=}\frac{\Gamma_{k-k_0-1}}{\Gamma_{k-k_0}}\sim{\rm Beta}(k-k_0-1,1),\hspace{5mm}k_0=0,\cdots,k-2.$$ 
	To show the independence of the $T_{k_0,k}(n)$'s, note that, by Relation \eqref{e:gam-ind} in Lemma \ref{lem:u-ind} (below), $\Gamma_m$ and
	  $\{\Gamma_i/\Gamma_{m}, i=1,\cdots,m\}$  are independent for all $1\leq m \leq k-2$. This in turn implies that $$\Big(\frac{\Gamma_1}{\Gamma_{2}},\frac{\Gamma_2}{\Gamma_{3}}, \cdots,\frac{\Gamma_{m-1}}{\Gamma_m}\Big)\:\:\:\textmd{ and }\:\: \Gamma_m\textmd{ are independent.}$$
	Since $\Gamma_i$, $i=1, \cdots, m$ and $(E_{m+1}, \cdots, E_k)$ are independent, for all $m=1, \cdots, k-2$, we have
	\begin{equation}
	\label{e:T-i-k-joint}
	\Big(\frac{\Gamma_1}{\Gamma_2},\cdots,\frac{\Gamma_{m-1}}{\Gamma_m}\Big)
	\:\:\textmd{ and } \:\:(\Gamma_m, E_{m+1}, \cdots, E_k)\textmd{ are independent }.
	\end{equation}
Since $\Big(\Gamma_{m}/\Gamma_{m+1},\cdots,\Gamma_{k-1}/\Gamma_k\Big)$ is a function of $(\Gamma_m, E_{m+1}, \cdots, E_k)$ for all $1\leq m \leq k-2$, we have
	\begin{equation}
	\label{e:T-i-k-joint-2}
	\Big(\frac{\Gamma_1}{\Gamma_2},\cdots,\frac{\Gamma_{m-1}}{\Gamma_m}\Big)
	\:\:\textmd{ and } \:\:	\Big(\frac{\Gamma_m}{\Gamma_{m+1}},\cdots,\frac{\Gamma_{k-1}}{\Gamma_k}\Big)\textmd{ is independent for all }m\geq 1.
	\end{equation}
	In view of Relations \eqref{e:T-joint} and \eqref{e:T-i-k-joint-2}, the proof of independence of the $T_{k_0,k}(n)$'s follows.
\end{proof}

\begin{remark}
	\label{rem:T-dist}
Note that,  $T_{k_0,k}(n)$ depends only on $X_{(n-k_0,n)}, \cdots, X_{(n-k,n)}$. Therefore, the joint distribution of $T_{k_0,k}(n)$'s remains the same as long as $$(X_{(n-k_0,n)}, \cdots, X_{(n-k,n)})\stackrel{d}{=}(Y_{(n-k_0,n)}, \cdots, Y_{(n-k,n)})$$ where $Y_{(n,n)}>\cdots>Y_{(1,n)}$ are the order statistics of $n$ i.i.d. observations from ${\rm Pareto}(\sigma, \xi)$. In other words, Proposition \ref{T-def} holds even in the presence of outliers provided that they are confined only to the 
top-$k_0$ order statistics. This motivates the sequential testing methodology discussed next.
\end{remark}

\medskip
\noindent
{\bf Weighted Sequential Testing.} By Proposition \ref{T-def}, in the Pareto regime, the statistics
\begin{equation}
\label{e:U-i-k}
U_{k_0,k}(n):=2|(T_{k_0,k}(n))^{k-k_0-1}-0.5|, \hspace{5mm} k_0=0,1,\cdots, k-2.
\end{equation}
are i.i.d. $U(0,1)$. This follows from the simple observation that $T_{k_0,k}^{k-k_0-1}(n) \sim U(0,1)$. For simplicity, both in terms of notation and computation, we use the transformation in Relation \eqref{e:U-i-k} to switch from beta to uniformly distributed random variables.

Assuming that outliers affect only the top-$k_0$ order statistics, one can identify $k_0$ as the largest value $j$ for which $U_{j,k}(n)$ fails a test for uniformity. Specifically, we consider a sequential testing procedure, where starting with $j=k-2$, we test the null hypothesis $\mathcal{H}_0(j):U_{j,k}(n) \sim U(0,1)$ at level $\alpha_j$. If we fail to reject $\mathcal{H}_0(j)$, we set $j=j-1$ and repeat the process until we either encounter a rejection or $j=0$. The resulting value of $j$ is our estimate $\widehat{k}_{0}$. The methodology is formally described in the following algorithm.

\begin{algorithm}[H]
	\caption{Weighted Sequential Testing}
	\label{algo:ewst}
	\small
	\begin{algorithmic}[1]
	    \STATE Consider a set of $\alpha_j \in (0,1), j=0, 1, \cdots, k-2$.
	    \STATE Set $j=k-2$.
		\STATE Compute $U_{j,k}(n)$ as in Relation \eqref{e:U-i-k}.
		\STATE If $U_{j,k}(n)< 1-\alpha_j$, set $j=j-1$.
		\STATE If $j=0$ goto step 6 else goto step 3.
		\STATE Return $\widehat{k}_0=j$.
	\end{algorithmic}
\end{algorithm}

Since $\alpha_j$ varies as a function of $j$, we  refer to Algorithm 1 as the {\em weighted sequential testing} algorithm. The family wise error rate of the algorithm is well calibrated at level $q \in (0,1)$, provided
 \begin{equation}
\label{e:alpha-j-gen}
\prod_{j=0}^{k-2}(1-\alpha_j)=1-q.
\end{equation}
\begin{proposition} 
	\label{prop:type-I-seq}
	For i.i.d. observations from Pareto($\sigma,\xi$), let $\widehat{k}_0$ be the value from Algorithm \ref{algo:ewst} with $\alpha_j$ as in Relation \eqref{e:alpha-j-gen}. Then, under the null hypothesis $\mathcal{H}_0:k_0=0$, we have $\mathbb{P}_{\mathcal{H}_0}(\widehat{k}_0>0)=q$.
\end{proposition}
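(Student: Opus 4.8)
The plan is to reduce the statement to a simple computation about the product $\prod_{j=0}^{k-2}(1-\alpha_j)$ using the independence and uniformity of the $U_{j,k}(n)$ established in Proposition \ref{T-def} and Relation \eqref{e:U-i-k}. Under $\mathcal{H}_0:k_0=0$, the data are genuinely i.i.d.\ ${\rm Pareto}(\sigma,\xi)$, so by Proposition \ref{T-def} the statistics $T_{j,k}(n)$ are independent with $T_{j,k}(n)\sim{\rm Beta}(k-j-1,1)$, hence $T_{j,k}(n)^{k-j-1}\sim U(0,1)$ and therefore $U_{j,k}(n)=2|T_{j,k}(n)^{k-j-1}-0.5|$ are i.i.d.\ $U(0,1)$ for $j=0,1,\dots,k-2$.

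First I would unwind Algorithm \ref{algo:ewst}: the returned value $\widehat k_0$ equals $0$ precisely when, for every $j=k-2,k-3,\dots,1$, the algorithm does not stop, i.e.\ $U_{j,k}(n)<1-\alpha_j$ for all these $j$, and then at $j=0$ the loop exits at step~5. In other words,
\[
\{\widehat k_0=0\}=\bigcap_{j=1}^{k-2}\{U_{j,k}(n)<1-\alpha_j\}.
\]
(One must be slightly careful about the index $j=0$: the algorithm does \emph{not} test $U_{0,k}(n)$ before returning, since once $j$ is decremented to $0$ it goes to step~6 via step~5; so the event involves only $j=1,\dots,k-2$.) Taking complements, $\{\widehat k_0>0\}=\bigcup_{j=1}^{k-2}\{U_{j,k}(n)\ge 1-\alpha_j\}$, which is the event that the sequential procedure stops at some $j\ge 1$.

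Next, using independence of the $U_{j,k}(n)$ and $\mathbb{P}(U_{j,k}(n)<1-\alpha_j)=1-\alpha_j$ for $U(0,1)$ variables, I would compute
\[
\mathbb{P}_{\mathcal{H}_0}(\widehat k_0=0)=\prod_{j=1}^{k-2}\mathbb{P}_{\mathcal{H}_0}\big(U_{j,k}(n)<1-\alpha_j\big)=\prod_{j=1}^{k-2}(1-\alpha_j).
\]
By the calibration condition \eqref{e:alpha-j-gen}, $\prod_{j=0}^{k-2}(1-\alpha_j)=1-q$; if the convention is that $\alpha_0=0$ (consistent with the fact that $j=0$ is never tested), then $\prod_{j=1}^{k-2}(1-\alpha_j)=1-q$ as well, and hence $\mathbb{P}_{\mathcal{H}_0}(\widehat k_0>0)=1-(1-q)=q$, as claimed.

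The only genuine subtlety — and the step I would flag as the main thing to get right — is the bookkeeping of the loop indices: confirming that the event $\{\widehat k_0=0\}$ is exactly the intersection over $j=1,\dots,k-2$ (not $j=0,\dots,k-2$), and reconciling this with the product in \eqref{e:alpha-j-gen} which ranges over $j=0,\dots,k-2$. This forces the implicit convention $\alpha_0=0$ (i.e.\ the test at $j=0$ is vacuous), which should be stated explicitly. Everything else — the distributional facts about $U_{j,k}(n)$, their independence, and the final arithmetic — is immediate from Proposition \ref{T-def} and Relation \eqref{e:U-i-k}.
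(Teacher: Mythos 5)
Your core argument is the same as the paper's: under $\mathcal{H}_0$ the statistics $U_{j,k}(n)$, $j=0,\dots,k-2$, are i.i.d.\ $U(0,1)$ by Proposition \ref{T-def} and Relation \eqref{e:U-i-k}, the event $\{\widehat{k}_0=0\}$ factors into independent events $\{U_{j,k}(n)<1-\alpha_j\}$, and the calibration \eqref{e:alpha-j-gen} finishes the computation. The genuine problem is your treatment of the index $j=0$. The paper's proof identifies $\{\widehat{k}_0=0\}$ with $\bigcap_{j=0}^{k-2}\{U_{j,k}(n)<1-\alpha_j\}$, i.e.\ the test at $j=0$ is included, so the probability is $\prod_{j=0}^{k-2}(1-\alpha_j)=1-q$ exactly, with no extra convention. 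You instead take the intersection over $j=1,\dots,k-2$ only and then invoke the convention $\alpha_0=0$ to recover $1-q$; but that convention is not available in the paper: step 1 of Algorithm \ref{algo:ewst} requires $\alpha_j\in(0,1)$ for every $j=0,\dots,k-2$, and the concrete choice \eqref{e:alpha-j} has $\alpha_0=1-(1-q)^{ca^{k-1}}>0$. With the paper's actual levels, your identification of the event would give $\mathbb{P}_{\mathcal{H}_0}(\widehat{k}_0=0)=\prod_{j=1}^{k-2}(1-\alpha_j)=(1-q)^{1-ca^{k-1}}>1-q$, hence $\mathbb{P}_{\mathcal{H}_0}(\widehat{k}_0>0)<q$, so the proposition as stated would be false under your reading. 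A proof of the exact equality therefore cannot leave the $j=0$ test out of the event.

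You are right that the pseudocode is ambiguous, but the literal reading is not a reliable guide: taken literally, the algorithm also fails to terminate whenever a rejection occurs at some $j>0$ (step 4 leaves $j$ unchanged and step 5 sends control back to step 3). The intended procedure, made explicit by the paper's own proof, is that $\widehat{k}_0=0$ is returned only if all $k-1$ hypotheses $\mathcal{H}_0(j)$, $j=k-2,\dots,1,0$, survive, so that a rejection at any $j$, including $j=0$, yields $\widehat{k}_0\geq 1$ (consistent with a rejection at level $j$ signalling contamination of the top $j+1$ order statistics, cf.\ Remark \ref{rem:T-dist}). Once $\{\widehat{k}_0=0\}$ is identified in this way, your computation coincides verbatim with the paper's proof.
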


\begin{proof} We shall instead show, $\mathbb{P}_{\mathcal{H}_0}(\widehat{k}_0=0)=1-q$. Since the $U_{j,k}(n)$'s are independent $U(0,1)$,
	\begin{eqnarray} 
	\label{e:type-I-I}
	\mathbb{P}_{\mathcal{H}_0}(\widehat{k}_0=0)
	&=&\mathbb{P}_{\mathcal{H}_0}\Big(\bigcap_{j=0}^{k-2}(U_{j,k}(n)<1-\alpha_j)\Big)=\prod_{i=0}^{k-2}(1-\alpha_j)=1-q,
	\end{eqnarray}
	which completes the proof.
\end{proof}

\begin{remark}[\bf Choice of $\alpha_j$] For the purposes of this paper, the levels $\alpha_j$ in the above algorithm are chosen as follows:
	\begin{equation}
	\label{e:alpha-j}
	\alpha_j=1-(1-q)^{ca^{k-j-1}}, \hspace{5mm} j=0, \cdots, k-2
	\end{equation}
	with $a>1$ and  $c=1/\sum_{j=0}^{k-2}a^{k-j-1}$. This choice of $\alpha_j$ satisfies Relation \eqref{e:alpha-j-gen}, which in view of Proposition \ref{prop:type-I-seq}, ensures that the algorithm is well calibrated. In addition, this choice puts less weight on large values of $j$  and thereby allows for a larger type I error or fewer rejections for the hypothesis $\mathcal{H}_0(j):U_{j,k}(n) \sim U(0,1)$. This implies that large values of $j$ are less likely to be chosen over smaller ones. This guards against encountering spurious values of $\widehat{k}_0$ close to $k$, which can lead to highly variable estimates of $\widehat{\xi}_{k_0,k}(n)$. Our extensive analysis with a variety of sequential tests indicate that the choice of levels as in Relation \eqref{e:alpha-j} with $a=1.2$ works well in practice.
\end{remark}

\begin{remark} Proposition \ref{prop:type-I-seq} shows that in the Pareto case the weighted sequential testing algorithm is well calibrated and attains the exact level type I error. In the general heavy tailed regime, 
Theorem \ref{prop:U-conv} (below) establishes the asymptotic consistency of the algorithm. In Section \ref{sec:simulate}, we show that the algorithm can identify the true $k_0$  in the ideal Pareto regime as well as the challenging 
cases of Burr and T distributions (see Section \ref{sec:non-pareto-out}).
\end{remark}

\section{The General Heavy Tailed Regime}
\label{sec:general-heavy}

In this section, we study the asymptotic properties of the trimmed Hill statistics for a general class of heavy-tailed distributions $F$ as in Relation \eqref{e:heavy-tail}. 
Consider the  tail quantile function corresponding to $F$, defined as follows:
\begin{equation}\label{e:tail-qnt}Q(t)=\inf\{x:F(x)\geq 1-1/t\}=F^{-1}(1-1/t), \:\: t> 1.\end{equation}
Following \cite{bier}, for $F$ as in Relation \eqref{e:heavy-tail}, one can equivalently assume that
\begin{equation}
\label{e:L-def}
Q(t)=t^{\xi}L(t)
\end{equation}

\begin{remark}
	The relation between the slowly varying functions $\ell$ and $L$ in Relations \eqref{e:heavy-tail} and \eqref{e:L-def} is well known (see e.g., \cite{bier}, \cite{resnick:2007} and \cite{boucheron:thomas:2015}). 
	Specifically, one can show that 
	\begin{equation}
	\label{e:L-l}
	L(t^{1/\xi})\sim\ell^{\xi}(tL(t^{1/\xi})),\:\: {\rm as}\:\:\: t \rightarrow \infty.
	\end{equation}
	Thus, $\tilde{\ell}(x)=\ell^\xi(x)$ and $\tilde{L}(x)=L(x^{1/\xi})$ satisfy $\tilde{L}(x)\sim\tilde{\ell}(x\tilde{L}(x))$. This in view Theorem 1.5.13 of \cite{bingham1989regular} implies that  
	$1/\tilde{\ell}$ is the {\em de Bruijn conjugate}  of $\tilde{L}$ and hence unique up to asymptotic equivalence. 
\end{remark}

We start with a conceptually important derivation used in the rest of the section.  Using the tail-quantile function, one can express the trimmed Hill statistic for under 
the general heavy-tailed model \eqref{e:heavy-tail} as the sum of a trimmed Hill statistic based on ideal Pareto data plus a remainder term.  More precisely,  in view of \eqref{e:tail-qnt} and \eqref{e:L-def}, 
any i.i.d.\ sample $X_i,\ i=1,\dots,n$ from $F$ can be represented as:
$$
 X_i=Q(Y_i) \equiv Y_i^\xi L(Y_i),\ i=1,\dots,n,
$$
where the $Y_i$'s are i.i.d ${\rm Pareto}(1,1)$. Therefore, using $X_{(n-i,n)}=Q(Y_{(n-i,n)})$ in Relation  \eqref{e:xi-opt}, we obtain
\small
\begin{eqnarray}
\label{e:tail-3}
\hspace{-1mm}\tail \xi {k_0} k&=&\underbrace{\frac{k_0}{k-k_0}\log \frac{Y_{(n-k_0,n)}^\xi}{Y_{(n-k,n)}^\xi}+\frac{1}{k-k_0} \sum_{i=k_0}^{k-1}\log \frac{Y_{(n-i,n)}^\xi}{Y_{(n-k,n)}^\xi}}_{\large \widehat{\xi}^*_{k_0,k}(n)}+\underbrace{\frac{k_0}{k-k_0}\log \frac{L(Y_{(n-k_0,n)})}{L(Y_{(n-k,n)})} +\frac{1}{k-k_0}\sum_{i=k_0}^{k-1}\log \frac{L(Y_{(n-i,n)})}{L(Y_{(n-k,n)})}}_{R_{k_0,k}(n)}\nonumber\\
&&
\end{eqnarray}
\normalsize
where $Y_{(i,n)}$'s are the order statistics for the $Y_i$'s. Since  $Y_i^{\xi}$'s follow ${\rm Pareto}(1,\xi)$, the  statistic $\widehat{\xi}^{*}_{{k_0},k}(n)$ in \eqref{e:tail-3} is simply the trimmed Hill estimator for ideal Pareto data and $R_{k_0,k}(n)$ is a remainder term that encodes the effect of the slowly varying function $L$.

The nature of the function $L$ determines the rate at which the remainder term $R_{k_0,k}(n)$ converges to 0 in probability. We establish minimax rate optimality of the trimmed Hill estimator  under the Hall class of assumptions on the function $L$ (see  Section \ref{sec:minmax-opt}). To establish the asymptotic normality of the trimmed Hill estimator, we use second order regular variation conditions on the function $L$ (see Section \ref{sec:asy-dist}). Under the same set of conditions, the asymptotic consistency of the weighted sequential testing algorithm  is also established in Section \ref{sec:asy-seq}.

\subsection{Minimax Rate optimality of the Trimmed Hill Estimator}
\label{sec:minmax-opt}
Here, we  study the rate-optimality of the trimmed Hill estimator for the class of distributions in ${\cal D}: = {\cal D}_\xi(B,\rho)$, where Relation  \eqref{e:L-def} holds with tail index $\xi>0$ and $L$ of the form:
\begin{equation}
\label{e:D-new-def}
L(x) = 1+r(x),\ \quad\mbox{ with  }\quad |r(x)|\le B x^{-\rho},\ (x>0)
\end{equation}
for  constants $B>0$ and $\rho>0$ (see also Relation (2.7) in \cite{boucheron:thomas:2015}). This is known as the {\em Hall class} of distributions.

In \cite{HallWelsh}, Hall and Welsh showed that no estimator can be uniformly consistent over the  class of distributions in ${\cal D}$ at a rate faster than or equal to
$n^{\rho/(2\rho+1)}$.  Theorem 1 of \cite{HallWelsh} adapted to our setting and notation is as follows:

\begin{theorem}[optimal rate]
	\label{thm:opt-rate}
	Let $\widehat \xi_n$ be any estimator of $\xi$ based on an independent sample from a distribution $F\in {\cal D}_\xi(B,\rho)$. 
	If we have
	\begin{equation}
	\label{e:opt-rate}
	\liminf_{n \rightarrow \infty} \inf_{F \in {\cal{D}_\xi(B,\rho)}} \mathbb{P}_F(|\widehat \xi_n-\xi |\leq a(n))=1
	\end{equation}
	then $\liminf_{n \rightarrow \infty} n^{\rho/(2\rho+1)}a(n)=\infty$. Here by $\mathbb{P}_F$, we understand that $\widehat{\xi}_n$ was based on independent realizations from $F$.
\end{theorem}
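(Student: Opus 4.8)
The statement is the specialization of Theorem~1 of \cite{HallWelsh} to the family $\{{\cal D}_\xi(B,\rho):\xi>0\}$ (with $\xi$ ranging, so that ``estimating $\xi$'' is meaningful and the estimator must be uniformly consistent over nearby tail indices). The quickest route is therefore to deduce it from \cite{HallWelsh} after reconciling the two descriptions of the second--order condition: inverting $Q(t)=t^\xi L(t)$ with $L=1+r$, $|r(t)|\le Bt^{-\rho}$, one checks $\overline F(x)=1-F(x)=x^{-1/\xi}\big(1+\widetilde r(x)\big)$ with $|\widetilde r(x)|\le B'x^{-\rho/\xi}$ for large $x$, so in the Hall--Welsh parametrization the tail exponent is $\alpha=1/\xi$ and the second--order exponent in $x$ is $\beta=\rho/\xi$; their optimal rate $n^{-\beta/(2\beta+\alpha)}$ then equals $n^{-\rho/(2\rho+1)}$. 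I would write out this translation and quote \cite{HallWelsh}. For completeness I would also re-derive the bound directly by Le~Cam's two--point method, as follows.

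Fix a reference tail index $\xi_0$ and put $F_0={\rm Pareto}(1,\xi_0)\in{\cal D}_{\xi_0}(B,\rho)$. For a threshold $t_n\to\infty$ and a gap $\delta_n\to0$ to be chosen, I would construct a distribution $F_1^{(n)}$ with tail index $\xi_0+\delta_n$ that is statistically close to $F_0$ on a sample of size $n$: concretely, perturb the slowly varying part only above the quantile level $t_n$, keeping the perturbation inside the Hall tube $|r(t)|\le Bt^{-\rho}$ --- which is possible precisely because that tube is wide at moderate scales and constraining only far out --- and arrange the crossover so that $\overline{F_0}$ and $\overline{F_1^{(n)}}$ coincide at $Q_0(t_n)$. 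The exact perturbation is the one in \cite{HallWelsh}; the point is that only the order statistics exceeding the $(1-1/t_n)$--quantile feel the shift, and there are on average $m_n\asymp n/t_n$ of these.

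I would then bound the Kullback--Leibler divergence between the product laws. The two models differ only through observations exceeding the $(1-1/t_n)$--quantile, and conditionally on that event they are Pareto--type laws whose shape parameters differ by $O(\delta_n)$, each contributing $O(\delta_n^2)$; hence ${\rm KL}\big(\mathbb{P}_{F_0}^{\otimes n}\,\|\,\mathbb{P}_{F_1^{(n)}}^{\otimes n}\big)\asymp (n/t_n)\,\delta_n^2$. Imposing the membership constraint $\delta_n\asymp t_n^{-\rho}$ and the indistinguishability constraint $(n/t_n)\,\delta_n^2=O(1)$ simultaneously forces $t_n\asymp n^{1/(2\rho+1)}$ and $\delta_n\asymp n^{-\rho/(2\rho+1)}$. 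With the total variation between the two product laws bounded away from $1$, Le~Cam's two--point lemma gives $c>0$ with $\inf_{\widehat\xi}\max_{i\in\{0,1\}}\mathbb{P}_{F_i^{(n)}}\big(|\widehat\xi-\xi_i|\ge\delta_n/2\big)\ge c$ for all large $n$, where $\xi_1:=\xi_0+\delta_n$. Now suppose, for contradiction, that \eqref{e:opt-rate} holds for some estimator and some $a(n)$ with $\liminf_n n^{\rho/(2\rho+1)}a(n)<\infty$; choosing $\delta_n$ to be a sufficiently large constant multiple of $n^{-\rho/(2\rho+1)}$, along a subsequence $a(n)\le\delta_n/3$, so \eqref{e:opt-rate} would force $\mathbb{P}_{F_i^{(n)}}(|\widehat\xi_n-\xi_i|\ge\delta_n/2)\to0$ for $i=0,1$, contradicting the two--point bound. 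Hence $\liminf_n n^{\rho/(2\rho+1)}a(n)=\infty$.

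The main obstacle is the construction rather than the reduction: one must exhibit, for every $n$, a bona fide heavy--tailed distribution lying inside the tube $|r(t)|\le Bt^{-\rho}$ for \emph{all} $t$, whose tail index is separated from $\xi_0$ by the largest possible $\delta_n$, and which is nearly indistinguishable from ${\rm Pareto}(1,\xi_0)$ on $n$ draws. Pinning the exponent exactly --- allowing a small but \emph{nonzero} expected number $m_n\asymp n/t_n$ of informative observations rather than forcing $m_n\to 0$ --- is the delicate point that produces $\rho/(2\rho+1)$ instead of a cruder exponent, and it is exactly what \cite{HallWelsh} carry out; with their perturbation in hand, the divergence estimates above are routine.
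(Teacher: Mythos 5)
Your proposal is correct and takes essentially the same route as the paper, which offers no independent proof: Theorem \ref{thm:opt-rate} is presented there simply as ``Theorem 1 of \cite{HallWelsh} adapted to our setting and notation,'' so your translation of the Hall class via $\alpha=1/\xi$, $\beta=\rho/\xi$ (together with your correct observation that $\xi$ must be allowed to range for the statement to be non-trivial) is exactly the needed reduction. Your supplementary Le Cam two-point sketch goes beyond what the paper does and is a sound outline, but it is not required.
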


In Theorem 3 of \cite{HallWelsh}, it is shown that for the case of no outliers,  the classic Hill estimator, $\widehat{\xi}_k$ with $k=k(n)\sim n^{2\rho/(1+2\rho)}$ is a uniformly consistent estimator of $\xi$ at a rate greater than or equal to any other uniformly consistent estimator. In other words, the classic Hill estimator is minimax rate optimal in view of  Theorem \ref{thm:opt-rate} wherein   $\widehat{\xi}_n=\widehat{\xi}_{k(n)}$ satisfies \eqref{e:opt-rate}  for every $a(n)$ with $a(n)n^{\rho/(2\rho+1)}\rightarrow \infty$. 

Note that, Theorem \ref{thm:opt-rate} also applies to the trimmed Hill estimator. We next show that in the presence of outliers, the trimmed Hill estimator  with $k=k(n)\sim n^{2\rho/(1+2\rho)}$ is minimax rate optimal   with the same rate as that of the classic Hill. In addition, the minimax rate optimality holds uniformly over all $k_0=[0,h(k)]$ for $h(k)=o(n^{2\rho/(1+2\rho)})$.

\begin{theorem}[uniform consistency] \label{t:uniform-consistency} Suppose that $k=k(n) \propto n^{2\rho/(2\rho+1)}$ and $h(k) = o(k)$, as $n\to\infty$.
	
	Then, for every sequence $a(n)\downarrow 0$, such that  $a(n) \sqrt{k(n)} \to \infty$, we have
	\begin{equation}\label{e:t:uniform-consistency}
	\liminf_{n\to\infty} \inf_{F\in {\cal D}_\xi(B,\rho)} \mathbb{P}_F\left( \max_{0\le k_0< h(k)} |\widehat \xi_{k_0,k}(n) - \xi| \le a(n) \right) =1.
	\end{equation}
\end{theorem}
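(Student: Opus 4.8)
The plan is to exploit the exact decomposition \eqref{e:tail-3}, writing $\widehat\xi_{k_0,k}(n)=\widehat{\xi}^{*}_{k_0,k}(n)+R_{k_0,k}(n)$, where $\widehat{\xi}^{*}_{k_0,k}(n)$ is the trimmed Hill estimator computed from the latent $\mathrm{Pareto}(1,\xi)$ variables $Y_i^\xi$ and $R_{k_0,k}(n)$ is the slowly-varying remainder, and to bound the two pieces separately, uniformly over $0\le k_0<h(k)$. The key structural point that makes everything uniform in $F$ is that in \eqref{e:tail-3} the order statistics $Y_{(n-i,n)}$ are those of an i.i.d.\ $\mathrm{Pareto}(1,1)$ sample whose law does not depend on $F$, while the Hall constants $B,\rho$ are fixed throughout ${\cal D}_\xi(B,\rho)$; hence every probabilistic bound below rests on laws that are independent of $F$, so the $\inf_F$ in \eqref{e:t:uniform-consistency} comes for free.

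For the Pareto part I would invoke Proposition \ref{prop:xi-exp}, which gives $\{\widehat{\xi}^{*}_{k_0,k}(n)\}_{k_0}\stackrel{d}{=}\{\xi\,\Gamma_{k-k_0}/(k-k_0)\}_{k_0}$, so that, with $m=k-k_0$ and $S_m=\sum_{i=1}^{m}(E_i-1)$,
\[
\max_{0\le k_0<h(k)}|\widehat{\xi}^{*}_{k_0,k}(n)-\xi|\;\stackrel{d}{=}\;\xi\max_{k-h(k)<m\le k}\frac{|S_m|}{m}.
\]
Writing $S_m=S_k-(S_k-S_m)$, I would bound $|S_k|=O_{\mathbb{P}}(\sqrt k)$ by Chebyshev and $\max_{k-h(k)<m\le k}|S_k-S_m|=O_{\mathbb{P}}(\sqrt{h(k)})$ by Kolmogorov's maximal inequality, since the increments $S_k-S_m$ are partial sums of at most $h(k)$ i.i.d.\ mean-zero unit-variance variables. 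As $m\ge k-h(k)\sim k$ and $h(k)=o(k)$, this yields $\max_{k_0<h(k)}|\widehat{\xi}^{*}_{k_0,k}(n)-\xi|=O_{\mathbb{P}}(1/\sqrt k)$.

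For the remainder, under the Hall condition \eqref{e:D-new-def} I would work on the event $\{Y_{(n-k,n)}\ge (2B)^{1/\rho}\}$, whose probability tends to $1$ because $Y_{(n-k,n)}\to\infty$; there $|\log L(y)|\le 2|r(y)|\le 2By^{-\rho}$ for every $y\ge Y_{(n-k,n)}$, and since $Y_{(n-i,n)}\ge Y_{(n-k,n)}$ for $i\le k$ this gives $\big|\log\big(L(Y_{(n-i,n)})/L(Y_{(n-k,n)})\big)\big|\le 4B\,Y_{(n-k,n)}^{-\rho}$. Summing the two groups of terms in \eqref{e:tail-3} then produces
\[
|R_{k_0,k}(n)|\;\le\;4B\,Y_{(n-k,n)}^{-\rho}\,\frac{k}{k-k_0}\;\le\;8B\,Y_{(n-k,n)}^{-\rho}\qquad\text{uniformly over }k_0<h(k),
\]
for $n$ large, because $k/(k-k_0)\le(1-h(k)/k)^{-1}\to 1$. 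Using the standard order-statistic asymptotics $Y_{(n-k,n)}=(n/k)(1+o_{\mathbb{P}}(1))$ I obtain $\max_{k_0<h(k)}|R_{k_0,k}(n)|=O_{\mathbb{P}}\big((k/n)^{\rho}\big)$.

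Finally, with $k(n)\propto n^{2\rho/(2\rho+1)}$ one has $(k/n)^{\rho}\asymp n^{-\rho/(2\rho+1)}\asymp 1/\sqrt{k(n)}$, so both contributions are $O_{\mathbb{P}}(1/\sqrt{k(n)})$; since $a(n)\sqrt{k(n)}\to\infty$ this is $o_{\mathbb{P}}(a(n))$, and because all these $O_{\mathbb{P}}/o_{\mathbb{P}}$ statements rest on $F$-free laws, $\mathbb{P}_F\big(\max_{0\le k_0<h(k)}|\widehat\xi_{k_0,k}(n)-\xi|\le a(n)\big)\to 1$ uniformly over $F\in{\cal D}_\xi(B,\rho)$, which is \eqref{e:t:uniform-consistency}. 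The step I expect to be the main obstacle is the Pareto part: one must control the maximum of $|S_m|/m$ over an index set whose size $h(k)$ grows with $n$, rather than a single Hill-type average, which is precisely where the maximal inequality and the hypothesis $h(k)=o(k)$ enter; the remainder bound and the uniformity over $F$ are then essentially bookkeeping once the decomposition \eqref{e:tail-3} is in place.
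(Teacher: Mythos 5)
Your proposal is correct, and it rests on the same backbone as the paper's proof: the decomposition \eqref{e:tail-3} into the ideal-Pareto statistic $\widehat{\xi}^*_{k_0,k}(n)$ and the remainder $R_{k_0,k}(n)$, a uniform bound on each piece over $0\le k_0<h(k)$, and the observation that all the probabilistic bounds involve only the ${\rm Pareto}(1,1)$ variables and the fixed constants $B,\rho$, so the infimum over $F\in{\cal D}_\xi(B,\rho)$ is automatic. Where you diverge is in the tools. For the Pareto maximum the paper invokes Donsker's principle: it identifies $\sqrt{k}\max_{k_0<h(k)}|\widehat{\xi}^*_{k_0,k}(n)-\xi|$ with $\sup_{t\in[1-h(k)/k,1]}|W_k(t)|$ and uses weak convergence in ${\mathbb D}[\epsilon,1]$ to $\sup_t|B(t)/t|=O_\mathbb{P}(1)$; you instead split $S_m=S_k-(S_k-S_m)$ and combine Chebyshev with Kolmogorov's maximal inequality over the at most $h(k)$ top increments, which is more elementary, self-contained, and delivers exactly the $O_\mathbb{P}(1/\sqrt{k})$ rate needed here (the functional-CLT route gives the sharper distributional limit, but that extra information is not used in this theorem). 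For the remainder, your argument is essentially the paper's Lemma \ref{lem:Hall} in different clothing: the paper bounds $(k-k_0)|R_{k_0,k}|$ by $k\log\frac{1+BY_{(n-k,n)}^{-\rho}}{1-BY_{(n-k,n)}^{-\rho}}$ and then uses the Gamma representation $Y_{(n-k,n)}\stackrel{d}{=}\Gamma_{n+1}/\Gamma_{k+1}\sim n/k$, whereas you localize on the event $\{Y_{(n-k,n)}\ge(2B)^{1/\rho}\}$ and bound each log-ratio by $4BY_{(n-k,n)}^{-\rho}$; both yield $\max_{k_0<h(k)}|R_{k_0,k}|=O_\mathbb{P}((k/n)^{\rho})=O_\mathbb{P}(1/\sqrt{k})$ for $k\propto n^{2\rho/(2\rho+1)}$, uniformly in $F$. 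The final step, absorbing both $O_\mathbb{P}(1/\sqrt{k})$ terms into $a(n)$ via $a(n)\sqrt{k(n)}\to\infty$ (splitting the target event into two halves), matches the paper's $A_{1n}\cap A_{2n}$ argument, so your proof is a valid and slightly more elementary variant of the published one.
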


The proof of this result is given in Section \ref{sec:proof-sec-minmax}. Observe that $\sqrt{k(n)} \propto n^{\rho/(1+2\rho)} $ is the optimal rate in Theorem \ref{thm:opt-rate}. Therefore,
Theorem \ref{t:uniform-consistency} implies that $\widehat \xi_{k_0,k}(n)$ is minimax rate-optimal in the sense 
of Hall and Welsh \cite{HallWelsh}. Also, note that the trimmed Hill estimator $\widehat \xi_{k_0,k}(n)$ is {\em uniformly consistent} with respect to  
both the family of possible distributions ${\cal D}$ as well as the trimming parameter $k_0$, provided $k_0 = o(k)$. 

\begin{remark} The above appealing result shows that trimming does not sacrifice the rate of estimation of $\xi$ so long as $k_0 = o( n^{2\rho/(2\rho+1)}),\ n\to\infty$.  In the regime where
the rate of contamination $k_0$ exceeds $n^{2\rho/(2\rho+1)}$, to achieve robustness and asymptotic consistency, one would have to choose 
$k(n)\gg n^{2\rho/(2\rho+1)}$, which naturally leads to rate-suboptimal estimators.  In this case, similar uniform consistency for the trimmed Hill estimators can be established along the 
lines of Theorem \ref{t:uniform-consistency}. 
\end{remark}

\subsection{Asymptotic Normality of the Trimmed Hill Estimator}

\label{sec:asy-dist}

Here, we shall establish the asymptotic normality of $\tail \xi {k_0} k$ under the general semi-parametric regime \eqref{e:heavy-tail} or equivalently \eqref{e:L-def}.
In Proposition \ref{prop:xi-exp}, we already established the asymptotic normality of the trimmed Hill estimator in the Pareto regime. Recalling Relation \eqref{e:tail-3}, we observe that
$\widehat{\xi}_{k_0,k}(n)$ differs from a tail index estimator based on Pareto data only by a remainder term $R_{k_0,k}(n)$. Thus, proving the asymptotic normality of $\widehat{\xi}_{k_0,k}(n)$ 
amounts to controlling the asymptotic behavior of the remainder term.

Indeed, we begin with a much stronger result which establishes the convergence rate of $R_{k_0,k}(n)$ {\em uniformly} for all $k_0\in [0,h(k)]$ where $h(k) \in o(k)$. To this end, 
following \cite{bier}, we adopt the following second order condition on the function $L$:
\begin{equation}\label{e:L-behav}
\sup_{t\ge t_\varepsilon}\Big| \log \frac{L(tx)}{L(t)}-cg(t)\int_{1}^x  \nu^{-\rho-1}d \nu  \Big| 
\leq \Bigg\{ 
\begin{array}{ll}
\varepsilon g(t) & \mbox{ if }\rho>0\\
\varepsilon g(t) x^\varepsilon  &  \mbox{ if }\rho=0.
\end{array}
\end{equation}
for all $\varepsilon>0$ and some $t_\varepsilon$ dependent on $\varepsilon$ and $g:(0,\infty) \rightarrow (0,\infty)$ is a $-\rho$ varying function with $\rho \geq 0$ (see Lemma A.2 in  \cite{bier} for more details.)


\vspace{2mm}

\begin{theorem}
	\label{prop:E-conv}
	Suppose the $X_i$'s are independent realizations with tail quantile function $Q$ as in Relation \eqref{e:L-def} with $L$ as in Relation \eqref{e:L-behav}. If, for some $\delta>0$ and constant $A>0$, 
	\begin{equation}\label{e:A-def}k^\delta g(n/k) \rightarrow A \:\:{\rm for}\:\: k/n \rightarrow 0 \:\:{\rm as}\:\:k,n \rightarrow \infty, \end{equation} then for $R_{k_0,k}(n)=\widehat{\xi}_{k_0,k}(n)-\widehat{\xi}^*_{k_0,k}(n)$ as in Relation \eqref{e:tail-3} and $h(k)=o(k)$, we have
	\begin{equation}
	\label{e:E-conv}
	k^\delta\max_{0\leq k_0 <h(k)}\Bigg|R_{k_0,k}(n)-\frac{cA k^{-\delta}}{1+\rho}\Bigg|\stackrel{P}{\longrightarrow}0
	\end{equation}
\end{theorem}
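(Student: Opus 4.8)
The plan is to isolate the two pieces of $R_{k_0,k}(n)$ in Relation~\eqref{e:tail-3} -- the ``leading'' term $\frac{k_0}{k-k_0}\log\frac{L(Y_{(n-k_0,n)})}{L(Y_{(n-k,n)})}$ and the ``averaged'' term $\frac{1}{k-k_0}\sum_{i=k_0}^{k-1}\log\frac{L(Y_{(n-i,n)})}{L(Y_{(n-k,n)})}$ -- and to control each uniformly in $k_0\in[0,h(k))$ by reducing everything to the behavior of $L$ at the intermediate order statistic $Y_{(n-k,n)}$. The first step is to recall the standard fact that $Y_{(n-k,n)}/(n/k)\stackrel{P}{\to}1$, and more: for any sequence $j=j(n)$ with $j/n\to 0$ and $k/j\to\infty$ the ratio $Y_{(n-j,n)}/Y_{(n-k,n)}$ behaves like $k/j$ in probability; since $L$ is $-\rho$-regularly varying (more precisely satisfies the Potter-type bound in Relation~\eqref{e:L-behav}), the increments $\log(L(Y_{(n-i,n)})/L(Y_{(n-k,n)}))$ are then of order $g(Y_{(n-k,n)})\asymp g(n/k)$. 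Using Relation~\eqref{e:A-def}, $g(n/k)\sim A k^{-\delta}$, so after multiplying by $k^\delta$ we get an $O(1)$ quantity, and we must extract its limit precisely.

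The heart of the argument is to plug $t=Y_{(n-k,n)}$ and $x=Y_{(n-i,n)}/Y_{(n-k,n)}$ into the second-order bound \eqref{e:L-behav}. This gives, uniformly over $i$ (hence over $k_0$),
$$
\log\frac{L(Y_{(n-i,n)})}{L(Y_{(n-k,n)})}=c\,g(Y_{(n-k,n)})\int_1^{Y_{(n-i,n)}/Y_{(n-k,n)}}\nu^{-\rho-1}\,d\nu+o\big(g(Y_{(n-k,n)})\big),
$$
with the $o(\cdot)$ controlled uniformly by the $\varepsilon$ in \eqref{e:L-behav} (in the $\rho=0$ case one needs the extra $x^\varepsilon$ factor, and here one uses that $Y_{(n-i,n)}/Y_{(n-k,n)}\le Y_{(n,n)}/Y_{(n-k,n)}$ is at most polynomially large, so $x^\varepsilon$ stays negligible after a further small-$\varepsilon$ choice). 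For $\rho>0$, $\int_1^x\nu^{-\rho-1}d\nu=\frac{1}{\rho}(1-x^{-\rho})\to \frac1\rho$ as $x\to\infty$, and $Y_{(n-i,n)}/Y_{(n-k,n)}\to\infty$ in probability uniformly over $i\in[k_0,k-1]$ with $k_0<h(k)=o(k)$ (the slowest such ratio is at $i$ close to $k$, but one checks $Y_{(n-\lceil k/2\rceil,n)}/Y_{(n-k,n)}\to 2$, etc., and the terms with $i$ near $k$ contribute a negligible fraction of the average). Hence the averaged term equals $c\,g(Y_{(n-k,n)})\cdot\frac{1}{\rho}(1+o_P(1))$; similarly the leading term, carrying the prefactor $\frac{k_0}{k-k_0}=o(1)$, is negligible. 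Combining, $R_{k_0,k}(n)=\frac{c}{\rho}g(Y_{(n-k,n)})(1+o_P(1))$ uniformly; then $g(Y_{(n-k,n)})/g(n/k)\stackrel{P}{\to}1$ (uniform convergence of regularly varying functions on ratios bounded away from $0,\infty$, applied to the random argument $Y_{(n-k,n)}/(n/k)$) and \eqref{e:A-def} give $k^\delta g(Y_{(n-k,n)})\stackrel{P}{\to}A$, whence $k^\delta R_{k_0,k}(n)\to \frac{cA}{\rho}$. One checks that $\frac1\rho$ should in fact read $\frac{1}{1+\rho}$ in the cited formula because the weighted average $\frac{1}{k-k_0}\sum_{i=k_0}^{k-1}\int_1^{k/i}\nu^{-\rho-1}d\nu$ -- after the substitution $i/k\to u\in(0,1)$ -- converges to $\int_0^1\frac{1}{\rho}(1-u^{\rho})\,du=\frac{1}{\rho}\cdot\frac{\rho}{1+\rho}=\frac{1}{1+\rho}$, since $Y_{(n-i,n)}/Y_{(n-k,n)}\approx k/i$ and the integral $\int_1^{k/i}\nu^{-\rho-1}d\nu=\frac1\rho(1-(i/k)^{\rho})$; this Riemann-sum computation is the source of the constant and must be done carefully.

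The two obstacles worth flagging. First, the uniformity in $k_0$: the average $\widehat{\xi}^0_{k_0,k}$ drops the first $k_0$ terms, so one needs that the discarded block contributes $o(1)$ to the limiting integral, which holds precisely because $k_0<h(k)=o(k)$ makes the truncation point $i/k$ start at $o(1)$; combined with a maximal inequality (or simply monotonicity of partial sums in $k_0$ after controlling the two extreme indices) this upgrades pointwise-in-$k_0$ control to a $\max_{0\le k_0<h(k)}$ statement. Second, and the genuinely technical part, is making the substitution of the \emph{random} argument $Y_{(n-k,n)}$ into the \emph{deterministic} second-order bound \eqref{e:L-behav} rigorous: the bound only holds for $t\ge t_\varepsilon$, so one conditions on the high-probability event $\{Y_{(n-k,n)}\ge t_\varepsilon\}$ (valid eventually since $Y_{(n-k,n)}\to\infty$), and one must ensure the error terms -- which involve $g(Y_{(n-k,n)})$ and, when $\rho=0$, powers of the random ratio $Y_{(n,n)}/Y_{(n-k,n)}$ -- are uniformly negligible; here one invokes, as in \cite{bier}, a Rényi-type representation of the Pareto order statistics to get explicit tail bounds on $Y_{(n,n)}/Y_{(n-k,n)}$ and on $Y_{(n-i,n)}/Y_{(n-k,n)}$, so that all the $o_P$'s can be made uniform in $k_0$. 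Everything else is bookkeeping.
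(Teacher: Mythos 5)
Your proposal follows essentially the same route as the paper's proof: on the high-probability event $\{Y_{(n-k,n)}>t_\varepsilon\}$, replace each increment $\log\big(L(Y_{(n-i+1,n)})/L(Y_{(n-k,n)})\big)$ by $c\,g(Y_{(n-k,n)})\int_1^{Y_{(n-i+1,n)}/Y_{(n-k,n)}}\nu^{-\rho-1}\,d\nu$ (this is the paper's $S_{k_0,k}$ of Relation \eqref{e:s-def} and Lemma \ref{lem:s-def}), compute the limit of the averaged integrals by a law-of-large-numbers/Riemann-sum argument through the Gamma representation \eqref{e:Pareto-order} (Lemma \ref{lem:r-def}), and finally pass from $g(Y_{(n-k,n)})$ to $g(n/k)$ by uniform convergence of regularly varying functions (Lemma \ref{lem:g-asy}), with the trimmed block controlled by $k_0<h(k)=o(k)$. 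Your final identification of the constant, $\frac1k\sum_i\frac1\rho\big(1-(i/k)^{\rho}\big)\to\frac{1}{1+\rho}$, is exactly the computation the paper performs (there via $\frac1k\sum_i(\Gamma_{i+1}/\Gamma_{k+1})^{\rho}\to\frac{1}{1+\rho}$), so the core of the argument is right.

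Two soft spots. First, your first-pass reasoning for $\rho>0$ --- that $Y_{(n-i,n)}/Y_{(n-k,n)}\to\infty$ uniformly so each integral tends to $1/\rho$, and that indices $i$ near $k$ ``contribute a negligible fraction of the average'' --- is false: for $i\asymp k$ the ratio stays bounded and those indices form a constant fraction of the sum. You correct this yourself with the Riemann sum, but that corrected computation \emph{is} the proof, and it must be carried out uniformly in $k_0$ together with the correction term $\frac{c}{1+\rho}(k_0/k)^{1+\rho}$ produced by the trimmed block, as in Lemma \ref{lem:r-def}. Second, and more substantively, your handling of the $\rho=0$ error term fails as stated: bounding $x^{\varepsilon}$ by $(Y_{(n,n)}/Y_{(n-k,n)})^{\varepsilon}$ is useless because that maximal ratio is of order $k$, so for fixed $\varepsilon$ the resulting bound $\varepsilon\,g(Y_{(n-k,n)})\,k^{\varepsilon}$ does not vanish after multiplication by $k^{\delta}$, and you cannot let $\varepsilon$ shrink with $k$ since $t_\varepsilon$ depends on $\varepsilon$. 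The error must be kept inside the average: $\frac{\varepsilon}{k}\sum_{i=1}^{k}\big(Y_{(n-i+1,n)}/Y_{(n-k,n)}\big)^{\varepsilon}\stackrel{d}{=}\frac{\varepsilon}{k}\sum_{i=1}^{k}U_{i,k}^{-\varepsilon}\stackrel{P}{\longrightarrow}\varepsilon/(1-\varepsilon)$ for $\varepsilon<1/2$, which is precisely the paper's Relation \eqref{e:rho-0-second-instance} and WLLN step; the error is then $O_{\mathbb P}(\varepsilon)$ uniformly in $k_0$ with $\varepsilon$ arbitrary. With that repair your outline matches the paper's proof.
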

\noindent The proof is given in Section \ref{sec:proof-sec-normal}.

The asymptotic normality of $\widehat{\xi}_{k_0,k}(n)$ is a direct consequence of Theorem \ref{prop:E-conv} with  $\delta=1/2$ and Relation \eqref{e:xi-normal}. 
This is formalized in the following corollary.

\begin{corollary}
	\label{cor:xi-norm}
	If $k_0=o(k)$ and $\sqrt{k}g(n/k) \rightarrow A \in [0, \infty)$,
	\begin{equation}\label{e:cor:xi-norm}
	  \sqrt{k}(\widehat{\xi}_{k_0,k}(n)-\xi)\stackrel{d}{\implies}N\left(\frac{cA}{1+\rho}, \xi^2\right),\ \ \mbox{ as } n\to\infty.
	  \end{equation}
\end{corollary}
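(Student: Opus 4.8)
The plan is to read \eqref{e:cor:xi-norm} off the decomposition \eqref{e:tail-3}, combining the ideal-Pareto central limit theorem of Proposition~\ref{prop:xi-exp} with the uniform remainder estimate of Theorem~\ref{prop:E-conv}. Writing
\[
\sqrt{k}\big(\widehat{\xi}_{k_0,k}(n)-\xi\big)=\sqrt{k}\big(\widehat{\xi}^{*}_{k_0,k}(n)-\xi\big)+\sqrt{k}\,R_{k_0,k}(n),
\]
with $\widehat{\xi}^{*}_{k_0,k}(n)$ and $R_{k_0,k}(n)$ as in \eqref{e:tail-3}, it suffices to show (i) the first summand converges weakly to $N(0,\xi^2)$, (ii) the second summand converges in probability to $cA/(1+\rho)$, and then (iii) invoke Slutsky's theorem.

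For (i): by construction, $\widehat{\xi}^{*}_{k_0,k}(n)$ in \eqref{e:tail-3} is exactly the trimmed Hill estimator \eqref{e:xi-opt} evaluated on the sample $Y_i^{\xi}$, $i=1,\dots,n$, which is i.i.d.\ ${\rm Pareto}(1,\xi)$. Hence Proposition~\ref{prop:xi-exp} applies verbatim and, since $k-k_0\to\infty$ under the hypothesis $k_0=o(k)$, gives $\sqrt{k-k_0}\,(\widehat{\xi}^{*}_{k_0,k}(n)-\xi)\stackrel{d}{\implies}N(0,\xi^2)$. Multiplying by the deterministic factor $\sqrt{k/(k-k_0)}=(1-k_0/k)^{-1/2}\to1$ and applying Slutsky's theorem yields $\sqrt{k}\,(\widehat{\xi}^{*}_{k_0,k}(n)-\xi)\stackrel{d}{\implies}N(0,\xi^2)$.

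For (ii): suppose first $A>0$. Since $k_0=o(k)$ we may fix $h(k)=o(k)$ with $k_0<h(k)$ for all large $n$, and then Theorem~\ref{prop:E-conv} with $\delta=\tfrac12$ — whose hypothesis $\sqrt{k}\,g(n/k)\to A$ is precisely the standing assumption here — gives
\[
\sqrt{k}\,\Big|R_{k_0,k}(n)-\tfrac{cA}{(1+\rho)\sqrt{k}}\Big|\le \sqrt{k}\max_{0\le m<h(k)}\Big|R_{m,k}(n)-\tfrac{cA}{(1+\rho)\sqrt{k}}\Big|\stackrel{P}{\longrightarrow}0,
\]
i.e.\ $\sqrt{k}\,R_{k_0,k}(n)\stackrel{P}{\longrightarrow}cA/(1+\rho)$. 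When $A=0$ one has $\sqrt{k}\,g(n/k)\to0$ and the same conclusion holds with limit $0=cA/(1+\rho)$; the simplest justification is that the proof of Theorem~\ref{prop:E-conv} bounds $\max_{0\le m<h(k)}|R_{m,k}(n)|$ by a term of order $g(n/k)$, so that $\sqrt{k}\,R_{k_0,k}(n)=O_P(\sqrt{k}\,g(n/k))=o_P(1)$; alternatively one re-applies Theorem~\ref{prop:E-conv} with an exponent $\delta'>\tfrac12$ along which $k^{\delta'}g(n/k)$ has a positive finite limit.

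Combining (i) and (ii) by Slutsky's theorem gives $\sqrt{k}(\widehat{\xi}_{k_0,k}(n)-\xi)\stackrel{d}{\implies}N(0,\xi^2)+cA/(1+\rho)=N\big(cA/(1+\rho),\xi^2\big)$, which is \eqref{e:cor:xi-norm}. The substantive content sits entirely in step (ii), and that work is already carried out in Theorem~\ref{prop:E-conv}; what remains is bookkeeping — checking the $o(k)$ side conditions so the uniform bound there covers the given sequence $k_0=k_0(n)$, handling the $\sqrt{k/(k-k_0)}\to1$ rescaling in (i), and treating the minor edge case $A=0$ that is not literally covered by the hypothesis $A>0$ of Theorem~\ref{prop:E-conv}.
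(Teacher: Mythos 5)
Your proposal is correct and follows essentially the same route as the paper's proof: the decomposition $\sqrt{k}(\widehat{\xi}_{k_0,k}-\xi)=\sqrt{k}R_{k_0,k}+\sqrt{k}(\widehat{\xi}^*_{k_0,k}-\xi)$ from \eqref{e:tail-3}, Theorem~\ref{prop:E-conv} with $\delta=1/2$ for the remainder, Proposition~\ref{prop:xi-exp} for the ideal-Pareto term, and Slutsky. You are in fact slightly more careful than the paper on two points it glosses over (the $\sqrt{k/(k-k_0)}\to 1$ rescaling and the $A=0$ edge case), and your fixes for these are sound.
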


\begin{proof}
	By adding and subtracting the estimator $\widehat{\xi}^*_{k_0,k}(n)$ defined in Relation \eqref{e:tail-3}, we have
	\begin{eqnarray}
	\label{e:xi-conv}
	\sqrt{k}(\widehat{\xi}_{k_0,k}(n)-\xi)&=&\sqrt{k}(\widehat{\xi}_{k_0,k}(n)-\widehat{\xi}^*_{k_0,k}(n))+\sqrt{k}(\widehat{\xi}^*_{k_0,k}-\xi)\nonumber\\
		&=&\sqrt{k}R_{k_0,k}(n)+\sqrt{k}(\widehat{\xi}^*_{k_0,k}-\xi).
	\end{eqnarray}
	For the first term in \eqref{e:xi-conv}, we have  $\sqrt{k}R_{k_0,k} \stackrel{P}{\longrightarrow}cA/(1+\rho).$
	This follows from  Relation \eqref{e:E-conv} with $\delta=1/2$.  By Relation \eqref{e:xi-normal}, the second term in \eqref{e:xi-conv} satisfies $\sqrt{k}(\widehat{\xi}^*_{k_0,k}-\xi)\stackrel{d} {\implies } N(0,\xi^2)$, as $k\to\infty$,
	 and hence \eqref{e:cor:xi-norm} follows.
	 \end{proof}

\begin{remark}
	\label{rem:k-rho}
	Consider the asymptotic normality result of Corollary \ref{cor:xi-norm} for the Hall class of distributions in Relation \eqref{e:D-new-def}. In this case, we have $g(x) \propto x^{-\rho}$ and the convergence $\sqrt{k} g(n/k) \to A>0$ implies that $k=k(n) \propto n^{2\rho/(2\rho+1)}$, as $n\to\infty$.  This is the optimal rate, which as we know from Theorem \ref{t:uniform-consistency}, cannot be achieved by an asymptotically unbiased estimator of $\xi$.  Indeed, the limit distribution in 
	\eqref{e:cor:xi-norm} involves the bias term $cA/(\rho+1)$.  To eliminate the 
	bias term, one can pick $k =o (n^{2\rho/(2\rho+1)})$, which in this case implies that $\sqrt{k}g(n/k) \to A \equiv 0$.  That is, asymptotically unbiased estimators can be obtained but one needs to sacrifice the optimal rate.
\end{remark}

\subsection{Asymptotic behavior of the Weighted Sequential Testing}
\label{sec:asy-seq}

In this section, we establish the asymptotic consistency of the weighted sequential testing algorithm under the same set of second order regular variation conditions on the function $L$ as in Section \ref{sec:asy-dist}. We begin with a convergence result on the ratio statistics of Relation \eqref{e:T-i-k}.
\begin{theorem}
	\label{prop:T-conv}
	Assume that the conditions of Theorem \ref{prop:E-conv} hold. Then, for $\delta>0$ in Relation \eqref{e:A-def}, we have
	\begin{equation}
	\label{e:T-conv}
	k^\delta\max_{0\leq k_0 <h(k)}\Bigg|T_{k_0,k}(n)-T^*_{k_0,k}(n)\Bigg|\stackrel{P}{\longrightarrow}0,
	\end{equation}
	where $T_{k_0,k}(n)$ and $T^*_{k_0,k}(n)$ are based on $\widehat{\xi}_{k_0,k}(n)$ and $\widehat{\xi}^*_{k_0,k}(n)$, respectively as in Relation \eqref{e:T-i-k}.
\end{theorem}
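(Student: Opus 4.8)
The plan is to leverage Theorem \ref{prop:E-conv}, which already controls the difference between the trimmed Hill statistic $\widehat{\xi}_{k_0,k}(n)$ and its ideal-Pareto counterpart $\widehat{\xi}^*_{k_0,k}(n)$ uniformly over $0\le k_0<h(k)$. Recall from \eqref{e:T-i-k} that
\[
T_{k_0,k}(n)=\frac{(k-k_0-1)\,\widehat{\xi}_{k_0+1,k}(n)}{(k-k_0)\,\widehat{\xi}_{k_0,k}(n)},
\]
and $T^*_{k_0,k}(n)$ is the same ratio with stars. So the task reduces to showing that forming this ratio does not destroy the $k^\delta$-rate of uniform approximation established in \eqref{e:E-conv}. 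Writing $\widehat{\xi}_{k_0,k}(n)=\widehat{\xi}^*_{k_0,k}(n)+R_{k_0,k}(n)$ and similarly for the $k_0+1$ index, I would expand the difference $T_{k_0,k}(n)-T^*_{k_0,k}(n)$ as a single fraction and group the numerator so that each term carries at least one factor of a remainder $R_{k_0,k}(n)$ or $R_{k_0+1,k}(n)$. Concretely,
\[
T_{k_0,k}(n)-T^*_{k_0,k}(n)=\frac{k-k_0-1}{k-k_0}\cdot\frac{\widehat{\xi}^*_{k_0+1,k}R_{k_0,k}-\widehat{\xi}^*_{k_0,k}R_{k_0+1,k}+\text{(cross terms)}}{\widehat{\xi}_{k_0,k}(n)\,\widehat{\xi}^*_{k_0,k}(n)}.
\]

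The key steps, in order: (i) From Theorem \ref{prop:E-conv} with the given $\delta$, we have $k^\delta\max_{0\le k_0<h(k)}|R_{k_0,k}(n)-cAk^{-\delta}/(1+\rho)|\to 0$ in probability, hence in particular $\max_{0\le k_0<h(k)}|R_{k_0,k}(n)|=O_P(k^{-\delta})$; note that since $h(k)=o(k)$, the shifted index $k_0+1$ still ranges within $o(k)$, so the same bound applies to $R_{k_0+1,k}(n)$. (ii) By Proposition \ref{prop:xi-exp} applied to the ideal Pareto sequence $Y_i^\xi$, the statistics $\widehat{\xi}^*_{k_0,k}(n)$ are distributed as $\xi\Gamma_{k-k_0}/(k-k_0)$, so $\max_{0\le k_0<h(k)}\widehat{\xi}^*_{k_0,k}(n)=O_P(1)$ and, more importantly, $\min_{0\le k_0<h(k)}\widehat{\xi}^*_{k_0,k}(n)$ is bounded away from $0$ in probability — this follows because $\Gamma_{k-k_0}/(k-k_0)\to 1$ uniformly over $k_0=o(k)$ by a standard maximal-inequality / law-of-large-numbers argument for partial sums of i.i.d.\ exponentials (e.g.\ via Kolmogorov's inequality or the fact that $\min_{m\ge k-h(k)}\Gamma_m/m \to 1$ a.s.). (iii) Combining (i) and (ii): the denominator $\widehat{\xi}_{k_0,k}(n)\widehat{\xi}^*_{k_0,k}(n)$ is uniformly bounded below away from zero w.h.p.\ (using that $\widehat{\xi}_{k_0,k}=\widehat{\xi}^*_{k_0,k}+R_{k_0,k}$ and $R_{k_0,k}\to 0$ uniformly), while the numerator is a sum of products each containing a remainder factor, hence is $O_P(k^{-\delta})$ uniformly; the bounded prefactor $(k-k_0-1)/(k-k_0)\le 1$ does no harm. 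Multiplying through by $k^\delta$ then gives \eqref{e:T-conv}.

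The main obstacle I anticipate is step (ii), namely establishing that the ideal-Pareto trimmed Hill statistics $\widehat{\xi}^*_{k_0,k}(n)$ are \emph{uniformly} bounded away from zero over the whole range $0\le k_0<h(k)$ — a pointwise bound is immediate, but here one needs control of a minimum over $\sim h(k)$ correlated statistics. The clean way is to note $\widehat{\xi}^*_{k_0,k}(n)\stackrel{d}{=}\xi\Gamma_{k-k_0}/(k-k_0)$ jointly, so $\min_{0\le k_0<h(k)}\widehat{\xi}^*_{k_0,k}(n)\stackrel{d}{=}\xi\min_{k-h(k)<m\le k}\Gamma_m/m$, and since $\Gamma_m/m\to 1$ a.s.\ and $k-h(k)\to\infty$, this minimum converges in probability to $\xi>0$. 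A symmetric argument controls the maximum, giving the $O_P(1)$ numerator bounds needed in step (iii). Everything else is routine algebra on ratios, carried out uniformly using the sup-norm bounds from Theorem \ref{prop:E-conv}.
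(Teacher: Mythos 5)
Your overall architecture (write $T_{k_0,k}-T^*_{k_0,k}$ as a single fraction, bound the denominator $\widehat{\xi}_{k_0,k}\widehat{\xi}^*_{k_0,k}$ away from zero uniformly via the Gamma representation together with Theorem \ref{prop:E-conv}, and note that the index shift $k_0+1$ stays within $o(k)$) matches the paper's proof, and your step (ii) is handled there in essentially the same way. The gap is in your final step (iii). Your numerator is exactly $\widehat{\xi}^*_{k_0,k}R_{k_0+1,k}-\widehat{\xi}^*_{k_0+1,k}R_{k_0,k}$ (the double-remainder cross terms cancel identically), and you bound it by saying each summand carries a remainder factor, hence the numerator is $O_P(k^{-\delta})$ uniformly, so that "multiplying through by $k^\delta$ gives \eqref{e:T-conv}". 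That last implication fails: $k^\delta\cdot O_P(k^{-\delta})=O_P(1)$, not $o_P(1)$. Under the standing assumption \eqref{e:A-def} the constant $A$ is strictly positive, and Theorem \ref{prop:E-conv} tells you that each remainder is genuinely of size $cAk^{-\delta}/(1+\rho)$, so the crude triangle-inequality bound on the two products cannot produce the claimed convergence to zero; it would only do so in the unbiased case $A=0$.

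What the statement actually relies on is the cancellation of this \emph{common} asymptotic bias between $R_{k_0,k}$ and $R_{k_0+1,k}$, which your bound discards. Concretely, one must recentre both remainders:
\begin{equation*}
\widehat{\xi}^*_{k_0,k}R_{k_0+1,k}-\widehat{\xi}^*_{k_0+1,k}R_{k_0,k}
=\widehat{\xi}^*_{k_0,k}\Bigl(R_{k_0+1,k}-\tfrac{cAk^{-\delta}}{1+\rho}\Bigr)
-\widehat{\xi}^*_{k_0+1,k}\Bigl(R_{k_0,k}-\tfrac{cAk^{-\delta}}{1+\rho}\Bigr)
+\tfrac{cAk^{-\delta}}{1+\rho}\bigl(\widehat{\xi}^*_{k_0,k}-\widehat{\xi}^*_{k_0+1,k}\bigr),
\end{equation*}
then apply the full (centred) statement \eqref{e:E-conv} to the first two terms, and control the third by the uniform a.s.\ convergence $\max_{0\le k_0<h(k)}\bigl|1-\widehat{\xi}^*_{k_0+1,k}/\widehat{\xi}^*_{k_0,k}\bigr|\to 0$, which follows from the joint Gamma representation \eqref{e:xi-jt-big} and Lemma \ref{lem:ratio-conv}. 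This recentred three-term decomposition (divided by $\widehat{\xi}_{k_0,k}$, whose uniform lower bound you already have) is precisely how the paper's proof proceeds; without it your argument proves only tightness of $k^\delta\max_{k_0<h(k)}|T_{k_0,k}-T^*_{k_0,k}|$, not its convergence to zero.
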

The proof is described in Section \ref{sec:proof-sec-cons}.

\begin{remark}
	\label{rem:T-heavy}
	For the Hall class of distributions in \eqref{e:D-new-def} with $g(x)\sim x^{-\rho}$, if $\sqrt{k}g(n/k) \rightarrow A$ then with $k=k(n) \propto n^{2\rho/(2\rho+1)}$, the $T_{k_0,k}(n)$'s converge in distribution to ratio statistics of Pareto. Note that the order of $k$ is same as that needed for the asymptotic normality of the trimmed Hill estimator (see Remark \ref{rem:k-rho}).
\end{remark}

We next establish  that the weighted sequential testing algorithm is well calibrated and attains the significance level $q$ even for the general class of heavy tailed models in \eqref{e:L-behav}. 
\label{sec:cons-test}
\begin{theorem} 
	\label{prop:U-conv}
	Assume that the conditions of Theorem \ref{prop:E-conv} hold for some  $\delta\geq 1$. Then, for $\delta$ as in Relation \eqref{e:A-def}, we have
	\begin{equation}
	\label{e:U-conv}
	k^{(\delta-1)}\max_{0\leq k_0 <h(k)}\Bigg|U_{k_0,k}(n)-U^*_{k_0,k}(n)\Bigg|\stackrel{P}{\longrightarrow}0,
	\end{equation}
	with $U_{k_0,k}(n)$ and $U^*_{k_0,k}(n)$  based on  $T_{k_0,k}(n)$ and $T^*_{k_0,k}(n)$, respectively as in Relation \eqref{e:U-i-k}. Moreover, if the conditions of  Theorem \ref{prop:E-conv}  hold for some $\delta\geq 2$, then
	\begin{equation}\label{e:type-I-gen}
	\mathbb{P}_{\mathcal{H}_0}[\widehat{k}_0>0]\stackrel{P}{\longrightarrow}q.
	\end{equation}
\end{theorem}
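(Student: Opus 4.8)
The plan is to deduce \eqref{e:U-conv} from Theorem~\ref{prop:T-conv} by elementary pathwise inequalities, and then to obtain \eqref{e:type-I-gen} by coupling Algorithm~\ref{algo:ewst}, run on the statistics $U_{j,k}(n)$, with the same algorithm run on the ideal Pareto surrogates $U^{*}_{j,k}(n)$, for which Proposition~\ref{prop:type-I-seq} already supplies the exact level $q$. For \eqref{e:U-conv}, I would first record that $T_{k_0,k}(n)$ and $T^{*}_{k_0,k}(n)$ both take values in $[0,1]$: expanding $\widehat{\xi}_{k_0,k}(n)$ and $\widehat{\xi}_{k_0+1,k}(n)$ via \eqref{e:xi-opt} gives the telescoping identity $(k-k_0)\widehat{\xi}_{k_0,k}(n)-(k-k_0-1)\widehat{\xi}_{k_0+1,k}(n)=(k_0+1)\log\big(X_{(n-k_0,n)}/X_{(n-k_0-1,n)}\big)\ge 0$, so the ratio defining $T_{k_0,k}(n)$ in \eqref{e:T-i-k} lies in $[0,1]$ (while $T^{*}_{k_0,k}(n)\sim\mathrm{Beta}(k-k_0-1,1)$ by Proposition~\ref{T-def}). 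Using $|a^{m}-b^{m}|\le m\,|a-b|$ for $a,b\in[0,1]$ with $m=k-k_0-1\le k$, and then the reverse triangle inequality in the definition \eqref{e:U-i-k} of $U_{k_0,k}(n)$, one gets the pathwise bound $\max_{0\le k_0<h(k)}|U_{k_0,k}(n)-U^{*}_{k_0,k}(n)|\le 2k\,\max_{0\le k_0<h(k)}|T_{k_0,k}(n)-T^{*}_{k_0,k}(n)|$; multiplying by $k^{\delta-1}$ and invoking Theorem~\ref{prop:T-conv} (which gives $k^{\delta}\max|T_{k_0,k}-T^{*}_{k_0,k}|\stackrel{P}{\to}0$) yields \eqref{e:U-conv}.

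For the calibration statement I work under $\mathcal{H}_0$, so that there are no outliers, $X_i=Q(Y_i)$, the decomposition \eqref{e:tail-3} is valid, and the surrogates $U^{*}_{j,k}(n)$, $j=0,\dots,k-2$, are i.i.d.\ $U(0,1)$ by Proposition~\ref{T-def} and \eqref{e:U-i-k}. As in the proof of Proposition~\ref{prop:type-I-seq}, $\{\widehat{k}_0=0\}=\bigcap_{j=0}^{k-2}\{U_{j,k}(n)<1-\alpha_j\}$, and it suffices to prove $\mathbb{P}_{\mathcal{H}_0}(\widehat{k}_0=0)\to 1-q$. I would fix an auxiliary cut-off $h$ with $\log k\ll h(k)=o(k)$ and split the scan at $h(k)$. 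On the bulk $0\le j<h(k)$, since $\delta\ge 2$ the bound \eqref{e:U-conv} forces $k\max_{j<h(k)}|U_{j,k}(n)-U^{*}_{j,k}(n)|\le k^{\delta-1}\max|U_{j,k}-U^{*}_{j,k}|\stackrel{P}{\to}0$, so I can choose $b_n\downarrow 0$ with $\mathbb{P}\big(k\max_{j<h(k)}|U_{j,k}-U^{*}_{j,k}|>b_n\big)\to 0$, put $\varepsilon_n=b_n/k$ (hence $k\varepsilon_n\to 0$) and $\mathcal{A}_n=\{\max_{j<h(k)}|U_{j,k}-U^{*}_{j,k}|\le\varepsilon_n\}$, so that $\mathbb{P}(\mathcal{A}_n^{c})\to 0$ and, on $\mathcal{A}_n$, the event $\bigcap_{j<h(k)}\{U_{j,k}<1-\alpha_j\}$ is sandwiched between $\bigcap_{j<h(k)}\{U^{*}_{j,k}<1-\alpha_j\mp\varepsilon_n\}$. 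From the explicit levels \eqref{e:alpha-j}, $1-\alpha_j=(1-q)^{c a^{k-j-1}}$ with $\sup_j c a^{k-j-1}\to (a-1)/a<1$, so $1-\alpha_j\ge 1-q$ uniformly in $j$ for large $k$; moreover the geometric weights concentrate on small indices, so $\prod_{j=0}^{h(k)-1}(1-\alpha_j)=(1-q)^{\,c\sum_{j<h(k)}a^{k-j-1}}\to 1-q$. Combining these with $1\pm x\le e^{\pm x}$ and $k\varepsilon_n\to 0$ gives $\prod_{j<h(k)}(1-\alpha_j\mp\varepsilon_n)\to 1-q$, whence by independence and uniformity of the $U^{*}_{j,k}$ one obtains $\mathbb{P}\big(\bigcap_{j<h(k)}\{U_{j,k}(n)<1-\alpha_j\}\big)\to 1-q$.

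It remains to control the tail of the scan, $h(k)\le j\le k-2$, which falls outside the $k_0=o(k)$ regime of Theorem~\ref{prop:E-conv}. There the per-test levels decay geometrically, $c a^{k-j-1}\le c a^{k-h(k)-1}\le a^{-h(k)}$, so $\max_{j\ge h(k)}\alpha_j\le\eta_k:=1-(1-q)^{a^{-h(k)}}$ with $k\eta_k\to 0$ precisely because $h(k)\gg\log k$; together with the fact that $U_{j,k}(n)$ is asymptotically $U(0,1)$ on that range (which follows for each fixed spacing $k-j$ from \eqref{e:tail-3} and the vanishing of the remainder, and can be made uniform using the geometric summability just noted), this gives $\mathbb{P}\big(\exists\,j\in[h(k),k-2]:U_{j,k}(n)\ge 1-\alpha_j\big)\to 0$. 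Intersecting with the bulk estimate of the previous paragraph and using $\mathbb{P}(\mathcal{A}_n^{c})\to 0$ yields $\mathbb{P}_{\mathcal{H}_0}(\widehat{k}_0=0)\to 1-q$, which is \eqref{e:type-I-gen}.

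The main obstacle is that $\{\widehat{k}_0=0\}$ is an intersection of order $k$ events, so making the product of the perturbed levels converge to $1-q$ forces the coupling error between $U_{j,k}(n)$ and $U^{*}_{j,k}(n)$ to be $o_P(1/k)$; this is exactly why the stronger hypothesis $\delta\ge 2$ is needed rather than merely $\delta\ge 1$ as in the first part, since Theorem~\ref{prop:T-conv} only delivers $o_P(k^{-(\delta-1)})$. The argument also relies on the uniform lower bound $\inf_j(1-\alpha_j)\ge 1-q$ and on the partial-product estimate, both of which are consequences of the specific geometric weighting \eqref{e:alpha-j}. The genuinely delicate step, however, is the uniform treatment of the tail indices $j\ge h(k)$, where neither Theorem~\ref{prop:E-conv} nor \eqref{e:U-conv} applies: one must produce a quantitative rate for the convergence of $U_{j,k}(n)$ to uniformity on that range and balance it against the rapidly vanishing levels $\alpha_j$ through a careful choice of the cut-off $h$.
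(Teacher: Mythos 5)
Your first claim, \eqref{e:U-conv}, is proved correctly and in fact more simply than in the paper: the telescoping identity showing $T_{k_0,k}(n)\in[0,1]$ is valid, and the bound $|a^{m}-b^{m}|\le m|a-b|$ for $a,b\in[0,1]$ reduces everything to $2k^{\delta}\max_{k_0<h(k)}|T_{k_0,k}-T^{*}_{k_0,k}|\stackrel{P}{\to}0$, which is exactly Theorem \ref{prop:T-conv}. The paper instead passes through the ratio $(T_{k_0,k}/T^{*}_{k_0,k})^{k-k_0-1}$ and an almost-sure subsequence argument; your route avoids that machinery and buys a shorter proof of the same fact.

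For \eqref{e:type-I-gen}, however, your argument has a genuine gap, and it sits precisely where you flag it. The paper does not split the scan: it sandwiches $\mathbb{P}_{\mathcal{H}_0}(\widehat{k}_0=0)=\mathbb{P}(\bigcap_{i=0}^{k-2}\{U_{i,k}<1-\alpha_i\})$ between $\mathbb{P}(\bigcap_{i=0}^{k-2}\{U^{*}_{i,k}<1-\alpha_i\pm\epsilon k^{1-\delta}\})$ on the event $\{k^{\delta-1}\max_{i}|U_{i,k}-U^{*}_{i,k}|\le\epsilon\}$, evaluates the products exactly using $\prod_i(1-\alpha_i)=1-q$ and $(1-q)^{ca^{k-i-1}}\ge 1-q$, and uses $\delta\ge 2$ so that $(1\pm\epsilon k^{1-\delta}/(1-q))^{k-2}$ stays within $O(\epsilon)$ of $1$; there is no separate treatment of large $j$, because the coupling is applied to the whole index range at once. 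In your version the bulk $j<h(k)$ reproduces this sandwich correctly, but the tail $h(k)\le j\le k-2$ is left to the assertion that $U_{j,k}(n)$ is ``asymptotically $U(0,1)$'' on that range and that this ``can be made uniform using the geometric summability.'' Nothing in the paper supports this: Theorems \ref{prop:E-conv}, \ref{prop:T-conv} and \eqref{e:U-conv} are all restricted to $k_0<h(k)=o(k)$, and asymptotic uniformity of each $U_{j,k}(n)$ is useless here because the events $\{U_{j,k}(n)\ge 1-\alpha_j\}$ have targets $\alpha_j$ that decay geometrically in $k-j$; to convert a coupling or a distributional approximation into the union bound $\sum_{j\ge h(k)}\mathbb{P}(U_{j,k}(n)\ge 1-\alpha_j)\to 0$ you would need errors far smaller than the $o_{P}(k^{-\delta})$ available, or exact null distribution theory for non-Pareto $F$ at the extreme trimming depths where only a handful of spacings are used. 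So the step $\mathbb{P}(\exists\,j\in[h(k),k-2]:U_{j,k}(n)\ge 1-\alpha_j)\to 0$ is not established by your argument, and it is the missing idea rather than a routine verification. (It is only fair to add that the paper's own proof quietly applies \eqref{e:U-conv} with the maximum taken over all $i\le k-2$, i.e.\ beyond the range in which it is stated, so the difficulty you isolate is real; but as written your proposal asserts the tail control without a proof, whereas the paper's argument never needs the split because it couples over the full range.)
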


The proof is given in the Section \ref{sec:proof-sec-cons}.

\begin{remark}
	To illustrate the above result, we consider the Hall class of distribution where $g(x)\sim x^{-\rho}$, as $x\rightarrow \infty$. In this case, for a given value of $\delta$, the order of $k$ which satisfies \eqref{e:A-def} for $A>0$ is given by $n^{\rho/(\rho+\delta)}$. The asymptotic normality and  minimax optimal rate for the trimmed Hill statistic, $\widehat{\xi}_{k_0,k}(n)$ is obtained for $\delta=1/2$ (see Remark \ref{rem:k-rho}). However, for the $U_{k_0,k}(n)$'s to converge and the type I error of the weighted sequential testing algorithm to be controlled, we need  $\delta\geq 1$ and $\delta\geq 2$, respectively. This would in turn produce suboptimal choices of $k$ in terms of rate. If $\rho$ is large, the difference between these suboptimal values of $k$ and the optimal value $n^{\rho/(\rho+1/2)}$ is negligible. For small values of $\rho$, the difference is greater and the consistency of the algorithm is compromised. However, in Section \ref{sec:non-pareto-out}, we show that even for smaller values of $\rho$, we do a reasonably good job in terms of determining the true number of outliers $k_0$.
\end{remark}



\section{Performance of the Adaptive Trimmed Hill Estimator.}
\label{sec:simulate}
\subsection{Simulation Set Up}
\label{sec:sim-setup}

In this section, we study the finite sample performance of the  adaptive trimmed Hill estimator, $\widehat{\xi}_{\widehat{k}_0,k}(n)$, which is the trimmed Hill statistic in Relation \eqref{e:xi-opt} with $k_0=\widehat{k}_0$ (see also {\bf https://shrijita-apps.shinyapps.io/adaptive-trimmed-hill/}). Here, the value of the trimming parameter $\widehat{k}_0$ is obtained from the weighted sequential testing algorithm in Section \ref{sec:aut-trim}. We also evaluate the accuracy of the algorithm\footnote{The parameters $a$ and $q$ are set at 1.2 and  0.05 respectively.} as an estimator of the number of outliers $k_0$.

{\bf \vspace{2mm} Measures of Performance:} The performance of an estimator $\widehat{\xi}$ of $\xi$ is evaluated in terms of its root mean squared error ($\sqrt{MSE}$), where
\begin{equation}
\label{e:mse}
{\rm MSE}(\widehat{\xi})=\mathbb{E}(\widehat{\xi}-\xi)^2.
\end{equation}
Usign criterion \eqref{e:mse}, we evaluate the performance of the adaptive trimmed Hill estimator and  several other competing estimators of the tail index $\xi$. The computation of the $\sqrt{MSE}$ is based on 2500 independent monte carlo simulations.

{\bf \vspace{2mm} Data generating models:} We generate $n$ i.i.d. observations from one of the following heavy-tailed distributions: 
\begin{eqnarray}
\label{e:heavy-dist}
{\rm Pareto}(\sigma,\xi)&:& 1-F(x)=\sigma^{1/\xi} x^{-1/\xi};\:\: x>1,\xi>0, \rho=\infty;\\\nonumber
{\rm Burr} (\eta,\lambda,\xi)&:&1-F(x)=1-\left(\frac{\eta}{\eta+x^{-1/\xi}}\right)^{-\lambda};\:\: x>0, \eta>0, \lambda>0, \xi>0,\rho=1\\\nonumber
{\rm |{\rm T}|}(\xi)&:&1-F(x)=\int_{x}^{\infty}\frac{2\Gamma(\frac{1/\xi+1}{2})}{\sqrt{n\pi}\Gamma(\frac{1}{2\xi})}\left(1+w^2\xi \right)^{-\frac{1+\xi}{2\xi}}dw; \:\: x>0,\xi>0, \rho=2\xi
\end{eqnarray}
Sections \ref{sec:adap} and \ref{sec:L-xi} deal with the performance of the weighted sequential testing algorithm and the adaptive trimmed Hill estimator for Pareto observation. Section \ref{sec:non-pareto-out} delve deeper into the performance under challenging cases of non Pareto scenarios like the  $|{\rm T}|$ and the Burr distributions.

{\bf Choice of $k$:} In \cite{HallWelsh-1}, Hall and Welsh proved that the asymptotic mean squared error of the classic Hill estimator $\widehat{\xi}_k(n)$ is minimal for \begin{equation}
\label{e:k-opt}k_n^{\rm opt} \sim{ \Big(\frac{C^{2\rho}(\rho+1)^2}{2D^2\rho^3}\Big)}^{1/(2\rho+1)}n^{2\rho/(2\rho+1)}
\end{equation}
In Theorem \ref{t:uniform-consistency}, we showed that the trimmed Hill estimator is also optimal at the same rate as the classic Hill estimator as long as the number of outliers, $k_0=o(k)$. Since for Pareto $\rho=\infty$, the optimal $k$ is $n-1$ where $n$ is the sample size. Sections  \ref{sec:adap} and \ref{sec:L-xi} which deal only with Pareto examples use this value of $k$. For Sections \ref{sec:no-out} and \ref{sec:non-pareto-out} which deal with non-Pareto examples,  $k$ is chosen approximately around the optimal value as in Relation \eqref{e:k-opt}. 

In Section \ref{sec:no-out}, we demonstrate the performance of the adaptive trimmed Hill estimators in the regime of no outliers. In this scenario,  the classic Hill   estimator (recall Relation \eqref{e:hill}) is an asymptotically optimal estimator of $\xi$ (see \cite{Hill:1975}) and is therefore used as the comparative baseline.

{\bf \vspace{2mm} Outlier Scenarios:} In Sections \ref{sec:adap}, \ref{sec:L-xi} and \ref{sec:non-pareto-out}, we demonstrate the performance of  the adaptive trimmed Hill estimator in the presence of outliers. We next discuss the mechanism of outlier injection which introduces outliers in the extreme observations of the data as follows:
\begin{enumerate}
	\item {\em Exponentiated Outliers:} The top $k_0$ order statistics are perturbed as follows:  
	\begin{equation}
	\label{e:exp-trans-0}
	X_{(n-i+1,n)}:= X_{(n-k_0,n)}+(X_{(n-i+1,n)}-X_{(n-k_0,n)}))^L, \hspace{5mm} i=1,\cdots,k_0, \hspace{5mm} 
	\end{equation}
	\item {\em Scaled Outliers:} The top $k_0$ order statistics are perturbed as
	\begin{equation}
	\label{e:scl-trans-0}
	X_{(n-i+1,n)}:= X_{(n-k_0,n)}+C(X_{(n-i+1,n)}-X_{(n-k_0,n)})), \hspace{5mm} i=1,\cdots,k_0, \hspace{5mm}
	\end{equation}
	\item{\em Mixed Outliers:} For $\mathcal{S}=\{s:X_s>\tau\}$, set \begin{equation}
	\label{e:mix-trans-0}{\{X_s\}}_{s \in \mathcal{S}}=M \tau, \hspace{5mm} M>1\end{equation}
	Thus, observations above a given threshold $\tau$ are perturbed. 
\end{enumerate}

Note that, all three nature of outliers preserve the order of the bottom $(n-k_0)$ order statistics. The exponentiated and scaled outliers preserve the order of the top $k_0$ order statistics as well. The case of mixed outliers is a challenging one because the trimming parameter $k_0$, though controlled by $\tau$, is random and not well defined. In contrast, $k_0$ is fixed and well defined for exponentiated and scaled outliers. Thus, for exponentiated and scaled outliers, we demonstrate the efficiency of the weighted sequential testing algorithm in determining $k_0$.

{\bf \vspace{2mm} Competing Robust Estimators:} In the presence of outliers, the adaptive trimmed Hill estimator is indeed a robust estimator of the tail index $\xi$. Thus, for a comparative baseline we use two other  robust estimators of the tail index in Sections \ref{sec:adap}, \ref{sec:L-xi} and \ref{sec:non-pareto-out}. These are the optimal B-robust estimator  of \cite{CJS:CJS247} and the generalized median estimator  of \cite{MR1856199}.  These estimators are indexed by two different ARE values viz 78\% and 94\% to allow for varying degrees of robustness. The constant $c$ which serves as a bound on the influence function (IF) controls for the degree of robustness for optimal B-robust estimator (see Relations (2) and (3) in \cite{CJS:CJS247}). The values $c=1.63$ and $c=2.73$ result in $78\%$ and $94\%$ asymptotic relative efficiency (ARE) values for the optimal B-robust estimator. Similarly, the parameter $\kappa$ which controls for the subset size in defining the generalized median statistic controls for the degree of robustness for generalized median estimator  (see Relation 2.2 in \cite{MR1856199}). Indeed, the values $\kappa=2$ and $\kappa=5$ produces  ARE values $78\%$ and $94\%$ respectively for the generalized median estimator. Other robust estimators of the tail index like the probability integral transform statistic estimator  of \cite{pitse} and the partial density component estimator  of  \cite{Vandewalle:2007:RET:1280299.1280640} were also considered but their results have been omitted for brevity.

\subsection{Case of No Outliers}
\label{sec:no-out}

For the three distribution models in Relation \eqref{e:heavy-dist}, we report the performance of the adaptive trimmed Hill estimator (ADAP) under the regime of no outliers. The classic Hill estimator (HILL) is used as the comparative baseline. Figure \ref{fig:h0} gives the $\sqrt{MSE}$ values for the ADAP and the HILL as a function of $k$  for the distributions for Pareto(1,$\xi$), Burr(1,0.5,$\xi$), $|$T$|(\xi)$ with tail index $\xi=2$. The value of $k$ for which the HILL has the smallest  $\sqrt{MSE}$ for a distribution is determined by Relation \eqref{e:k-opt}. This explains the observed trend in $\sqrt{MSE}$ values as a function of $k$.

\begin{figure}[H]
	\centering
	\includegraphics[width=0.36\textwidth]{./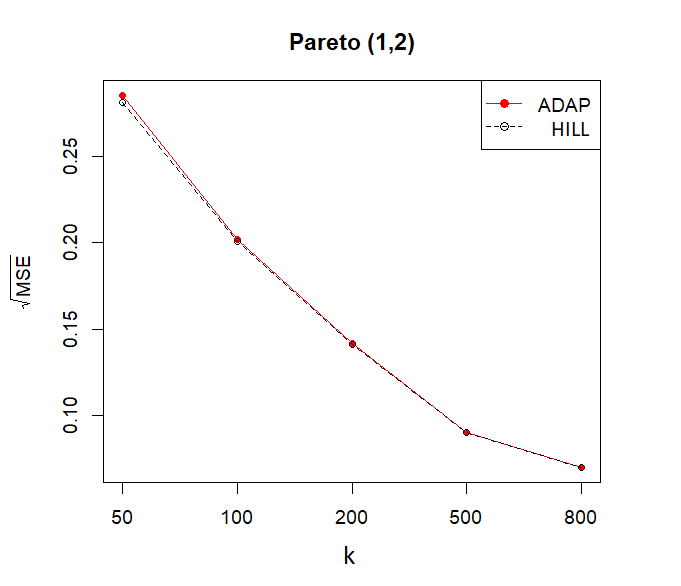}
	\hspace{-9mm}	\includegraphics[width=0.36\textwidth]{./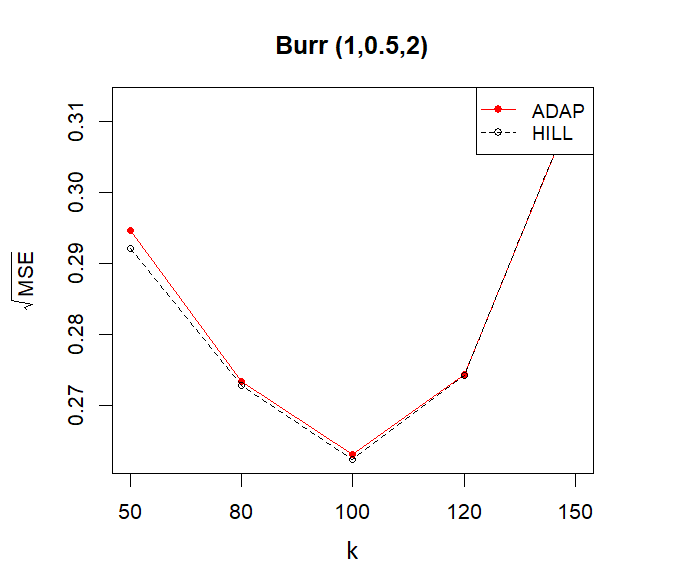}
	\hspace{-10mm}	\includegraphics[width=0.36\textwidth]{./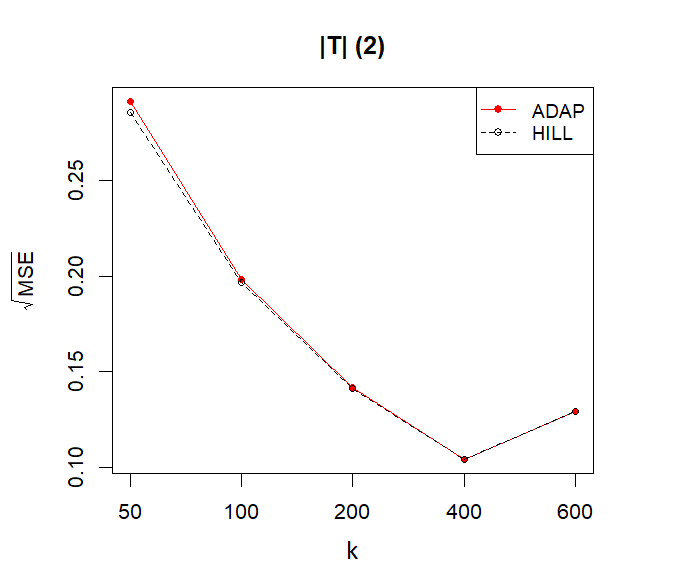}
	\vspace{-7mm}	\caption{$\sqrt{MSE}$ of the ADAP for $\xi=2$ and $k_0=0$.}
	\label{fig:h0}
\end{figure}
\vspace{-3mm}

\begin{table}[H]
	\centering
	\begin{tabular}{|c|c|c|c|c|c|}
		\hline
		Pareto(1,2) & 0.0500 (k=50) & 0.0472 (k=100) & 0.0496 (k=200) & 0.0448 (k=500) & 0.0556 (k=800)\\
		Burr(1,0.5,2) & 0.0476 (k=50)& 0.0496(k=80)& 0.0416 (k=100)& 0.0408 (k=150) & 0.0392 (k=200)\\
		$|$T$|$(2) &  0.0484 (k=50) & 0.0556 (k=100) & 0.0472 (k=200) & 0.0520 (k=400) & 0.0464 (k=600)\\\hline
	\end{tabular}
	\caption{Type 1 error of the weighted sequential testing algorithm for $k_0=0$.}
	\label{tab:type-I}
\end{table}

We observe that for a wide range of $k$, the ADAP  is virtually indistinguishable from the HILL irrespective of the distribution under study. This indicates  that the weighted sequential testing algorithm can precisely determine $k_0=0$ for the same wide range of $k$-values  as in Figure \ref{fig:h0}. Indeed, Table \ref{tab:type-I} shows that the algorithm attains the nominal significance level of $q=\mathbb{P}(\widehat{k}_0>0)=0.05$. This encouraging finite sample performance complements the theoretically established consistency of the algorithm in Theorem \ref{prop:U-conv}.

\subsection{Adaptive Robustness}
\label{sec:adap}

In this section, we study how the presence of outliers in the data influences the performance of the  adaptive trimmed Hill estimator (ADAP) and the weighted sequential testing algorithm. For clarity and simplicity, the data in this section are generated from Pareto as in Relation \eqref{e:heavy-dist} with $\sigma=1, \xi=2$ for varying sample sizes $n=100,300,500$.

The value of $k$ is fixed at $n-1$ which is indeed the optimal $k$ for the Pareto regime (see Relation \eqref{e:k-opt}). Section \ref{sec:non-pareto-out} illustrates the adaptive robustness phenomenon as explained in this section in the context of other heavy tailed models. Outliers are injected by Relations \eqref{e:exp-trans-0}, \eqref{e:scl-trans-0} and \eqref{e:mix-trans-0} with $L=3$, $C=200$ and $M=100$. Varying values of the parameter $k_0$ and $\tau$ are chosen to control for the number of outliers in the data.

Figures \ref{fig:ha-k0-exp}, \ref{fig:ha-k0-scl} and \ref{fig:ha-tau-mix} produce a plot of  the $\sqrt{MSE}$ for the ADAP for outlier generating mechanisms in Relations \eqref{e:exp-trans-0}, \eqref{e:scl-trans-0} and \eqref{e:mix-trans-0} respectively. For comparison, the performance of the optimal B-robust estimator (OBRE) and the generalized median estimator (GME) at 78\% and 94\% ARE levels have also been included. The figures clearly show that ADAP is uniformly the best estimator in terms $\sqrt{MSE}$. The figures also show an intriguing  adaptive robustness property of our estimator. Namely, its  $\sqrt{MSE}$ is nearly flat and grows slowly with increase in the degree of contamination (parametrized by either the number of outliers $k_0$ in Figures \ref{fig:ha-k0-exp} and \ref{fig:ha-k0-scl} or the threshold $\tau$ in Figure \ref{fig:ha-tau-mix}). On the other hand, the competing estimators break down completely with increase in the degree of contamination. This can be explained as: the competing estimators must be calibrated to a predefined level of robustness by setting their ARE level in advance.  To the best of our knowledge, none of the existing works in the literature provide a data-driven method for selecting this optimal ARE value. In contrast,  the trimming parameter $k_0$ involved in the ADAP is estimated from the data itself which allows it to adapt itself to  unknown degrees of contamination in the data.

Figures \ref{fig:ha-k0-exp} and \ref{fig:ha-k0-scl} show that whenever the target ARE value is greater than $(1-k_0/n)\times 100\%$, the performance of the ADAP is much superior to that of the competing estimators. For example, the   OBRE-94 and the GME-94 breakdown completely where $1-k_0/n\leq 0.9$ ($n=100, k_0\geq 15$ and $n=300,  k_0\geq 30$). Similarly, the performance of the OBRE-78 and the GME-78 is drastically poor where $1-k_0/n\leq 0.7$ ($n=100, k_0\geq 30$). An estimator indexed by a higher ARE value has greater efficiency provided $ARE\leq(1-k_0/n)\%$. This explains why the performance of the OBRE-78 and the GME-78 is quite poor in comparison to that of the OBRE-94 and the GME-94 where $1-k_0/n \leq 0.95$ ($n=100, k_0\leq 5$ and $n=300, k_0\leq  15$). By automatically estimating the number of outliers, ADAP not only produces an estimator of $\xi$ robust to varying levels of data contamination but also provides a methodology for outlier detection in the extremes of heavy tailed models. 

\vspace{-3mm}
\begin{figure}[H]
	\centering
	\hspace{-1mm}	\includegraphics[width=0.36\textwidth]{./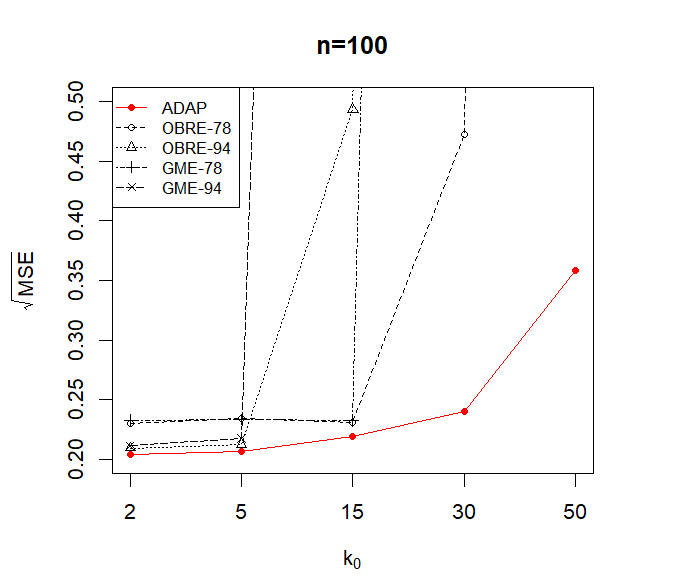}
	\hspace{-9.5mm}	\includegraphics[width=0.36\textwidth]{./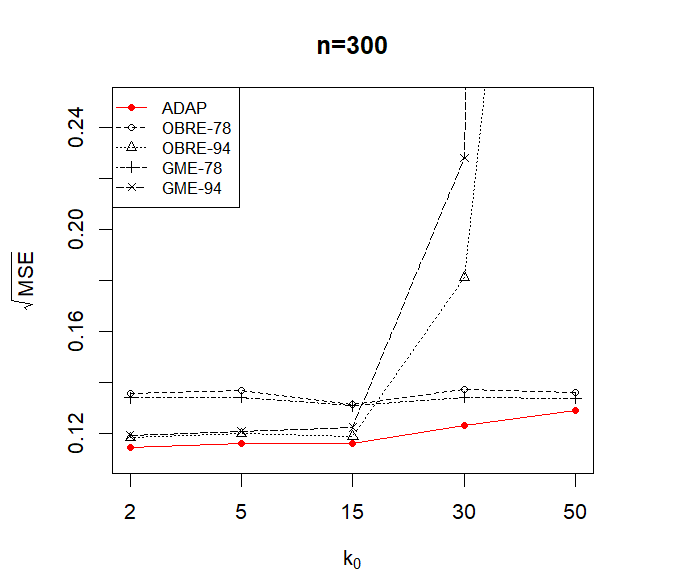}
	\hspace{-9.5mm}	\includegraphics[width=0.36\textwidth]{./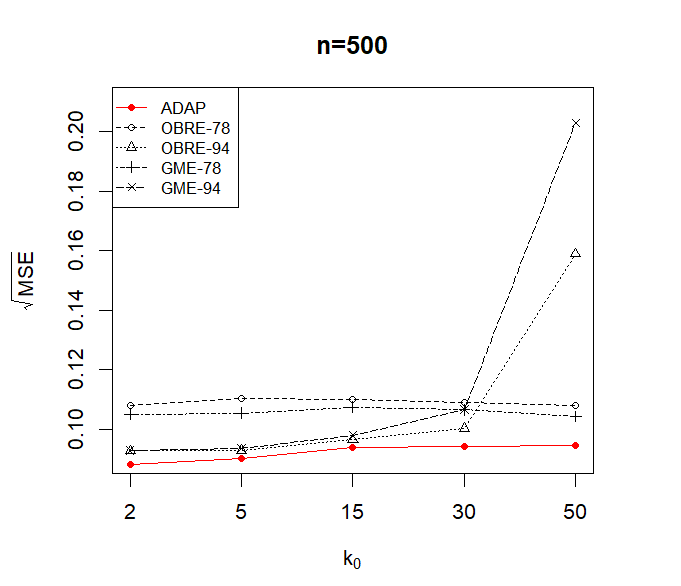}
	\caption{$\sqrt{MSE}$ of ADAP for Pareto(1,2) with exponentiated outliers: $L=3$, varying $k_0$.}
	\label{fig:ha-k0-exp}
\end{figure}

\vspace{-5mm}
\begin{figure}[H]
	\centering
	\hspace{-1mm}	\includegraphics[width=0.36\textwidth]{./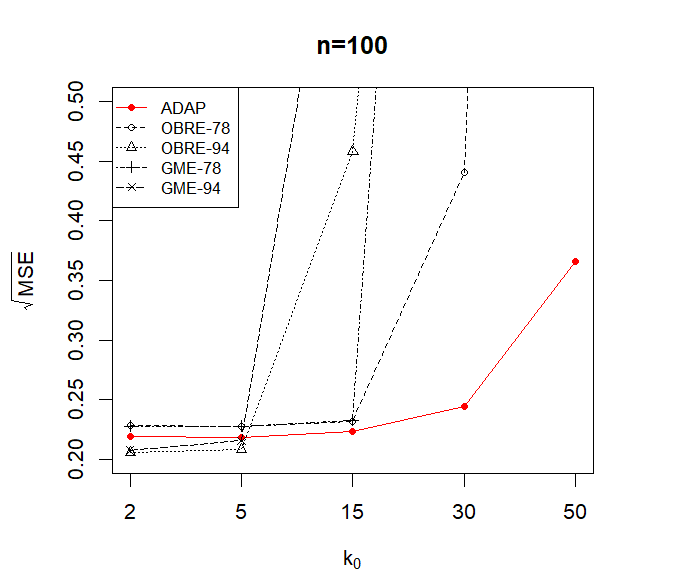}
	\hspace{-9.5mm}	\includegraphics[width=0.36\textwidth]{./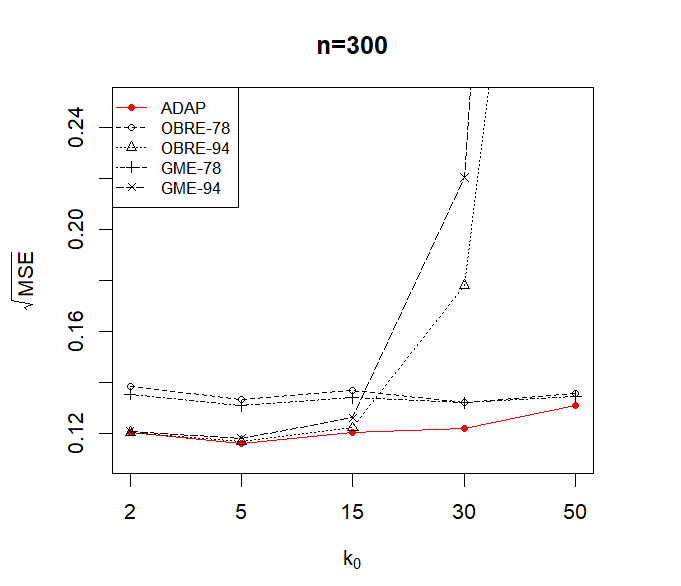}
	\hspace{-9.5mm}	\includegraphics[width=0.36\textwidth]{./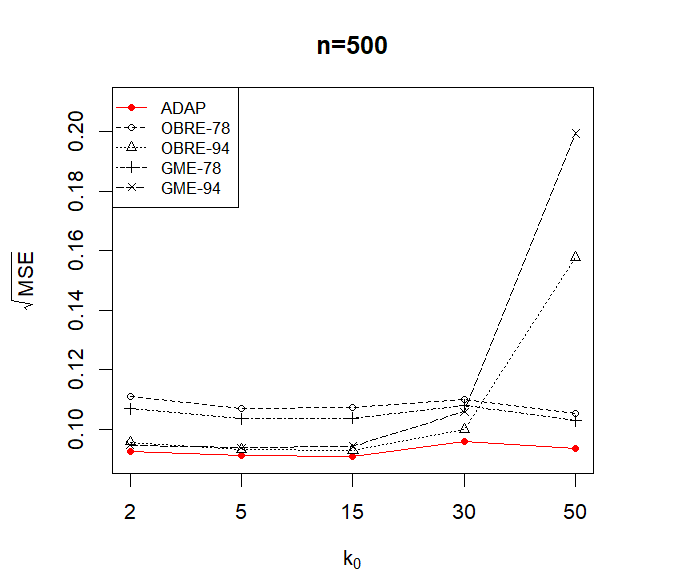}
	\caption{$\sqrt{MSE}$ of ADAP for Pareto(1,2) with scaled outliers: $C=200$, varying $k_0$.}
	\label{fig:ha-k0-scl}
\end{figure}

\vspace{-5mm}
\begin{figure}[H]
	\centering
	\hspace{-1mm}	\includegraphics[width=0.36\textwidth]{./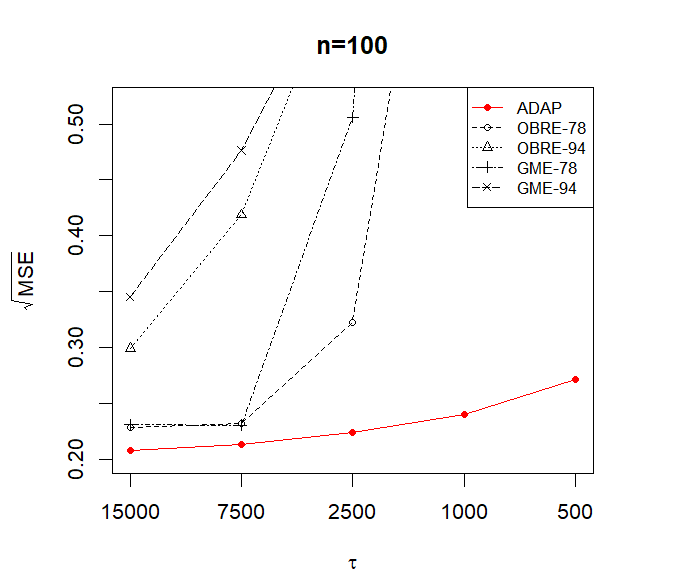}
	\hspace{-9.5mm}	\includegraphics[width=0.36\textwidth]{./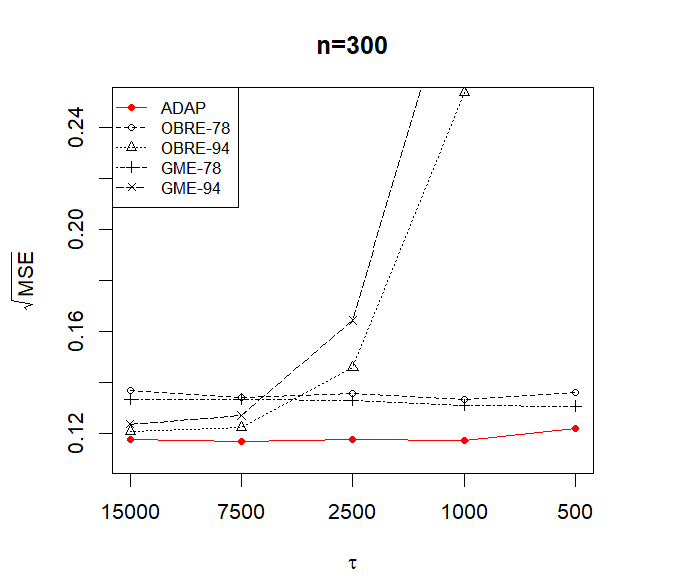}
	\hspace{-9.5mm}	\includegraphics[width=0.36\textwidth]{./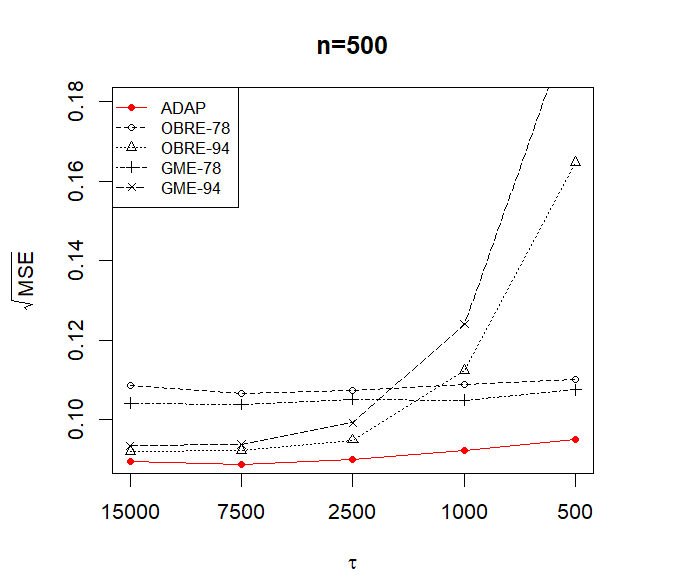}
	\caption{$\sqrt{MSE}$ of ADAP for Pareto(1,2) with mixed outliers: $M=100$, varying $\tau$.}
	\label{fig:ha-tau-mix}
\end{figure}

Indeed, Tables \ref{tab:ha-k0-exp} and \ref{tab:ha-k0-scl}  which produce the mean and standard errors of $\widehat{k}_0$ for outliers injected by mechanisms \eqref{e:exp-trans-0} and  \eqref{e:scl-trans-0}, show that for all values of $n$, the weighted sequential testing algorithm picks up the true number of outliers $k_0$  for almost all values $k_0$ (exception is $k_0=2$ for scaled outliers).

\begin{table}[H]
	\centering
	\begin{tabular}{|c|c|c|c|c|c|}
		\hline
		$n$ &  $k_0=2$ & $k_0=5$ & $k_0=15$ & $k_0=30$ & $k_0=50$\\\hline
		100 &$2.19\pm 1.42$&$5.10\pm 1.04$& $14.99\pm 0.51$  &$ 29.84\pm 0.41$& $49.47\pm  0.78$\\
		300 & $2.23\pm 1.61$ & $5.08\pm 0.95$ & $14.98\pm 0.44$ & $29.85\pm 0.44$ & $49.55\pm 0.70$\\
		500 &$ 2.17\pm 1.19$ & $5.20\pm 3.95$ & $ 14.98\pm  0.49$ & $ 29.85\pm 0.39$ & $49.55 \pm 0.70$\\\hline
	\end{tabular}
	\caption{{$\mathbb{E}(\widehat{k}_0)\pm {\rm Standard\:\:Error}(\widehat{k}_0)$ for Pareto(1,2) with exponentiated outliers, $L=3$.}}
	\label{tab:ha-k0-exp}
\end{table}

\vspace{-3mm}
\begin{table}[H]
	\centering
	\begin{tabular}{|c|c|c|c|c|c|}
		\hline
		$n$ & $k_0=2$ & $k_0=5$ & $k_0=15$ & $k_0=30$ & $k_0=50$
		\\\hline
		100 &$1.10\pm 2.09$& $4.66\pm 1.87$& $14.91\pm 0.90$ & $29.89\pm 0.70$ & $49.68\pm 3.01$\\
		300 &$1.06\pm 1.85$ & $4.68\pm 1.75$ & $14.94\pm 1.02$& $29.91\pm 0.84$ & $49.88 \pm 0.39$\\
		500 &$1.09\pm 1.96$& $4.69\pm 1.83$ & $14.91\pm 0.82$ & $29.97\pm2.81$ & $49.89\pm  0.37$
		\\\hline
	\end{tabular}
	\caption{{$\mathbb{E}(\widehat{k}_0)\pm {\rm Standard\:\:Error}(\widehat{k}_0)$ for Pareto(1,2) with scaled outliers, $C=200$.}}
	\label{tab:ha-k0-scl}
\end{table}

\subsection{\bf Impact of Outlier Severity and Tail Index.}
\label{sec:L-xi}

In this section, we study the influence of the magnitude of outliers and tail index on the performance of the adaptive trimmed Hill estimator (ADAP) for Pareto observations with sample size $n=500$. The conclusions were similar for other heavy tailed models explored.

\vspace{-2mm}
\begin{figure}[H]
	\centering
	\includegraphics[width=0.32\textwidth]{./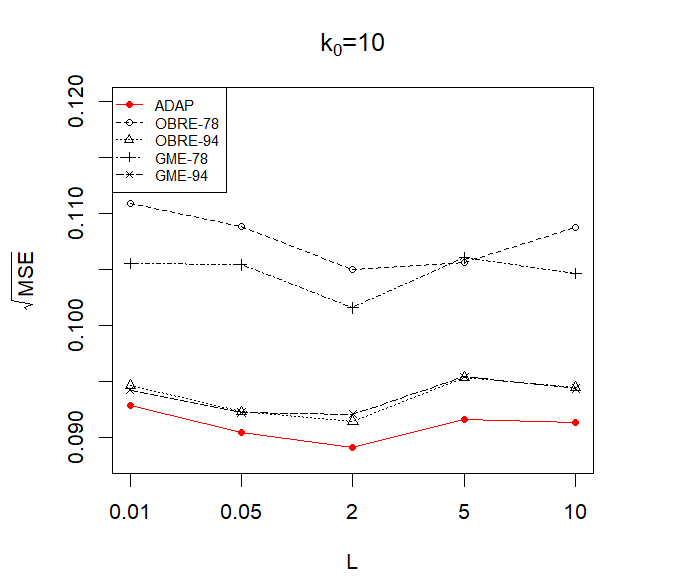}
	\includegraphics[width=0.32\textwidth]{./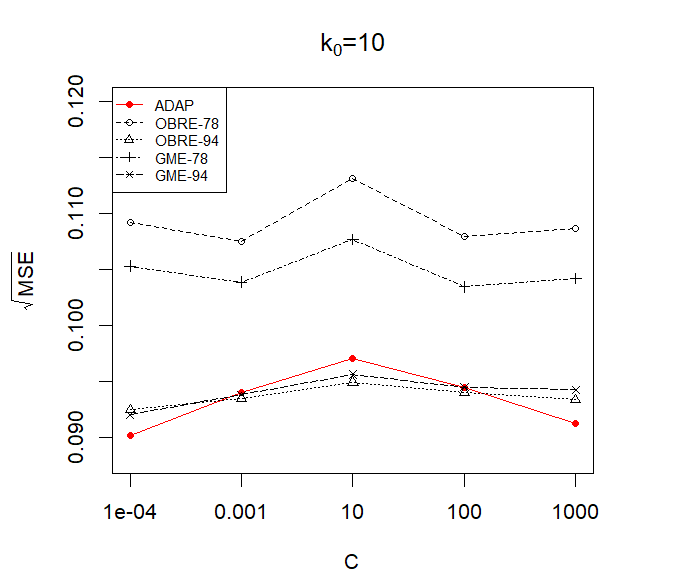}
	\includegraphics[width=0.32\textwidth]{./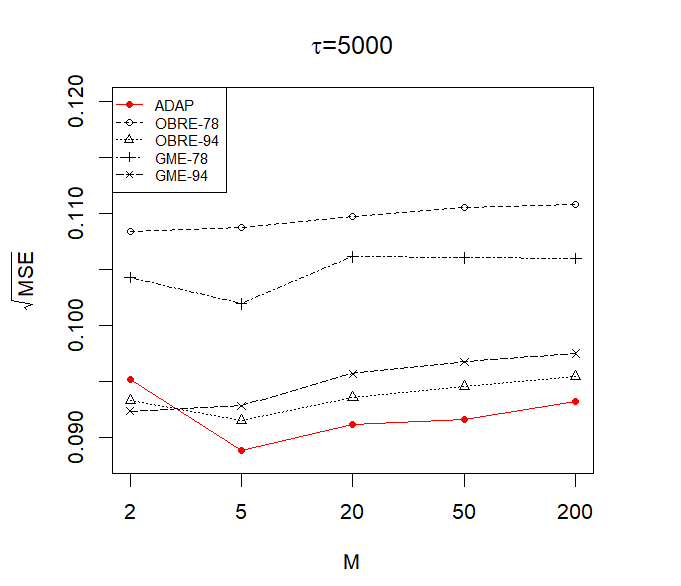}
	\caption{$\sqrt{MSE}$ of ADAP for Pareto(1,2). {\em Left:} Exponentiated  outliers with varying $L$. {\em Middle:} Scaled outliers with varying $C$. {\em Right:} Mixed outliers with varying $M$.}
	\label{fig:ha-L}
\end{figure}

We begin with the impact of outlier severity on the performance of the ADAP. For outlier generating mechanisms in Section \ref{sec:sim-setup}, the outlier severity is controlled by the parameters $L$, $C$ and $M$. The data generating model is Pareto as in Relation \eqref{e:heavy-dist} with $\sigma=1$ and $\xi=2$. Figure \ref{fig:ha-L} produces a plot of  the $\sqrt{MSE}$ for ADAP for outlier generating mechanisms in Relations \eqref{e:exp-trans-0}, \eqref{e:scl-trans-0} and \eqref{e:mix-trans-0} with $k_0=10$, $\tau=5000$ and varying $L,C$ and $M$. For comparison, $\sqrt{MSE}$ values for the optimal B-robust estimator (OBRE) and the generalized median (GME) at 78\% and 94\% ARE levels have also been included. The ADAP  performs better than both the OBRE and the GME for almost all values of $L$, $C$ and $M$ no matter what their ARE levels is. The only exception is $C=10$ for the case scaled outliers (see Relation \eqref{e:scl-trans-0}). Though more robust, the estimators  the OBRE-78 and the GME-78 perform poorly at lower levels of contamination in the data. This explains their inferior behavior at $n=500, k_0=10$ where the degree of contamination is only $5\%$.

\begin{table}[H]
	\centering
	\begin{tabular}{|c|c|c|c|c|c|}
		\hline
		Exponentiated outliers	 &$L=0.01$ & $L=0.05$ & $L=2$ & $L=5$ & $L=10$\\
		&$9.74 \pm 1.06$ & $9.59 \pm  1.11$& $9.91 \pm  0.51$& $10.05 \pm 0.87 $& $10.05 \pm 0.71$\\\hline
		Scaled outliers &	$C=0.0001$ & $C=0.001$ & $C=10$ & $C=100$ & $C=1000$\\
		&	$10.05\pm 0.70$ &$8.83\pm 2.17$ &$3.86\pm 4.73$ &$9.57\pm 1.83 $&$10.03\pm 0.61$\\\hline
	\end{tabular}
	\caption{$\mathbb{E}(\widehat{k}_0)\pm {\rm Standard\:\:Error}(\widehat{k}_0)$ for Pareto (1,2) with $k_0=10$ outliers.}
	\label{tab:ha-L}
\end{table}

The superiority of the ADAP grows with increase in the magnitude of the outliers. For exponentiated and scaled outliers, the increase in magnitude is manifested through increasing values of $|\log(L)|$ and $|\log(C)|$, respectively\footnote{ $|\log x|$ is an increasing function of $x$ for $x>1$ and a decreasing function of $x$ for $x<1$.}.  For mixed outliers, the increase in magnitude occurs with the  increase in the value of $M$.  With an increase in magnitude, the weighted sequential testing algorithm can correctly detect the true number of outliers $k_0$ (see Table \ref{tab:ha-L}) and hence the greater efficiency of ADAP.

We next study the impact of the tail index $\xi$ on the performance of ADAP. The data generating model is Pareto as in Relation \eqref{e:heavy-dist} with $\sigma=1$ and varying values of $\xi$. Outliers are injected according to Relations \eqref{e:exp-trans-0}, \eqref{e:scl-trans-0} and \eqref{e:mix-trans-0}   with $k_0=10$, $\tau=5000$, $L=3$, $C=200$ and $M$.  Figure \ref{fig:ha-xi} produces a plot of the $\sqrt{MSE}$ values for the ADAP along with those of the OBRE and the GME at 78\% and 94\% ARE levels. The performance of the ADAP is superior to that of the remaining estimators.
For exponentiated  and mixed outliers, the improvement is even more prominent at larger values of $\xi$. This is because for the same values of $L$ and $M$, the severity of outliers is greater for heavier tails ($\xi=2.5$) than lighter ones ($\xi=0.5$).  In contrast, for scaled outliers, the improvement is more prominent at smaller $\xi$ values. This is because for the same value of $C$, the severity of outliers is greater for lighter tails than heavier ones. This is in consensus with the findings of  Table \ref{tab:ha-xi} where the  accuracy of  the weighted sequential testing algorithm in correctly estimating the true number of outliers improves  with increase in $\xi$ for exponentiated and mixed outliers and deteriorates with increase in $\xi$ for scaled outliers.

\vspace{-2mm}
\begin{figure}[H]
	\centering
	\hspace{-1mm}	\includegraphics[width=0.36\textwidth]{./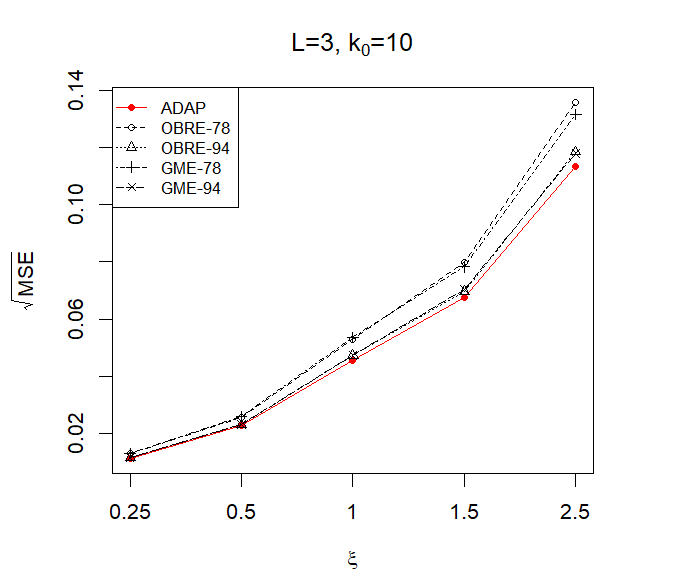}
	\hspace{-10mm}	\includegraphics[width=0.36\textwidth]{./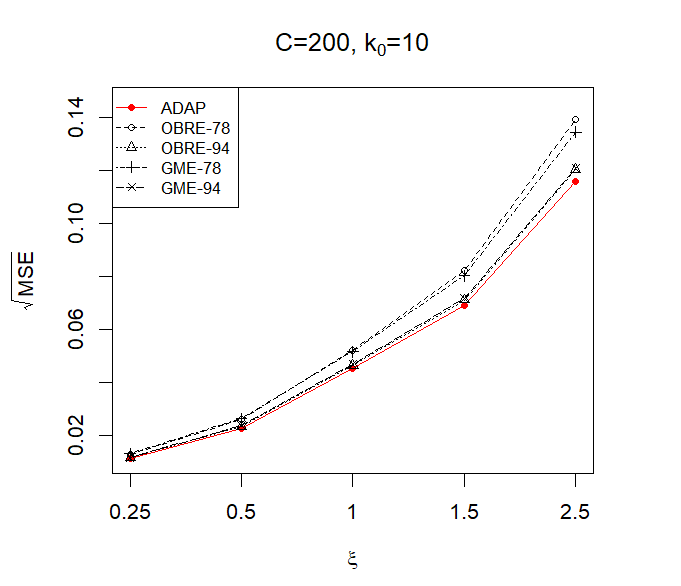}
	\hspace{-10mm}	\includegraphics[width=0.36\textwidth]{./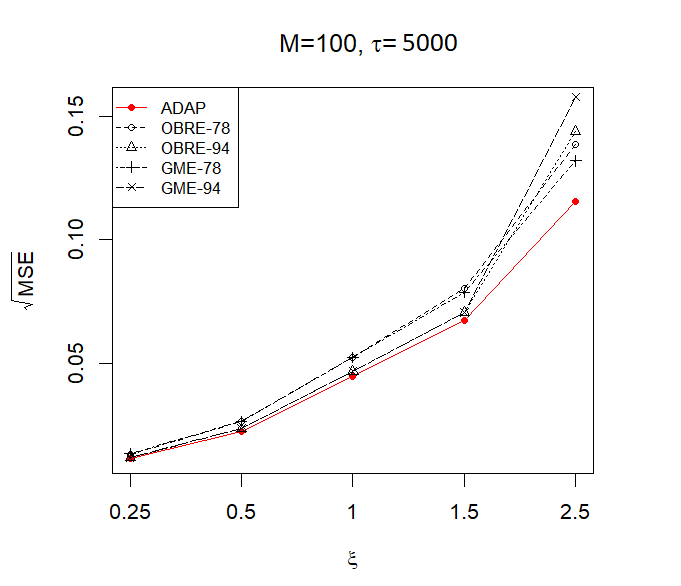}
	\caption{$\sqrt{MSE}$ of ADAP for Pareto(1,$\xi$) for varying $\xi$. {\em Left:} Exponentiated  outliers. {\em Middle:} Scaled outliers. {\em Right:} Mixed outliers.}
	\label{fig:ha-xi}
\end{figure}

\begin{table}[H]
	\centering
	\begin{tabular}{|c|c|c|c|c|c|}
		\hline
		& $\xi=0.25$ & $\xi=0.5$ & $\xi=1$ & $\xi=1.5$ & $\xi=2.5$\\\hline
		Exponentiated outliers &	$5.68\pm 4.23$&$ 7.33\pm 2.32$&$ 9.57\pm 1.03$&$  9.95\pm 0.73 $&$ 10.03\pm  0.51$\\\hline
		Scaled outliers	&$10.00\pm 0.59$&$ 10.01\pm 1.18$&$  9.98\pm 0.72$&$ 9.99\pm 1.05$&$ 9.79\pm  1.47$\\\hline
	\end{tabular}
	\caption{$\mathbb{E}(\widehat{k}_0)\pm {\rm Standard\:\:Error}(\widehat{k}_0)$ for Pareto (1,$\xi$) with $k_0=10$ outliers for $L=3$ and $C=200$.}
	\label{tab:ha-xi}
\end{table}

\subsection{Outliers in Non Pareto distributions}
\label{sec:non-pareto-out}

In this section,  $n=1000$  sample points are generated from non-Pareto distributions as in Relation \eqref{e:heavy-dist}. These include be the $|{\rm T}|$($\xi$) and the Burr($\eta$,$\lambda$,$\xi$) distribution with $\xi=2$, $\eta=1$ and $\lambda=0.5$.  Outliers are injected by mechanisms \eqref{e:exp-trans-0}, \eqref{e:scl-trans-0} and \eqref{e:mix-trans-0} for by $L=3$, $C=200$, $M=100$, $k_0=10$ and $\tau=5000$. The adaptive trimmed Hill estimator (ADAP) is constructed  for $k$ in the neighborhood of its optimal\footnote{Optimal $k$ is the one which produces the asymptotically minimum variance for the classic Hill estimator. The optimal $k$ for the trimmed Hill estimator is also of the same order as that of the classic Hill estimator (see Theorem \ref{e:t:uniform-consistency}). For a sample of size $n=1000$, the optimal $k$ is $464$ and $97$ for $|T|$ and Burr distributions respectively (see Relation \eqref{e:k-opt}).} value.

\vspace{-3mm}
\begin{figure}[H]
	\centering
	\hspace{-1mm} \includegraphics[width=0.36\textwidth]{./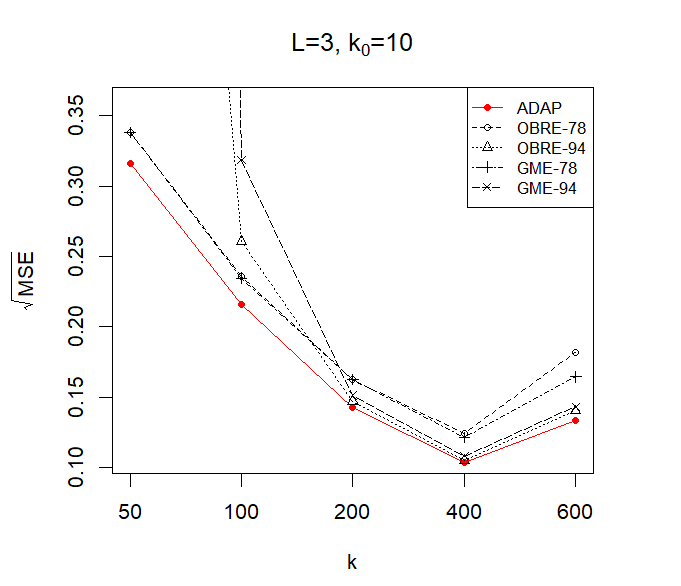}
	\hspace{-8.5mm}\includegraphics[width=0.36\textwidth]{./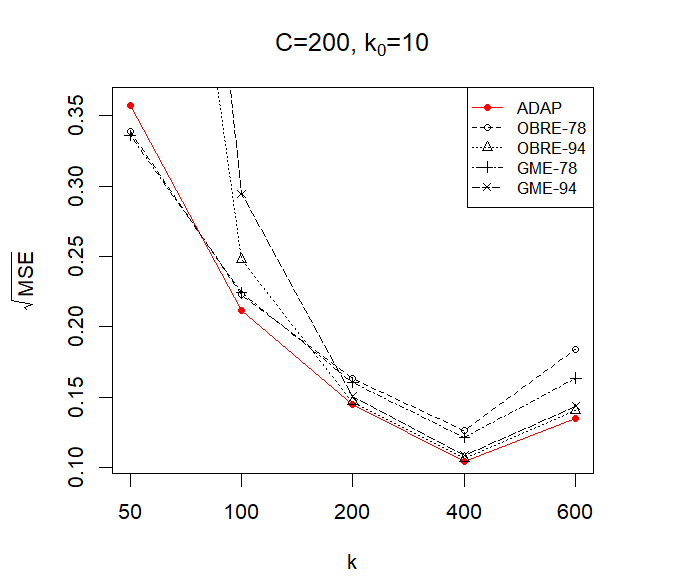}
	\hspace{-8.5mm}\includegraphics[width=0.36\textwidth]{./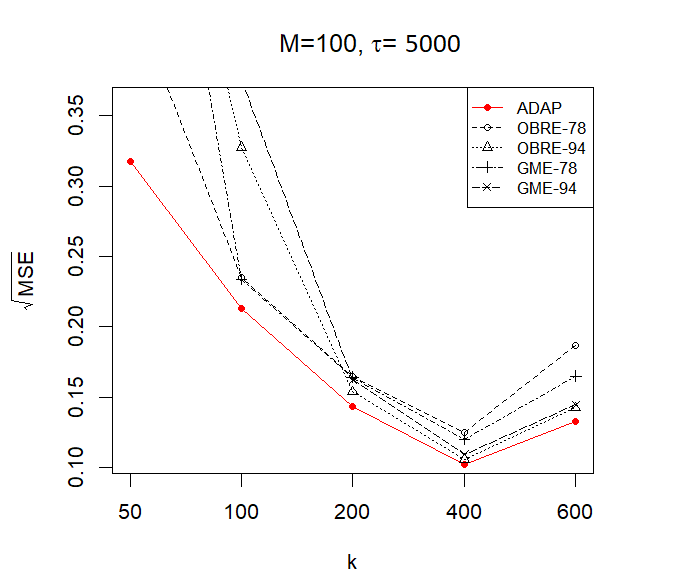}
	\caption{$\sqrt{MSE}$ of ADAP for $|$T$|$(2) as a function of $k$. {\em Left}: Exponentiated Outliers. {\em Middle}: Scaled Outliers. {\em Right}: Mixed Outliers.}
	\label{fig:ha-t}
\end{figure}

\vspace{-5mm}
\begin{figure}[H]
	\centering
	\hspace{-2mm} 	\includegraphics[width=0.36\textwidth]{./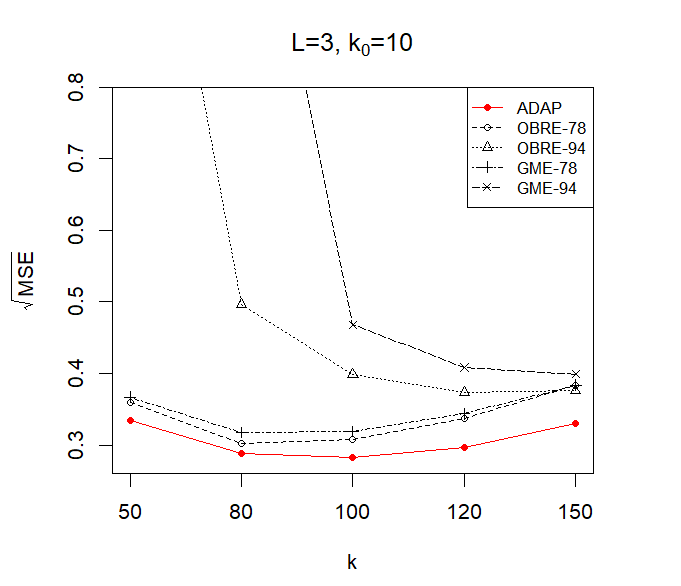}
	\hspace{-10mm}	\includegraphics[width=0.36\textwidth]{./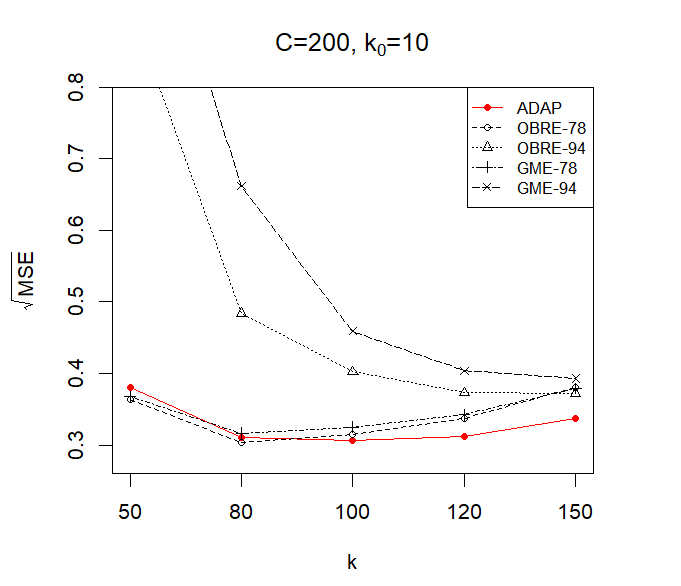}
	\hspace{-10mm}	\includegraphics[width=0.36\textwidth]{./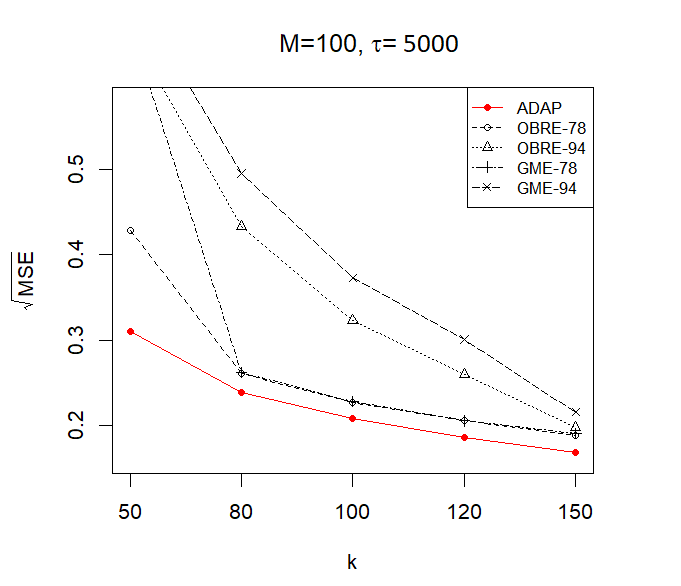}
	\caption{$\sqrt{MSE}$ of adaptive trimmed Hill for Burr(1,0.5,2) as a function of $k$. {\em Left}: Exponentiated Outliers. {\em Middle}: Scaled Outliers. {\em Right}: Mixed Outliers.}
	\label{fig:ha-bur}
\end{figure}

Figures \ref{fig:ha-t} and \ref{fig:ha-bur} display the performance of the adaptive trimmed Hill estimator (ADAP) for $|{\rm T}|(2)$ and Burr(1,0.5,2), distributions, respectively together with that of the optimal B-robust estimator (OBRE) and the generalized median estimator (GME).  Overall, the  ADAP is uniformly better than the OBRE and the GME. Exceptions include small values of $k$ for the scaled outliers. For $k<100$, the OBRE-94 and the GME-94 break down completely irrespective of the nature of the outliers and distribution under study. The OBRE-78 and the GME-78, though more robust than the OBRE-94 and the GME-94, cannot surpass the efficiency of the ADAP. Also for mixed outliers, even the OBRE-78 and the GME-78 break down for $k=50$. This is because the OBRE and the GME are immune to outliers only if their target ARE  value is less than the ratio $1-k_0/k$. This is another manifestation of the fact that the OBRE and the GME, unlike ADAP are not adaptive to the unknown levels of contamination in the extremes (see also Figures \ref{fig:ha-k0-exp}, \ref{fig:ha-k0-scl} and \ref{fig:ha-tau-mix} in Section \ref{sec:adap}).

\vspace{-3mm}
\begin{table}[H]
	\centering
	\begin{tabular}{|c|c|c|c|c|c|}
		\hline
		$k$ &  $k=50$ & $k=100$ & $k=200$ & $k=400$ & $k=600$\\\hline
		Exponentiated outliers & $10.04\pm 0.91$ & $ 10.01\pm 0.66$ & $ 10.02\pm 0.72$ & $ 10.02\pm 0.82$ & $ 10.02\pm  0.78$\\
		Scaled outliers &$9.74\pm 1.71$ & $  9.86\pm 1.44$ & $  9.91\pm 1.00$ & $  9.91\pm 0.95 $ & $ 9.87\pm  0.98$\\\hline
	\end{tabular}
	\caption{$\mathbb{E}(\widehat{k}_0)\pm {\rm Standard\:\:Error}(\widehat{k}_0)$ for $|$T$|(2)$ for $L=3$, $C=200$ and $k_0=10$.}
	\label{tab:ha-t}
\end{table}

\vspace{-3mm}

\begin{table}[H]
	\centering
	\begin{tabular}{|c|c|c|c|c|c|}
		\hline
		$k$ &  $k=50$ & $k=80$ & $k=80$ & $k=100$ & $k=120$\\\hline
		Exponentiated outliers &$10.01\pm0.71$ & $ 10.00\pm 0.67$ & $ 10.01\pm 0.88$ & $ 9.98\pm 0.43$ & $ 9.99\pm  0.49$\\
		Scaled outliers &$ 9.69\pm 1.72$ & $  9.76\pm 1.58$ & $  9.76\pm 1.36$ & $ 9.78\pm 1.46$ & $ 9.77\pm  1.29$\\\hline
	\end{tabular}
	\caption{$\mathbb{E}(\widehat{k}_0)\pm {\rm Standard\:\:Error}(\widehat{k}_0)$ for Burr(1,0.5,2) distribution for for $L=3$, $C=200$ and $k_0=10$.}
	\label{tab:ha-bur}
\end{table}

Due to their slow rate convergence to Pareto tails, both Burr and $|{\rm T}|$ are difficult cases to analyze. For the Burr distribution with $\rho=1$, the rate of convergence is further slower than that of the $|{\rm T}|$ with $\rho=2$. However, the ADAP performs well even in this challenging regime. This can be attributed to the accuracy of the weighted sequential testing algorithm which correctly identifies true number of outliers $k_0$ irrespective of the distribution under study for a wide range of $k$-values (see Tables \ref{tab:ha-t} and \ref{tab:ha-bur}).

\section{Application}
\label{sec:real}

In this section, we apply our weighted sequential testing algorithm and adaptive trimmed Hill estimator to real data. Two data sets have been explored in this context. The first one provides the calcium content in the Condroz region of Belgium \cite{goegebeur} (also analyzed in {\bf https://shrijita-apps.shinyapps.io/adaptive-trimmed-hill/})). The data is indeed heavy tailed and has already been explored in the works of \cite{beirqq} and \cite{VandewalleCa}. The second data set involves insurance claim settlements \cite{freclaim}. Both these data sets on analysis revealed the presence of outliers in the extremes and are therefore suitable for the application of our methodology.

\subsection{Condroz Data set}
\label{sec:condroz}

Figure \ref{fig:condroz} produces exploratory plots for the  {\em Condroz data set} of \cite{goegebeur}  which measures the calcium content of soil samples together with their pH levels in the Condroz  region of Belgium. As in \cite{VandewalleCa}, the conditional distribution of the calcium content for pH levels lying between 7-7.5 have been considered.  The left and middle panels use the value of $k=85$ based on the $k_{\rm opt}$ value from \cite{VandewalleCa}. The left panel displays a pareto quantile plot  \cite{beirqq} of the data where an apparent linear trend indicates Pareto distributed observations. Nearly six data points show up as outliers in the pareto quantile plot. This has already been observed in \cite{goegebeur}  but no principled methodology for the identification of such outliers has been proposed. Our trimmed Hill estimator  (recall Relation \eqref{e:xi-opt}) diagnostic plot in the middle panel also shows a change point in the values of the trimmed Hill statistics at $k_0=6$.  On applying the weighted sequential testing algorithm with type I error $q=0.05$, we formally identify exactly $k_0=6$ outliers for this data set\footnote{ The ties in the data are broken using a  suitable dithering technique like adding a small perturbation $\epsilon \sim  U(0,0.1)$ to the data or considering unique values in the data \label{dither}}.  This is in consensus with the findings of  \cite{goegebeur} and \cite{VandewalleCa}.

 The right panel in Figure \ref{fig:condroz} displays the values trimmed Hill estimator as a function of $k$ for $k_0=\widehat{k}_0=6$. Also displayed as a function of $k$ are the values  of the estimators, classic Hill and  biased Hill with $k_0=6$ (recall Relations \eqref{e:hill} and \eqref{e:xi-trimmed}). The robust estimator of $\xi$ as reported in the analysis of \cite{VandewalleCa} is same as that of the biased Hill. When compared with the trimmed Hill, the classic Hill plot produces much larger estimates and the biased Hill plot produces much smaller estimates of the tail index $\xi$.  This can be explained by the apparent upward trend in the outliers as shown in left and middle panels of Figure \ref{fig:condroz}. Thus, ignoring the presence of outliers by either using the classic Hill estimator or by naively truncating them and using the biased Hill statistics can lead to large discrepancies in the tail index values. The trimmed Hill estimator with $\widehat{k}_0=6$, which is in fact our adaptive trimmed Hill estimator discussed in Section \ref{sec:sim-setup}, produces more credible estimates of the tail index $\xi$. 

\subsection{French Claims Data Set}

Next, we consider a data set of claim settlements issued by a private insurer in France for the time period 1996-2006 from \cite{freclaim}. We investigate the payments of claim settlements  for the year 2006. Figure \ref{fig:freclaim} produces exploratory plots of this data where the left and middle panels use the value of $k=130$. The left panel displays a pareto quantile plot  \cite{beirqq} of the data where an apparent linear trend  indicates Pareto distributed observations as well as a large number of outliers. Nearly thirty three data points show up as outliers in the pareto quantile plot. This is further confirmed by the diagnostic plot in the middle panel where a change point in the values of trimmed Hill  statistics is evident at $k_0\approx 33$. On applying the weighted sequential testing with $q=0.05$, we identify $k_0\approx 33$ outliers for this data set\footref{dither}.

In contrast to the case of Condroz data set (Figure \ref{fig:condroz} right panel), now the both classic and biased Hill plots lie under the trimmed Hill plot (see the right panel of Figure \ref{fig:freclaim} constructed with $k_0=33$ and varying $k$). This can be explained by the apparent downward trend in the outliers as shown in left and middle panels of Figure \ref{fig:freclaim}. 

Observe that the trimmed Hill plot in Figure \ref{fig:freclaim} (right panel) has a rather high peak for $k$ close to $k_0$, but then it 
quickly stabilizes around the value of $2$, when $k$ grows. It is well-known that except in the ideal Pareto setting, the classic Hill plot 
can be quite volatile for small values of $k$ (see Figure 4.2 in \cite{resnick:2007}).  The same holds for the trimmed Hill plots, but ultimately, in 
Figure \ref{fig:freclaim} for a wide range of $k$'s the trimmed Hill plot is relatively stable and it provides more reliable estimates of $\xi$ 
than the classic and biased Hill plots therein.  This simple analysis shows that ignoring or not adequately treating extreme outliers can lead to significant underestimation of the tail index $\xi$. This in turn can result in severe underestimation of the tail of loss distribution with detrimental effects to the insurance industry.

\begin{figure}[H]
	\hspace{-2mm}	\includegraphics[width=0.37\textwidth]{./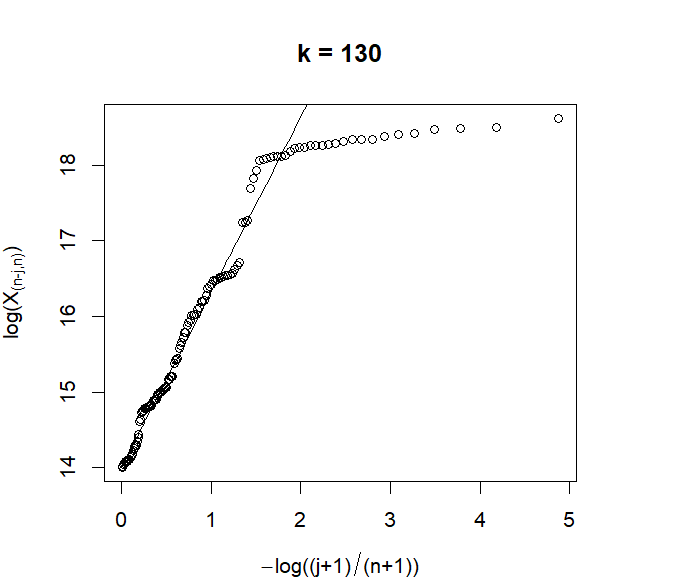}
	\hspace{-11mm}	\includegraphics[width=0.37\textwidth]{./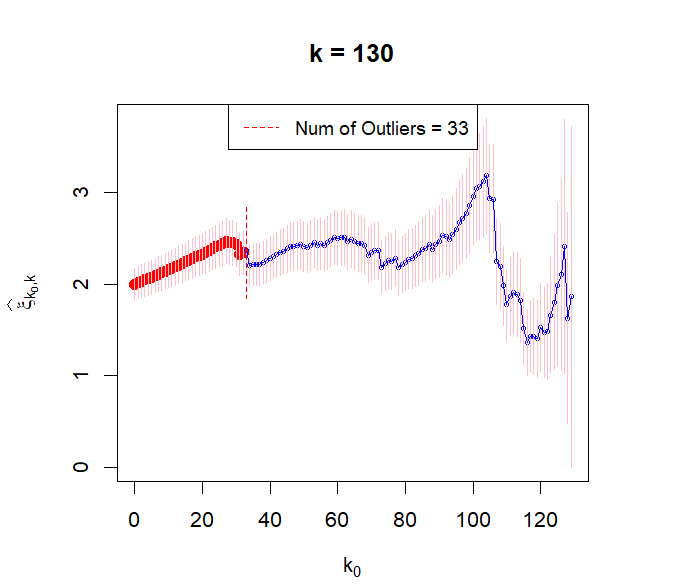}
	\hspace{-10mm}	\includegraphics[width=0.37\textwidth]{./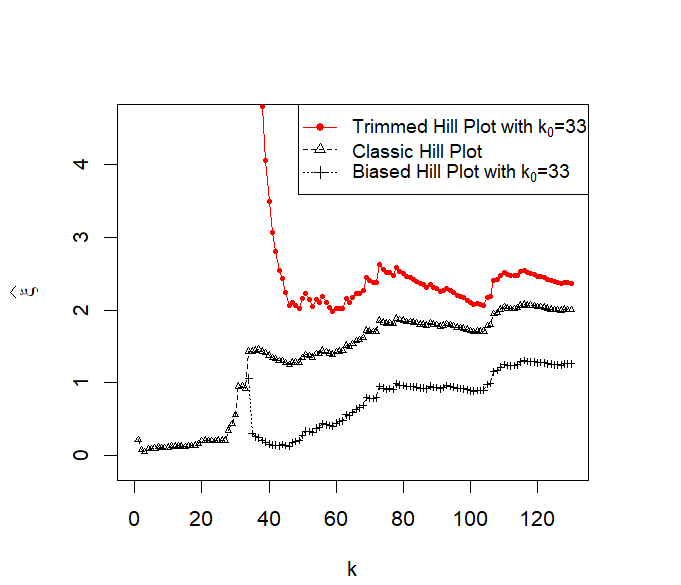}
	\caption{Exploratory plots of the French claim settlements. Left: Pareto quantile plot, Middle: Diagnostic Plot  and Right: Hill plots viz classic Hill plot, trimmed Hill plot and biased Hill plot.}
	\label{fig:freclaim}
\end{figure}

\section{Appendix}
\label{sec:append}

\subsection{Auxiliary Lemmas}

\begin{lemma} 
	\label{lem:u-ind}
	Let $E_j \stackrel{i.i.d}{\sim}{\rm Exp}(1)$, $j=1,2, \cdots, n+1$ be standard exponential random variables. Then, the ${\rm Gamma}(i,1)$ random variables defined as
	\begin{equation}
	\label{e:gam-dist}
	\Gamma_i = \sum_{j=1}^iE_j \hspace{5mm} i=1, \cdots, n+1,
	\end{equation} 
	satisfy 
	\begin{equation}
	\label{e:gam-ind}
	\Big( \frac{\Gamma_1}{\Gamma_{n+1}},\cdots, \frac{\Gamma_n}{\Gamma_{n+1}}\Big)\:\textmd{  and  }\:\Gamma_{n+1}\textmd{ are independent.}\\
	\end{equation}and
	\begin{equation}
	\label{e:gam-u}
	\Big( \frac{\Gamma_1}{\Gamma_{n+1}},\cdots, \frac{\Gamma_n}{\Gamma_{n+1}}\Big)\stackrel{d}{=}( U_{(1,n)},\cdots,U_{(n,n)}) 
	\end{equation}
	
	where $U_{(1,n)}< \cdots< U_{(n,n)}$ are the order statistics of $n$ i.i.d. U(0,1) random variables.
\end{lemma}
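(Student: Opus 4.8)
The plan is to obtain the joint density of $(\Gamma_1,\dots,\Gamma_{n+1})$ in closed form and then change variables to $(\Gamma_1/\Gamma_{n+1},\dots,\Gamma_n/\Gamma_{n+1},\Gamma_{n+1})$, reading off both the independence \eqref{e:gam-ind} and the distributional identity \eqref{e:gam-u} from the fact that the transformed density factorizes.

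First I would note that the map $(e_1,\dots,e_{n+1})\mapsto(\gamma_1,\dots,\gamma_{n+1})$ with $\gamma_i=\sum_{j\le i}e_j$ is linear and lower triangular with unit diagonal, hence has Jacobian determinant $1$, and its image is the open cone $C=\{0<\gamma_1<\dots<\gamma_{n+1}\}$. Since $(E_1,\dots,E_{n+1})$ has density $\exp(-\sum_j e_j)$ on $(0,\infty)^{n+1}$ and $\sum_j e_j=\gamma_{n+1}$ under this map, the vector $(\Gamma_1,\dots,\Gamma_{n+1})$ has density $e^{-\gamma_{n+1}}\mathbf{1}_C(\gamma_1,\dots,\gamma_{n+1})$. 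Next I would apply the substitution $w_i=\gamma_i/\gamma_{n+1}$ for $i=1,\dots,n$ and $s=\gamma_{n+1}$, with inverse $\gamma_i=sw_i$, $\gamma_{n+1}=s$; the Jacobian of the inverse is block triangular with determinant $s^n$, and the image of $C$ is $\{0<w_1<\dots<w_n<1\}\times(0,\infty)$. Therefore the joint density of $(W_1,\dots,W_n,S):=(\Gamma_1/\Gamma_{n+1},\dots,\Gamma_n/\Gamma_{n+1},\Gamma_{n+1})$ is
\[
f(w_1,\dots,w_n,s)=\Big(n!\,\mathbf{1}\{0<w_1<\dots<w_n<1\}\Big)\cdot\Big(\tfrac{s^n e^{-s}}{n!}\,\mathbf{1}\{s>0\}\Big).
\]
Since this factors as a function of $(w_1,\dots,w_n)$ times a function of $s$, the vector $(W_1,\dots,W_n)$ and $S=\Gamma_{n+1}$ are independent, which is exactly \eqref{e:gam-ind}; moreover the first factor is the joint density of the order statistics of $n$ i.i.d.\ $U(0,1)$ variables, giving \eqref{e:gam-u}, and the second is the ${\rm Gamma}(n+1,1)$ density, consistent with $\Gamma_{n+1}\sim{\rm Gamma}(n+1,1)$.

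There is no genuine obstacle here; the only care needed is the bookkeeping of the Jacobian $s^n$ and the transformed region of integration. An essentially equivalent alternative would be to observe that the vector of ratios is ancillary for the exponential scale family $\{{\rm Exp}(\theta)\}$ while $\Gamma_{n+1}$ is complete and sufficient, and invoke Basu's theorem for the independence; but the direct density computation is self-contained and simultaneously yields \eqref{e:gam-u}, so I would present that one.
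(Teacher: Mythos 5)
Your proof is correct: the Jacobians are right (unit determinant for the partial-sum map, $s^n$ for the inverse of the ratio map), the transformed support is identified correctly, and the factorization of the density simultaneously yields the independence in \eqref{e:gam-ind}, the identification \eqref{e:gam-u} with the uniform order-statistic density $n!\,\mathbf{1}\{0<w_1<\cdots<w_n<1\}$, and the ${\rm Gamma}(n+1,1)$ law of $\Gamma_{n+1}$. The paper itself gives no proof, deferring to Example 4.6 of the cited order-statistics monograph, and your change-of-variables computation is precisely the standard self-contained argument for that result (the Basu's-theorem route you mention would give \eqref{e:gam-ind} but not \eqref{e:gam-u} directly, so your choice to present the density argument is the right one).
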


For details on the proof see Example 4.6 on page 44 in \cite{MR3025012}. The next result, quoted from page 37 in \cite{Haan}, shall be used throughout the course of the paper to switch 
between order statistics of exponentials and i.i.d.\ exponential random variables.  

\begin{lemma}[R\'enyi, 1953]
	\label{lem:renyi}
	Let $E_1, E_2, \cdots, E_n$ be a sample of $n$ i.i.d. exponential random variables with mean $\xi$ (denoted by {\rm Exp}($\xi$)) and $E_{(1,n)} \leq E_{(2,n)} \leq E_{(n,n)} $ be the order statistics. By R\'enyi's (1953) representation, we have for fixed $k\leq n$,
	\begin{equation}
	\label{e:renyi}
	(E_{(1,n)},\cdots, E_{(i,n)},\cdots,E_{(k,n)})\stackrel{d}{=}\Big(\frac{E_1^*}{n},\cdots,\sum_{j=1}^i \frac{E^*_j}{n-j+1}, \cdots, \sum_{j=1}^k \frac{E^*_j}{n-j+1}\Big) 
	\end{equation}
	where $E_1^*, \cdots, E_k^*$ are also i.i.d. Exp($\xi$).
\end{lemma}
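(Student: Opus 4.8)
The plan is to prove the representation by passing from the order statistics to their \emph{spacings} $D_i:=E_{(i,n)}-E_{(i-1,n)}$ (with the convention $E_{(0,n)}:=0$) and showing that the rescaled spacings $(n-i+1)D_i$ are i.i.d.\ ${\rm Exp}(\xi)$. Since only the first $k$ order statistics appear in \eqref{e:renyi}, it suffices to compute the joint law of $(E_{(1,n)},\ldots,E_{(k,n)})$.

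First I would write down the joint density of $(E_{(1,n)},\ldots,E_{(k,n)})$. Starting from the density $n!\prod_{i=1}^n \xi^{-1}e^{-x_i/\xi}$ of the full vector of order statistics on $\{0<x_1<\cdots<x_n\}$ and integrating out $x_{k+1},\ldots,x_n$ over $x_k<x_{k+1}<\cdots<x_n$ — the inner integral being $\tfrac{1}{(n-k)!}\,(e^{-x_k/\xi})^{\,n-k}$ — one obtains
\[
f(x_1,\ldots,x_k)=\frac{n!}{(n-k)!\,\xi^{k}}\,\exp\!\Big(-\frac1\xi\sum_{i=1}^{k}x_i-\frac{n-k}{\xi}x_k\Big),\qquad 0<x_1<\cdots<x_k.
\]
Here one only integrates out the top $n-k$ coordinates, so the computation never requires handling the whole sample.

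Second, I change variables to $s_1=x_1$ and $s_i=x_i-x_{i-1}$ for $2\le i\le k$, which has unit Jacobian and maps the simplex onto $(0,\infty)^k$. Using $x_i=\sum_{j\le i}s_j$ gives $\sum_{i=1}^k x_i=\sum_{j=1}^k(k-j+1)s_j$ and $x_k=\sum_{j=1}^k s_j$, so the exponent collapses to $-\xi^{-1}\sum_{j=1}^k(n-j+1)s_j$. Hence the density of $(s_1,\ldots,s_k)$ factorises as $\prod_{j=1}^k \tfrac{n-j+1}{\xi}\,e^{-(n-j+1)s_j/\xi}$, which shows the $s_j$ are independent with $s_j\sim{\rm Exp}\big(\xi/(n-j+1)\big)$. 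Setting $E_j^*:=(n-j+1)s_j$ makes $E_1^*,\ldots,E_k^*$ i.i.d.\ ${\rm Exp}(\xi)$, and since $E_{(i,n)}=\sum_{j=1}^i s_j=\sum_{j=1}^i E_j^*/(n-j+1)$, Relation \eqref{e:renyi} follows.

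There is no serious obstacle here — this is the classical argument — and the only thing to watch is the bookkeeping in the change of variables, namely the identity $\sum_{i\le k}x_i=\sum_{j\le k}(k-j+1)s_j$ that produces the coefficient $n-j+1$, together with the marginalisation step above. As an alternative I would note the inductive (memorylessness) proof: $E_{(1,n)}$ is the minimum of $n$ i.i.d.\ ${\rm Exp}(\xi)$ variables, hence ${\rm Exp}(\xi/n)\stackrel d= E_1^*/n$; by the lack-of-memory property the excesses $\{E_i-E_{(1,n)}\}$ over the minimum are, conditionally on its value and on which index attains it, i.i.d.\ ${\rm Exp}(\xi)$ and independent of $E_{(1,n)}$, so applying the statement for a sample of size $n-1$ to these excesses and adding back $E_{(1,n)}$ yields the claim. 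I would present the density computation as the main proof and relegate the memorylessness argument to a remark.
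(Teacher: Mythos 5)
Your proof is correct. Note that the paper does not prove this lemma at all: it is quoted as a classical result from de Haan (page 37), so there is no in-paper argument to compare against. What you supply is the standard derivation --- marginalize the joint density of the order statistics down to the first $k$, pass to the spacings $s_j=E_{(j,n)}-E_{(j-1,n)}$ with unit Jacobian, and read off from the factorized exponent $-\xi^{-1}\sum_{j=1}^k (n-j+1)s_j$ (together with the combinatorial prefactor $n!/(n-k)! = \prod_{j=1}^k(n-j+1)$, which is exactly what normalizes each factor) that the rescaled spacings $(n-j+1)s_j$ are i.i.d.\ ${\rm Exp}(\xi)$; summing the spacings recovers \eqref{e:renyi}. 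The bookkeeping identity $\sum_{i\le k}x_i=\sum_{j\le k}(k-j+1)s_j$ and the marginalization integral are both handled correctly, and your alternative memorylessness/induction argument is an equally valid classical route. Nothing is missing.
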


\begin{lemma}
	\label{lem:ratio-conv}
	For $\Gamma_m=E_1+E_2+\cdots+E_m$ where the $E_i'$s are i.i.d. standard exponential random variables, for any $\rho$ 
	\begin{eqnarray}
	\label{e:gam-conv-0}
	\sup_{m\geq M}\Big|\Big(\frac{\Gamma_{m}}{m}\Big)^{-\rho}-1\Big| &\stackrel{a.s.}{\longrightarrow}& 0, \hspace{2mm}M \rightarrow \infty\\
	\label{e:gam-conv-1}
	\sup_{m,n\geq M}\Big|\Big(\frac{\Gamma_{m}/m}{\Gamma_{n}/n}\Big)^{-\rho}-1\Big| &\stackrel{a.s.}{\longrightarrow}& 0, \hspace{2mm}M \rightarrow \infty
	\end{eqnarray}
\end{lemma}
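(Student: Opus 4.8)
The plan is to reduce both displays to the strong law of large numbers by way of an elementary observation about tail suprema. First I would record the trivial fact that a real sequence $(b_m)$ converges to $b$ if and only if $\sup_{m\ge M}|b_m-b|\to 0$ as $M\to\infty$, and likewise that a doubly indexed array $(b_{m,n})$ converges to $b$ as $\min(m,n)\to\infty$ if and only if $\sup_{m,n\ge M}|b_{m,n}-b|\to 0$. This turns \eqref{e:gam-conv-0} and \eqref{e:gam-conv-1} into ordinary almost-sure convergence statements.

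For \eqref{e:gam-conv-0}: since $E_1,E_2,\dots$ are i.i.d.\ with mean $1$, the strong law of large numbers gives $\Gamma_m/m\to 1$ a.s. For every real $\rho$ the map $x\mapsto x^{-\rho}$ is continuous at $x=1$ with value $1$ there, so the continuous mapping theorem (for a.s.\ convergence) yields $(\Gamma_m/m)^{-\rho}\to 1$ a.s.; the tail-supremum characterization above then gives \eqref{e:gam-conv-0}.

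For \eqref{e:gam-conv-1}: set $a_m:=\Gamma_m/m$ and $\delta_M:=\sup_{m\ge M}|a_m-1|$, so that $\delta_M\to 0$ a.s.\ by the previous step. For $m,n\ge M$ one has $|a_m-a_n|\le|a_m-1|+|a_n-1|\le 2\delta_M$ and $a_n\ge 1-\delta_M$, hence on the eventually certain event $\{\delta_M<1\}$,
\[
\sup_{m,n\ge M}\Big|\frac{a_m}{a_n}-1\Big|\le\frac{2\delta_M}{1-\delta_M}\longrightarrow 0,\qquad M\to\infty,\ \text{a.s.}
\]
Continuity of $x\mapsto x^{-\rho}$ at $x=1$ then upgrades this to $\sup_{m,n\ge M}\big|(a_m/a_n)^{-\rho}-1\big|\to 0$ a.s.: given $\varepsilon>0$, pick $\eta>0$ with $|x-1|<\eta\Rightarrow|x^{-\rho}-1|<\varepsilon$; once $\sup_{m,n\ge M}|a_m/a_n-1|<\eta$ the bound follows. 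Since $\varepsilon$ is arbitrary, this is \eqref{e:gam-conv-1}.

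I do not expect any genuine obstacle here; the only points deserving a word of care are that the claim must cover every real $\rho$ (this is automatic, since continuity of $x\mapsto x^{-\rho}$ at $1$ does not depend on the sign of $\rho$) and that the single null set on which the strong law fails should be fixed at the outset, so that both \eqref{e:gam-conv-0} and \eqref{e:gam-conv-1} hold simultaneously on one full-probability event.
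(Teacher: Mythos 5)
Your proof is correct. The paper in fact states Lemma \ref{lem:ratio-conv} without proof, treating it as an immediate consequence of the strong law of large numbers, and your argument supplies exactly the intended reasoning: the SLLN gives $\Gamma_m/m\to 1$ a.s., the tail-supremum characterization of convergence handles the $\sup_{m\ge M}$ formulation, and the elementary bound $\bigl|a_m/a_n-1\bigr|\le 2\delta_M/(1-\delta_M)$ together with continuity of $x\mapsto x^{-\rho}$ at $x=1$ (valid for every real $\rho$) yields the doubly indexed statement \eqref{e:gam-conv-1}; fixing a single full-probability event at the outset, as you note, makes both conclusions hold simultaneously.
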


\begin{lemma}
	For all $\rho>0$, we have 
	\label{lem:gam-int}
	$$
	\sup_{m\geq M}\Big|\frac{1}{m}\sum_{i=1}^{m}{\Big(\frac{\Gamma_{i+1}}{\Gamma_{m+1}}\Big)}^{\rho}-\frac{1}{1-\rho}\Big|\stackrel{a.s.}{\longrightarrow}0, \hspace{2mm} M \rightarrow \infty
	$$
\end{lemma}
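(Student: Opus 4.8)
\emph{Proof plan.} First I would observe that the asserted convergence is just ordinary almost-sure convergence in disguise: for a (random) sequence $a_m$, the statement ``$\sup_{m\ge M}|a_m-\ell|\to 0$ as $M\to\infty$'' holds if and only if $a_m\to\ell$ a.s.\ in the usual sense (the ``$\sup$'' form is stated only because it is convenient later). Here $a_m=\frac1m\sum_{i=1}^m(\Gamma_{i+1}/\Gamma_{m+1})^{\rho}$ — or, to match the stated limit $1/(1-\rho)$, the Ces\`aro average of $(\Gamma_{m+1}/\Gamma_{i+1})^{\rho}$ with $0<\rho<1$, so that the candidate limit is $\int_0^1 u^{-\rho}\,du=1/(1-\rho)$. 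Hence there is no genuine uniformity to prove and it suffices to establish the pointwise a.s.\ limit.

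The plan is to peel off the randomness in the common denominator. Writing $\Gamma_{m+1}^{\rho}=m^{\rho}(\Gamma_{m+1}/m)^{\rho}$ I would factor
\[
\frac1m\sum_{i=1}^m\Big(\frac{\Gamma_{m+1}}{\Gamma_{i+1}}\Big)^{\rho}
=\Big(\frac{\Gamma_{m+1}}{m}\Big)^{\rho}\cdot\frac{1}{m^{1-\rho}}\sum_{i=1}^m\Gamma_{i+1}^{-\rho}.
\]
The first factor converges to $1$ a.s.\ by the strong law of large numbers (essentially the content of Lemma~\ref{lem:ratio-conv}). For the second factor I would write $\Gamma_{i+1}^{-\rho}=(i+1)^{-\rho}\big(\Gamma_{i+1}/(i+1)\big)^{-\rho}$, note that $(\Gamma_{i+1}/(i+1))^{-\rho}\to1$ a.s., and invoke the classical weighted Ces\`aro (Toeplitz) lemma with the nonnegative weights $(i+1)^{-\rho}$: since $\rho<1$, the partial sums $A_m:=\sum_{i=1}^m(i+1)^{-\rho}$ diverge and, by comparison with $\int_1^m x^{-\rho}\,dx$, satisfy $A_m\sim m^{1-\rho}/(1-\rho)$. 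The Toeplitz lemma then gives $A_m^{-1}\sum_{i=1}^m(i+1)^{-\rho}(\Gamma_{i+1}/(i+1))^{-\rho}\to1$ a.s., whence $m^{-(1-\rho)}\sum_{i=1}^m\Gamma_{i+1}^{-\rho}\to 1/(1-\rho)$ a.s.\ Multiplying the two limits yields $a_m\to 1/(1-\rho)$ a.s., as required.

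The one step that needs care is the Toeplitz argument: the factors $(\Gamma_{i+1}/(i+1))^{-\rho}$ are bounded away from $1$ for the first few (random) indices, but their total contribution to the normalized sum is $O(1)/A_m\to0$ because $A_m\to\infty$ — this is exactly what the weighted Ces\`aro lemma encodes, and it is the place where the hypothesis $\rho<1$ is essential (for $\rho\ge 1$ the series $\sum\Gamma_{i+1}^{-\rho}$ converges a.s.\ and the limit is not $1/(1-\rho)$). Everything else — the SLLN, the elementary asymptotics of $A_m$, and the final product of limits — is routine. I note that the identical scheme, now with weights $(i+1)^{\rho}$ whose partial sums diverge for every $\rho>0$, handles the variant in which the summand is $(\Gamma_{i+1}/\Gamma_{m+1})^{\rho}$ and produces the limit $1/(1+\rho)$, so the argument is robust to the sign convention in the exponent.
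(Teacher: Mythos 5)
Your proposal is correct, and it reaches the conclusion by a genuinely different route than the paper. Both arguments begin the same way: you observe, as the paper does, that the $\sup_{m\ge M}$ formulation is just ordinary almost-sure convergence of the sequence, so it suffices to prove a pointwise a.s.\ limit. After that the paper argues pathwise by a Riemann-sum device: it defines step functions $f_m(x)=(\Gamma_{[mx]+1}/\Gamma_{m+1})^{-\rho}$ on $(0,1]$, proves $f_m(x)\to x^{-\rho}$ using the uniform ratio convergence of Lemma \ref{lem:ratio-conv}, and passes to the limit under the integral to get $\int_0^1 x^{-\rho}\,dx=1/(1-\rho)$. You instead factor out $(\Gamma_{m+1}/m)^{\rho}\to 1$ (SLLN) and handle $m^{-(1-\rho)}\sum_{i\le m}\Gamma_{i+1}^{-\rho}$ by the weighted Toeplitz/Ces\`aro lemma with deterministic weights $(i+1)^{-\rho}$, $A_m\sim m^{1-\rho}/(1-\rho)$. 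Your route buys two things: it avoids the domination step entirely (in the paper that step is the weak point, since with exponent $-\rho$ the step functions are $\ge 1$ and the printed bound $|f_m|\le 1$ pertains to the $+\rho$ convention, so uniform integrability is not actually justified as written), and it makes explicit that $0<\rho<1$ is what the $1/(1-\rho)$ limit requires — thereby exposing the sign/parameter inconsistency in the statement, since with the exponent $+\rho$ literally as displayed and $\rho>0$ the limit is $1/(1+\rho)$, which is in fact the version invoked later in the proof of Lemma \ref{lem:r-def}, and your closing remark shows your scheme covers that variant as well. What the paper's approach buys is brevity and a template that works for any exponent in one stroke once a correct dominating (or uniform-integrability) bound is supplied; your argument needs the separate asymptotics of $A_m$ for each exponent convention, but every step of it is airtight as stated.
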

\begin{proof} It is equivalent to show that, as $m\to\infty$,
	\label{lem:sum-conv}
	\begin{equation}
	\label{e:sum-rho-conv}
	\Big|\frac{1}{m}\sum_{i=1}^{m}{\Big(\frac{\Gamma_{i+1}}{\Gamma_{m+1}}\Big)}^{\rho}-\frac{1}{1-\rho}\Big|\stackrel{a.s.}{\longrightarrow}0.
	\end{equation}
	For a fixed $\omega \in \Omega$, let us define the following sequence of functions
	$$f_m(x)=\sum_{i=1}^m {(\Gamma_{i+1}/\Gamma_{m+1})}^{\rho}(\omega){\mathbf{1}}_{(\frac{i-1}{m},\frac{i}{m}]}(x), \hspace{5mm}x>0$$
	Suppose $x \in ((i-1)/m,i/m]$, then 
	\begin{equation}
	\label{e:dct}
	f_m(x)={(\Gamma_{[mx]+1}/\Gamma_{m+1})}^{-\rho}(\omega)=\Big(\frac{[mx]+1}{m}\Big)^{-\rho}\Big(\frac{\Gamma_{[mx]+1}/([mx]+1)}{\Gamma_{m}/m}\Big)^{\rho}(\omega) \rightarrow x^{-\rho}
	\end{equation}
	where the convergence follows from \eqref{e:gam-conv-1}. Moreover since $\Gamma_{[mx]+1}<\Gamma_m$ and $\rho<0$, therefore $|f_m(x)|\leq 1$, for all $x>0$. Thus by dominated convergence theorem,
	\begin{equation}
	\label{e:dct-1}
	\int_{0}^{1}f_m(x)dx=\frac{1}{m}\sum_{i=1}^m {(\Gamma_{i+1}/\Gamma_{m+1})}^{-\rho}(\omega) \rightarrow \int_0^1 x^{-\rho}dx=\frac{1}{1-\rho}
	\end{equation}
	Since \eqref{e:dct} holds for all $\omega \in \Omega$ with $\mathbb{P}(\Omega]=1$, so does \eqref{e:dct-1}. This completes the proof.
\end{proof}

\subsection{Proofs for Section \ref{sec:trim-hill}}\label{sec:proofs-sec2}

\begin{proof}[ {\bf Proof of Proposition \ref{prop:xi-opt}}]
	Note that, if $X_i \sim {\rm Pareto}(\sigma,\xi)$, then it can be alternatively written as
	$$X_i=\sigma U_{i}^{-\xi}, \hspace{3mm}i=1, \cdots, n,$$
	where $U_i$'s are i.i.d. $U(0,1)$. Therefore by Relation \eqref{e:gam-u} , we have
	\begin{equation}
	\label{e:Pareto-order}
	(X_{(n,n)}, \cdots, X_{(1,n)})=\sigma(U_{(1,n)}^{-\xi},\cdots,U_{(n,n)}^{-\xi})\stackrel{d}{=}\sigma\Bigg( {\Big(\frac{\Gamma_1}{\Gamma_{n+1}}\Big)}^{-\xi},\cdots, {\Big(\frac{\Gamma_n}{\Gamma_{n+1}}\Big)}^{-\xi}\Bigg)
	\end{equation}
	where $X_{(n,n)}>\cdots>X_{(1,n)}$ are the order statistics for the $X_i$'s. Hence, for all $1\leq k \leq n-1$, we have
	\begin{eqnarray}
	\label{e:log-u}
	\Bigg(\log \Big(\frac{X_{(n,n)}}{X_{(n-k,n)}} \Big),\cdots,\log \Big(\frac{X_{(n-k+1,n)}}{X_{(n-k,n)}} \Big)\Bigg)  &\stackrel{d}{=}& -\xi\Bigg(\log \Big(\frac{\Gamma_1}{\Gamma_{k+1}} \Big),\cdots,\log \Big(\frac{\Gamma_k}{\Gamma_{k+1}} \Big)\Bigg)\\\nonumber
	&\stackrel{d}{=}&-\xi(\log U_{(1,k)}, \cdots, \log U_{(k,k)}),
	\end{eqnarray}
	where the $U_{(i,k)}$'s are the order statistics for a sample of $k$ i.i.d. $U(0,1)$ and the last equality in \eqref{e:log-u} follows from Relation \eqref{e:gam-u}. Since negative log transforms of $U(0,1)$ are standard exponentials, one can define $E_{(i,k)}$, $i=1,\cdots,k$ as
	\begin{equation}
	\label{e:log-exp}
	\Bigg(\log \Big(\frac{X_{(n,n)}}{X_{(n-k,n)}} \Big),\cdots,\log \Big(\frac{X_{(n-k+1,n)}}{X_{(n-k,n)}} \Big)\Bigg) =: (E_{(k,k)}, \cdots,  E_{(1,k)})
	\end{equation}
	such that the $E_{(i,k)}$'s are the  order statistics of $k$ i.i.d. exponentials with mean $\xi$.  
	
	Using \eqref{e:log-exp},  $\widehat{\xi}^{\rm trim}_{k_0,k}$ in \eqref{e:xi-trim-gen} is simplified as:
	\begin{equation}
	\label{e:blue-1}
	\widehat{\xi}^{\rm trim}_{k_0,k} =  \sum_{i=k_0+1}^k c_{k_0,k}(i)E_{(k-i+1,k)}=\sum_{i=1}^{k-k_0}\delta_i E_{(i,k)}
	\end{equation}
	where $\delta_i=c_{k_0,k}(k-i+1)$. The optimal choice of weights $\delta_i$'s  which produces the best linear unbiased estimator (BLUE) is obtained using Lemma \ref{lem:blue} as follows:
	\begin{equation}
	\label{e:blue-2}
	\delta^{\rm opt}_i=
	\begin{cases}
	\frac{1}{k-k_0}\hspace{5mm} i=1,\cdots, k-k_0-1\\
	\frac{k_0+1}{k-k_0}\hspace{5mm} i=k-k_0
	\end{cases}
	\end{equation}
	Rewriting $E_{(i,k)}$'s in terms of $X_{(n-i+1,n)}$'s as in \eqref{e:log-exp} completes the proof.
\end{proof}

\begin{lemma}
	\label{lem:blue} 
	If $E_i$, $i=1,\cdots,n$ are i.i.d. observations from ${\rm Exp}(\xi)$, the best linear unbiased estimator (BLUE) of $\xi$ based on the order statistics, $E_{(1,n)}<\cdots<E_{(r,n)}$ is given by
	\begin{equation*} 
	\widehat{\xi}=\frac{1}{r}\sum_{i=1}^{r-1} E_{(i,n)} +\frac{n-r+1}{r}E_{(r,n)}\end{equation*}
\end{lemma}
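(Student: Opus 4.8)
The plan is to reduce the problem to estimating the mean of an i.i.d.\ exponential sample via the normalized spacings, and then to read off the claimed form of the BLUE by a telescoping identity. First I would invoke R\'enyi's representation (Lemma \ref{lem:renyi}) to write, jointly in distribution,
\begin{equation*}
E_{(i,n)} \stackrel{d}{=} \sum_{j=1}^i \frac{E_j^*}{\,n-j+1\,}, \qquad i=1,\dots,r,
\end{equation*}
with $E_1^*,\dots,E_r^*$ i.i.d.\ ${\rm Exp}(\xi)$. Equivalently, defining the normalized spacings $D_i := (n-i+1)\big(E_{(i,n)}-E_{(i-1,n)}\big)$ with the convention $E_{(0,n)}:=0$, one obtains $(D_1,\dots,D_r)\stackrel{d}{=}(E_1^*,\dots,E_r^*)$, i.e.\ the $D_i$ are i.i.d.\ ${\rm Exp}(\xi)$. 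The map $(E_{(1,n)},\dots,E_{(r,n)})\mapsto(D_1,\dots,D_r)$ is a deterministic linear bijection (its matrix is lower bidiagonal with nonzero diagonal entries), so the class of linear unbiased estimators of $\xi$ formed from $(E_{(1,n)},\dots,E_{(r,n)})$ coincides with the class formed from $(D_1,\dots,D_r)$, and a best (minimum-variance) element of one class is a best element of the other.

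Next I would observe that for the i.i.d.\ sample $D_1,\dots,D_r$ from ${\rm Exp}(\xi)$ the covariance matrix equals $\xi^2 I_r$, a scalar multiple of the identity; hence ordinary least squares coincides with generalized least squares, and the Gauss--Markov theorem identifies the BLUE of $\xi=\mathbb{E}D_1$ with the sample mean $\bar D_r=\tfrac1r\sum_{i=1}^r D_i$. Finally I would translate $\bar D_r$ back into the order statistics by Abel summation:
\begin{equation*}
\sum_{i=1}^r (n-i+1)\big(E_{(i,n)}-E_{(i-1,n)}\big)
= \sum_{i=1}^{r-1}\big[(n-i+1)-(n-i)\big]E_{(i,n)} + (n-r+1)E_{(r,n)}
= \sum_{i=1}^{r-1} E_{(i,n)} + (n-r+1)E_{(r,n)},
\end{equation*}
where I used $E_{(0,n)}=0$. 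Dividing by $r$ gives $\bar D_r=\tfrac1r\big(\sum_{i=1}^{r-1}E_{(i,n)}+(n-r+1)E_{(r,n)}\big)$, which is exactly the asserted expression.

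I do not expect a genuine obstacle here, only a point that needs care rather than calculation: the equivalence-of-BLUEs step. I must check that unbiasedness and the variance ordering are preserved under the change of coordinates, which holds precisely because the coordinate change is invertible and deterministic — every linear statistic in the $E_{(i,n)}$'s is a linear statistic in the $D_i$'s with the same mean and variance, and conversely — so optimality transfers. Everything else (R\'enyi's representation, the scalar covariance, and the telescoping) is routine.
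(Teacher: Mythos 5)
Your proof is correct and follows essentially the same route as the paper's: both reduce the problem via R\'enyi's representation to i.i.d.\ ${\rm Exp}(\xi)$ variables (your normalized spacings $D_i$ are exactly the paper's $E_j^*$), identify the sample mean as the optimal linear unbiased estimator there (you cite Gauss--Markov, the paper cites Lehmann--Scheff\'e, either suffices), and then convert back to the order statistics (your Abel summation is the same algebra as the paper's interchange of the order of summation). The point you flag about transferring optimality under the invertible linear change of coordinates is handled implicitly in the paper through the triangular correspondence $\gamma\leftrightarrow\delta$, so there is no gap.
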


\begin{proof}
	Let $\widehat{\xi}=\sum_{i=1}^{r} \gamma_i E_{(i,n)}$ denote the BLUE of $\xi$. By Relation \eqref{e:renyi}, the BLUE can then be expressed as
	\begin{equation}
	\label{e:z-y-1}
	\hat{\xi}=\sum_{i=1}^{r} \gamma_i\sum_{j=1}^i \frac{E^*_j}{(n-j+1)}=\sum_{j=1}^{r} E^*_j \sum_{i=j}^{r} \frac{\gamma_i}{(n-j+1)}=:\sum_{j=1}^{r} E^*_j \delta_j
	\end{equation} 
	where the $E_j^*$ are i.i.d. from ${\rm Exp}(\xi)$ and $\delta_j=(n-j+1)\sum_{i=j}^r \gamma_i$ 
	
	For i.i.d. observations from ${\rm Exp}(\xi)$, the sample mean is the uniformly minimum variance unbiased estimator (UMVUE) for $\xi$ (see Lehmann Scheffe Theorem, Theorem 1.11, page 88 in~\cite{lehmann:casella}).
	
	Thus, $\delta_j=1/r$ yields the required best linear unbiased estimator and therefore, the weights $\gamma_i$'s have the form: $$\gamma_{i}=\begin{cases}
	\frac{n-{r}+1}{r} & \text{ }  i=r\\                                                                                                                                                                                                       \frac{1}{{r}}& \text{ } i<r
	\end{cases}$$
	This completes the proof.
\end{proof}

\begin{proof}[{\bf Proof of Theorem \ref{thm:umvue-p}}]
	Assume that $\sigma$ is known and consider the class of statistics:
	$$	{\cal{U}}^{\sigma}_{k_0}=\left\{T=T(X_{(n-k_0,n)},\cdots,X_{(1,n)}):\: \mathbb{E}(T)=\xi,\: X_1, \cdots, X_n \stackrel{i.i.d.}{\sim} {\rm Pareto}(\sigma,{\xi})\right\}.$$
	Since $\sigma$ is no longer a parameter, every statistic in ${\cal{U}}_{k_0}^{\sigma}$ can be equivalently written as a function of $\log(X_{(n-i+1,n)}/\sigma)$, $i=k_0+1, \cdots, n$ as follows:
	\begin{equation*}
	\label{e:u-sigma-0}
	{\cal{U}}_{k_0}^{\sigma}=\left\{S=S\left(\log\Big(\frac{X_{(n-k_0,n)}}{\sigma}\Big),\cdots,\log\Big(\frac{X_{(1,n)}}{\sigma}\Big)\right): \: \mathbb{E} (S)=\xi, \: X_1, \cdots, X_n \stackrel{i.i.d.}{\sim} {\rm Pareto}(\sigma,{\xi})\right\}.
	\end{equation*}
	Since $X_i$'s follow ${\rm Pareto(\sigma,\xi)}$,  $\log(X_i/\sigma) \sim {\rm Exp}(\xi)$ and therefore 
	\begin{equation*}
	\left(\log\Big(\frac{X_{(n-k_0,n)}}{\sigma}\Big),\cdots,\log\Big(\frac{X_{(1,n)}}{\sigma}\Big) \right) \stackrel{d}{=}\left(E_{(n-k_0,n)}, \cdots,E_{(1,n)}\right),
	\end{equation*} 
	where $E_{(1,n)} \leq \cdots \leq E_{(n,n)}$ are the order statistics of $n$ i.i.d. observations from ${\rm Exp}(\xi)$. Therefore
	\begin{equation}
	\label{e:u-sigma-1}
	{\cal{U}}_{k_0}^{\sigma}\stackrel{d}{=}\left\{S=S(E_{(n-k_0,n)}, \cdots,E_{(1,n)}):\: \mathbb{E}(S)=\xi, \:E_1, \cdots, E_n \stackrel{i.i.d.}{\sim} {\rm Exp}(\xi)\right\},
	\end{equation}
	where the $E_i$'s do not depend on $\sigma$. Next, using Relation \eqref{e:renyi} of Lemma \ref{lem:renyi}, we have
	\begin{eqnarray*}
		S(E_{(n-k_0,n)}, \cdots,E_{(1,n)})&=&S\Big(\sum_{j=1}^{n-k_0} \frac{E^*_j}{n-j+1}, \cdots, \sum_{j=1}^{n-k} \frac{E^*_j}{n-j+1}\Big)=R(E^*_1, \cdots,E^*_{n-k_0})	
	\end{eqnarray*}
	Using the above result with \eqref{e:u-sigma-1}, we get
	\begin{equation}
	\label{e:umvue-1}
	{\cal{U}}_{k_0}^{\sigma}
	\stackrel{d}{=}{\cal{V}}_{k_0}:=\left\{R=R(E^*_1, \cdots,E^*_{n-k_0}): \: \mathbb{E}(R)=\xi,\: E^*_{1}, \cdots, E^*_{n-k_0} \stackrel{i.i.d.}{\sim} {\rm Exp}(\xi)\right\}.
	\end{equation}		
	where the first equality is in the sense of finite dimensional distributions. 
	
	By \eqref{e:umvue-1}, we have $\inf_{T \in {\cal{U}}_{k_0}^{\sigma}} {\rm Var}(T)=\inf_{R \in {\cal{V}}_{k_0}} {\rm Var}(R):=L$. Since the sample mean, ${\overline{E}}^*_{n-k_0}=\sum_{i=1}^{n-k_0} E^*_i/(n-k_0)$ is uniformly the minimum variance estimator (UMVUE) of $\xi$ among the class described by ${\cal{V}}_{k_0}$,  $L$ can be easily obtained as
	\begin{equation}L={\rm Var}({\overline{E}}^*_{n-k_0})=\frac{\xi^2}{n-k_0}\end{equation}
	The fact that  ${\overline{E}}^*_{n-k_0}$ is the UMVUE follows because it is an  unbiased and complete sufficient statistic for $\xi$ (see Lehmann Scheffe Theorem, Theorem 1.11, page 88 in~\cite{lehmann:casella}).
	
	To complete the proof, observe that every statistic $T$ in ${\cal{U}}_{k_0}$ is an unbiased estimator of $\xi$ for any arbitrary choice of $\sigma$. This implies that for any $\sigma$, $T\in {\cal{U}}^\sigma_{k_0}$ and therefore $L \leq {\rm Var}(T)$. 
	Since this holds for all values of $T \in  {\cal{U}}_{k_0}$,
	the proof of the lower bound in  \eqref{e:opt-var} follows.
	
	For the upper bound in \eqref{e:opt-var}, we observe that $\widehat{\xi}_{k_0,n-1} \in {\cal{U}}_{k_0}$, which in view of Proposition \ref{prop:xi-exp} implies $$\inf_{T \in {\cal{U}}_{k_0}} {\rm{Var}}(T) \leq {\rm Var}(\widehat{\xi}_{k_0,n-1})= \frac{\xi^2}{n-k_0-1}.$$
	This completes the proof.
\end{proof}

\begin{proof}[{\bf Proof of Proposition \ref{prop:xi-exp}}]
	From Relations \eqref{e:blue-1} and \eqref{e:blue-2},  we have
	\begin{equation}
	\label{e:xi-jt}
	{\Big\{\widehat{\xi}_{k_0,k},\ k_0=0,\ldots,k-1 \Big\}}
	= {\Big\{\frac{1}{k-k_0}\sum_{i=1}^{k-k_0-1}E_{(i,k)}+\frac{k_0+1}{k-k_0}E_{(k-k_0,k)},\ k_0=0,\ldots,k-1 \Big\}}
	\end{equation}
	Using Relation \eqref{e:renyi}, for all $k_0=0, 1, \cdots, k-1$, we have
	\begin{equation}
	\label{e:xi-simp}
	\widehat{\xi}_{k_0,k}=\frac{1}{k-k_0}\sum_{i=1}^{k-k_0-1}\sum_{j=1}^{i}\frac{E^*_j}{(k-j+1)}+\frac{k_0+1}{k-k_0}\sum_{j=1}^{k-k_0}\frac{E^*_j}{(k-j+1)}\\
	\end{equation}
	Interchanging the order of summation in the first term in the right hand side of \eqref{e:xi-simp},  we obtain
	\begin{eqnarray*}
		\widehat{\xi}_{k_0,k}
		&=&\sum_{j=1}^{k-k_0-1}\frac{E^*_j}{k-j+1}\sum_{i=j}^{k-k_0-1}\frac{1}{k-k_0}+\frac{k_0+1}{k-k_0}\sum_{j=1}^{k-k_0}\frac{E^*_j}{(k-j+1)}\\
		&=&\sum_{j=1}^{k-k_0-1}\frac{E^*_j}{k-j+1}\left(\sum_{i=j}^{k-k_0-1}\frac{1}{k-k_0}+\frac{k_0+1}{k-k_0}\right)+\frac{E^*_{k-k_0}}{k-k_0}\\
		&=&\sum_{j=1}^{k-k_0-1}\frac{E^*_j}{k-j+1}\frac{(k-j+1)}{k-k_0}+\frac{E^*_{k-k_0}}{k-k_0}\\
		&=&\frac{1}{k-k_0}\sum_{j=1}^{k-k_0}E^*_j, \hspace{5mm} 
	\end{eqnarray*}
	Since $E_j^*$, $j=1, \cdots, k-k_0$ follow Exp($\xi$), $E_j^*$ are indeed $\xi$ times i.i.d. standard exponentials. This completes the proof of Relation \eqref{e:xi-jt-big}.
	
	The proof of Relation \eqref{e:xi-normal} is a direct application of central limit theorem to Relation \eqref{e:xi-jt-big}.
\end{proof}

\subsection{Proofs for Section \ref{sec:general-heavy}} \label{sec:proofs-sec3}

\subsubsection{Minimax Rate Optimality}\label{sec:proof-sec-minmax}
Our goal is to establish the uniform consistency in Relation \eqref{e:t:uniform-consistency}. To this end, recall the representation in Relation \eqref{e:tail-3}. For the Hall class of distributions in Relation \eqref{e:D-new-def}, it can be shown that $\sqrt{k}R_{k_0,k}$ is $O_{\mathbb{P}}(1)$ (see Lemma \ref{lem:Hall} below). With $\sqrt{k}|R_{k_0,k}|$ bounded away from infinity, it is easier to bound the quantity $\sqrt{k}|\widehat{\xi}_{k_0,k}-\xi|$ since by Relation \eqref{e:tail-3} 
\begin{equation}\label{e:xi-xi-star-r}
\sqrt{k}|\widehat{\xi}_{k_0,k}-\xi|\leq \sqrt{k}|R_{k_0,k}|+\sqrt{k}|\widehat{\xi^*}_{k_0,k}-\xi|.\end{equation} This shall form the basis of the proof for Theorem \ref{t:uniform-consistency} as shown next.

\begin{proof}[{\bf Proof of Theorem \ref{t:uniform-consistency}}] Let $P_n=\inf_{F \in {\cal{D}}_{\xi}(B,\rho)}\mathbb{P}_F\Big(\max_{0\leq k_0<h(k)}|\widehat{\xi}_{k_0,k}-\xi|\leq a(n)\Big)$. By Relation \eqref{e:xi-xi-star-r}, we have
	$$P_n=\inf_{{\cal{D}}_{\xi}(B,\rho)}\mathbb{P}_F\Big(\underbrace{\max_{0\leq k_0<h(k)}\sqrt{k}|R_{k_0,k}|\leq (\sqrt{k}a(n))/2}_{A_{1n}}\cap\underbrace{\max_{0\leq k_0<h(k)}\sqrt{k}|\widehat{\xi}^*_{k_0,k}-\xi|\leq (\sqrt{k}a(n))/2}_{A_{2n}}\Big).$$
	Since $\sqrt{k}a(n) \rightarrow \infty$, by Lemma \ref{lem:Hall}, $\inf_{F \in {\cal{D}}_{\xi}(B,\rho)}\mathbb{P}_F(A_{1n}) \rightarrow 1$. We also have that, $$\inf_{F \in {\cal{D}}_{\xi}(B,\rho)}\mathbb{P}_F(A_{2n})=\mathbb{P}\Big(\max_{0\leq k_0<h(k)} \sqrt{k}| \widehat {\xi}_{k_0,k}^* - \xi|\leq(\sqrt{k}a(n))/2\Big)$$ since $\widehat{\xi}^*_{k_0,k}$ does not depend on $F \in {\cal{D}}_{\xi}(B,\rho)$.

	By using Donsker's principle, we will show that 
	$$
	\max_{0\le k_0 <h(k)} | \widehat \xi_{k_0,k}^* - \xi|  =o_{\mathbb{P}}(a(n)),
	$$
	which will imply $\mathbb{P}_F(A_{2n}) \rightarrow 1$.  Indeed, without loss of generality, suppose $\xi=1$ and let 
	$E_i,\ i=1,2,\dots$ be independent standard exponential random variables. For every $\epsilon\in (0,1)$, we have that
	\begin{equation}\label{e:Wkprocess}
	W_k =\{W_k(t),\ t\in [\epsilon,1]\} :=  \left\{\frac{\sqrt{k}}{[kt]} \sum_{i=1}^{[kt]} (E_i -1),\ t \in [0,1] \right\} \stackrel{d}{\to } \{B(t)/t, \ t\in [\epsilon,1]\},
	\end{equation}
	as $k\to\infty$, where $B = \{B(t),\ t\in [0,1]\}$ is the standard Brownian motion, and where the last convergence is in 
	the space of cadlag functions ${\mathbb D}[\epsilon,1]$ equipped with the Skorokhod $J_1$-topology.  (In fact, since the limit
	has continuous paths, the convergence is also valid in the uniform norm.)
	
	Recall that by Relation \eqref{e:xi-jt-big}, we have 
	$$
	\{\widehat\xi_{k_0,k}^*(n),\ 0\le k_0 < k\}\stackrel{d}{=} \left\{\sum_{i=1}^{k-k_0} E_i/(k-k_0),\ 0\le k_0 < k\right\}.
	$$
	Thus, 
	\begin{equation}\label{e:t:uniform-consistency-1}
	\sqrt{k} \max_{0\le k_0<h(k)} | \widehat \xi_{k_0,k}^*(n) - \xi|  \stackrel{d}{= } \sup_{t \in [1-h(k)/k, 1]} |W_k(t)| \le
	\sup_{t \in [\epsilon,1]} |W_k(t)|, 
	\end{equation}
	where the last inequality holds for all sufficiently large $k$, since $1-h(k)/k\to1$, as $k\to\infty$.  
	Since the supremum is a continuous functional in $J_1$, the convergence in Relation \eqref{e:Wkprocess} implies that the 
	right--hand side of Relation \eqref{e:t:uniform-consistency-1} converges in distribution
	to $\sup_{t\in [\epsilon,1]} |B(t)/t| = O_\mathbb{P}(1)$, which is finite with probability one. This, since $a(n)\sqrt{k(n)} \to \infty$, completes the proof.
\end{proof}

\begin{lemma}
	\label{lem:Hall}
	Assumption \eqref{e:D-new-def} implies there exist $M>0$ such that
	\begin{equation}
	\label{e:rem-hall}
	\inf_{F \in {\cal{D}}_{\xi}(B,\rho)}\mathbb{P}_F\Big(\max_{0\leq k_0<h(k)}\sqrt{k}|R_{k_0,k}|\leq M\Big)\rightarrow 1 \textmd{ as }h(k) \rightarrow \infty
	\end{equation}
	where $R_{k_0,k}$ is defined as in Relation \eqref{e:tail-3} and $k=O(n^{2\rho/(1+2\rho)})$.
\end{lemma}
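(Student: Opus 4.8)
The plan is to dominate $|R_{k_0,k}(n)|$ by a single random variable that depends neither on $k_0$ nor on $F\in{\cal D}_\xi(B,\rho)$, and then to show that this variable, scaled by $\sqrt k$, is tight. First I would record the distributional structure of the Pareto order statistics appearing in \eqref{e:tail-3}: since the $Y_i$ are i.i.d.\ ${\rm Pareto}(1,1)$, writing $Y_i=1/U_i$ with $U_i$ i.i.d.\ $U(0,1)$ and applying Lemma \ref{lem:u-ind} yields the joint identity $\{Y_{(n-i,n)}\}_{i=0}^{k}\stackrel{d}{=}\{\Gamma_{n+1}/\Gamma_{i+1}\}_{i=0}^{k}$, where $\Gamma_m=E_1+\cdots+E_m$ with $E_j$ i.i.d.\ standard exponential. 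Two consequences matter: $Y_{(n-k,n)}\le Y_{(n-i,n)}$ for all $0\le i\le k$, and $Y_{(n-k,n)}\stackrel{d}{=}\Gamma_{n+1}/\Gamma_{k+1}\to+\infty$ in probability because $k=o(n)$.

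Next I would invoke the Hall condition \eqref{e:D-new-def}. From $L(x)=1+r(x)$ with $|r(x)|\le Bx^{-\rho}$, for every $x\ge x_0:=(2B)^{1/\rho}$ one has $|r(x)|\le 1/2$ and hence $|\log L(x)|=|\log(1+r(x))|\le 2|r(x)|\le 2Bx^{-\rho}$, where $x_0$ and the constant $2B$ depend only on $B,\rho$ and are therefore the same for every $F$ in the class. Set $\mathcal{E}_n:=\{Y_{(n-k,n)}\ge x_0\}$, so that $\mathbb{P}(\mathcal{E}_n)\to1$ and $\mathcal{E}_n$ is independent of $F$. On $\mathcal{E}_n$ all order statistics entering \eqref{e:tail-3} exceed $x_0$, so each logarithmic increment obeys
\[\Big|\log\frac{L(Y_{(n-i,n)})}{L(Y_{(n-k,n)})}\Big|\le 2B\,Y_{(n-i,n)}^{-\rho}+2B\,Y_{(n-k,n)}^{-\rho}\le 4B\,Y_{(n-k,n)}^{-\rho},\]
using monotonicity of the order statistics and $\rho>0$. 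Plugging this into the two sums defining $R_{k_0,k}(n)$ — the first carrying the weight $k_0/(k-k_0)\le1$ for all large $n$ (since $k_0<h(k)=o(k)$), the second consisting of exactly $k-k_0$ terms of weight $1/(k-k_0)$ — gives, on $\mathcal{E}_n$, the $k_0$-free bound $\max_{0\le k_0<h(k)}|R_{k_0,k}(n)|\le 8B\,Y_{(n-k,n)}^{-\rho}$.

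It then remains to show $\sqrt k\,Y_{(n-k,n)}^{-\rho}=O_{\mathbb{P}}(1)$. Writing $Y_{(n-k,n)}^{-\rho}\stackrel{d}{=}(\Gamma_{k+1}/\Gamma_{n+1})^\rho=\big(\tfrac{k+1}{n+1}\big)^\rho\big(\tfrac{\Gamma_{k+1}/(k+1)}{\Gamma_{n+1}/(n+1)}\big)^\rho$ and using Lemma \ref{lem:ratio-conv} to send the last factor to $1$ in probability, we get $\sqrt k\,Y_{(n-k,n)}^{-\rho}=\sqrt k\,\big(\tfrac{k+1}{n+1}\big)^\rho(1+o_{\mathbb{P}}(1))$; since $k=O(n^{2\rho/(1+2\rho)})$ one checks that $k^{1/2+\rho}n^{-\rho}=O(1)$, as the exponent of $n$ in $k^{1/2+\rho}$ is at most $\tfrac{2\rho}{2\rho+1}\cdot\tfrac{2\rho+1}{2}=\rho$. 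Combining with the previous display, $\sqrt k\max_{0\le k_0<h(k)}|R_{k_0,k}(n)|$ is dominated in probability by a quantity not depending on $F$, so choosing $M$ large enough that this quantity exceeds $M$ with probability tending to $0$ establishes \eqref{e:rem-hall}.

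The only genuine difficulty is bookkeeping: making sure the pointwise bound on $|\log L|$, the high-probability event $\mathcal{E}_n$, and the final tightness estimate are all uniform over the whole Hall class and over $0\le k_0<h(k)$ simultaneously. This uniformity is exactly why I would use the crude, $k_0$-independent domination $8B\,Y_{(n-k,n)}^{-\rho}$ rather than a sharper term-by-term analysis, which is unnecessary here since only an $O_{\mathbb{P}}(1)$ conclusion is claimed.
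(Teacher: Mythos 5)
Your proof is correct and follows essentially the same route as the paper's: dominate $|R_{k_0,k}|$ uniformly in $k_0$ and in $F\in{\cal D}_\xi(B,\rho)$ by a constant multiple of $Y_{(n-k,n)}^{-\rho}$ via the Hall bound on $L$, then use $Y_{(n-k,n)}\stackrel{d}{=}\Gamma_{n+1}/\Gamma_{k+1}$ together with the law of large numbers (Lemma \ref{lem:ratio-conv}) to conclude $\sqrt{k}\,Y_{(n-k,n)}^{-\rho}=O_{\mathbb{P}}(1)$ when $k=O(n^{2\rho/(2\rho+1)})$. The only, harmless, difference is that you linearize $\log L$ on the high-probability event $\{Y_{(n-k,n)}\ge(2B)^{1/\rho}\}$, whereas the paper bounds the increments directly by $\log\bigl((1+BY_{(n-k,n)}^{-\rho})/(1-BY_{(n-k,n)}^{-\rho})\bigr)$ and uses that $\tfrac1x\log\tfrac{1+Bx}{1-Bx}\to 2B$; your explicit event in fact makes transparent the implicit requirement that $1-BY_{(n-k,n)}^{-\rho}$ be eventually positive.
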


\begin{proof}
	By Relation \eqref{e:D-new-def}, we have $1-Bx^{-\rho}\leq L(x)\leq 1+Bx^{-\rho}$. Therefore,
	\begin{eqnarray}
	\label{e:r-ub}
	(k-k_0)R_{k_0,k}&\leq& (k_0+1) \log\frac{1+BY^{-\rho}_{(n-k_0,n)}}{1-BY^{-\rho}_{(n-k,n)}} +\sum_{i=k_0+2}^{k} \log\frac{1+BY^{-\rho}_{(n-i+1,n)}}{1-BY^{-\rho}_{(n-k,n)}} \\\nonumber
	&\leq& k\log\frac{1+BY^{-\rho}_{(n-k,n)}}{1-BY^{-\rho}_{(n-k,n)}},
	\end{eqnarray}
	since $Y^{-\rho}_{(n-k,n)}\geq Y^{-\rho}_{(n-i+1,n)}$ for $i=k_0+1, \cdots,k$.  Similarly, we also have \begin{equation}
	\label{e:r-lb}
	(k-k_0)R_{k_0,k}\geq k\log\frac{1-BY^{-\rho}_{(n-k,n)}}{1+BY^{-\rho}_{(n-k,n)}}=-k\log\frac{1+BY^{-\rho}_{(n-k,n)}}{1-BY^{-\rho}_{(n-k,n)}}.
	\end{equation}Thus, Relations \eqref{e:r-ub} and \eqref{e:r-lb} together imply
	\begin{equation}
	\label{e:r-mod-ub}
	\max_{0\leq k_0<h(k)}\sqrt{k}|R_{k_0,k}|\leq \frac{\sqrt{k}Y^{-\rho}_{(n-k,n)}}{1-h(k)/k}\max_{0\leq k_0<h(k)}\frac{1}{Y^{-\rho}_{(n-k,n)}}\log\frac{1+BY^{-\rho}_{(n-k,n)}}{1-BY^{-\rho}_{(n-k,n)}}\end{equation}
	Since $h(k)=o(k)$, $1-h(k)/k \to 1$ and 
	\begin{eqnarray*}
		\sqrt{k}Y^{-\rho}_{(n-k,n)}\frac{1}{Y^{-\rho}_{(n-k,n)}}\log\frac{1+BY^{-\rho}_{(n-k,n)}}{1-BY^{-\rho}_{(n-k,n)}}&
		\stackrel{d}{=}&\underbrace{\sqrt{k}(\Gamma_{k+1}/\Gamma_{n+1})^{\rho}}_{\Delta_{1k}}\underbrace{\frac{1}{(\Gamma_{k+1}/\Gamma_{n+1})^{\rho}}\log\frac{ 1+B(\Gamma_{k+1}/\Gamma_{n+1})^{\rho}}{1-B(\Gamma_{k+1}/\Gamma_{n+1})^{\rho}}}_{\Delta_{2k}}
	\end{eqnarray*}
	By Relation \eqref{e:gam-conv-0} in Lemma \ref{lem:ratio-conv}, we have  $\Gamma_{k+1}/\Gamma_{n+1} \stackrel{a.s.}{\sim} (k/n)^{\rho}$.  Therefore, for  $k=O(n^{2\rho/(1+2\rho)})$, $\Delta_{1k}$ is $O_\mathbb{P}(1)$.	Since $k/n \to 0$, therefore $(\Gamma_{k+1}/\Gamma_{n+1})^\rho  \stackrel{a.s.}{\longrightarrow} 0$ which implies $\Delta_{2k}\stackrel{a.s.}{\longrightarrow}2B$. 
	
	Thus, there exist $M$  such that 	$$\inf_{F \in {\cal{D}}_{\xi}(B,\rho)}\mathbb{P}_F\Big(\max_{0\leq k_0<k}\frac{k-k_0}{kY^{-\rho}_{(n-k,n)}}|R_{k_0,k}|\leq M\Big)\geq \mathbb{P}(\Delta_{1k}\Delta_{2k}\leq M) \rightarrow 1$$
	This completes the proof.
	
\end{proof}

\subsubsection{Asymptotic Normality}
\label{sec:proof-sec-normal}

\begin{proof}[{\bf Proof of Theorem \ref{prop:E-conv}}] To prove Relation \eqref{e:E-conv}, we observe that
	\begin{equation}
	\label{e:rs-decomp}
	k^\delta\Big|R_{k_0,k}-\frac{k^{-\delta}cA}{(1+\rho)}\Big|\leq k^{\delta}|R_{k_0,k}-S_{k_0,k}|+k^\delta\Big|S_{k_0,k}-\frac{k^{-\delta}cA}{ ({1+\rho})}\Big|
	\end{equation}
	with $S_{k_0,k}$ defined as
	\begin{equation}
	\label{e:s-def}
	S_{k_0,k}:=\frac{cg(Y_{(n-k,n)})}{k-k_0}\Big((k_0+1) \int_{1}^{Y_{(n-k_0,n)}/Y_{(n-k,n)}} \nu^{-\rho-1} d\nu +\sum_{i=k_0+2}^{k} \int_{1}^{Y_{(n-i+1,n)}/Y_{(n-k,n)}} \nu^{-\rho-1} d\nu\Big),
	\end{equation}
	where $Y_i$'s are  i.i.d observations from Pareto(1,1) as in \eqref{e:tail-3}.
	
	We will show that the right hand side of \eqref{e:rs-decomp} vanishes as $k \rightarrow \infty$. To this end, we first show that $k^\delta \max_{0\leq k_0<h(k)}|R_{k_0,k}-S_{k_0,k}|\stackrel{\mathbb{P}}{\longrightarrow}0$ as follows:
	\begin{eqnarray}
	\label{e:rk-sk}
	k^\delta \max_{0\leq k_0<h(k)}|R_{k_0,k}-S_{k_0,k}|&=&k^\delta \max_{0\leq k_0<h(k)}\frac{kg(Y_{n-k,n})}{k-k_0}\Big(\frac{k-k_0}{kg(Y_{(n-k,n)})}|R_{k_0,k}-S_{k_0,k}|\Big)\\\nonumber
	&\leq&\frac{k^\delta g(Y_{(n-k,n)})}{1-h(k)/k}\underbrace{\max_{0\leq k_0<h(k)}\Big(\frac{k-k_0}{kg(Y_{(n-k,n)})}|R_{k_0,k}-S_{k_0,k}|\Big)}_{\Delta_{2k}}
	\end{eqnarray}
	where $1-h(k)/k\rightarrow 1$ since $h(k)=o(k)$. Additionally,   
	$\Delta_{2k} \stackrel{\mathbb{P}}{\longrightarrow}0$ by Lemma \ref{lem:s-def} and  \begin{equation}
	\label{e:del-g}
	k^\delta g(Y_{(n-k,n)})\stackrel{ \mathbb{P}}{\longrightarrow}A
	\end{equation} follows from Relation \eqref{e:g-asy} and assumption \eqref{e:A-def}. Thus, the  bound  in \eqref{e:rk-sk} goes to 0 as $k\to \infty$. 
	
	Next we show that the second term in the right hand side of \eqref{e:rs-decomp} also vanishes. Indeed,
	\begin{eqnarray*}
		k^\delta \max_{0\leq k_0<h(k)}\Big|S_{k_0,k}-\frac{k^{-\delta}cA}{(1+\rho)}\Big|&=&k^\delta \max_{0\leq k_0<h(k)}\frac{kg(Y_{n-k,n})}{k-k_0}\Big|\frac{k-k_0}{kg(Y_{(n-k,n)})}S_{k_0,k}-\frac{cA(k-k_0)}{(1+\rho)k^\delta g(Y_{(n-k,n)})}\Big|\\\nonumber
		&\leq&\frac{k^\delta g(Y_{(n-k,n)})}{1-h(k)/k}\underbrace{\max_{0\leq k_0<h(k)}\Big|\frac{k-k_0}{kg(Y_{(n-k,n)})}S_{k_0,k}-\frac{cA(k-k_0)}{k(1+\rho)k^\delta g(Y_{(n-k,n)})}\Big|}_{\Delta_{3k}}
	\end{eqnarray*}
	where $k^\delta g(Y_{(n-k,n)})\stackrel{\mathbb{P}}{\longrightarrow} A$ as in \eqref{e:del-g} and $1-h(k)/k \rightarrow 1$.
	
	We next show that $\Delta_{3k} \stackrel{\mathbb{P}}{\longrightarrow}0$ as follows: 
	\begin{eqnarray*}
		\Delta_{3k}&\leq&\underbrace{\max_{0\leq k_0<h(k)}\Big|\frac{k-k_0}{kg(Y_{(n-k,n)})}S_{k_0,k}+c{\Big(\frac{k_0}{k}\Big)}^{1+\rho}-\frac{c}{1+\rho}\Big|}_{\Delta_{4k}}\\&+&\underbrace{\max_{0\leq k_0<h(k)}\Big|\frac{c}{1+\rho}-c{\Big(\frac{k_0}{k}\Big)}^{1+\rho}-\frac{cA(k-k_0)}{k(1+\rho)k^\delta g(Y_{(n-k,n)})}\Big|}_{\Delta_{5k}}
	\end{eqnarray*}
	\normalsize where
	$\Delta_{4k}\stackrel{\mathbb{P}}{\longrightarrow}0$ by Lemma \ref{lem:r-def}. Since $\max_{0\leq k_0<k}(k_0/k)^{1+\rho}\leq (h(k)/k)^{1+\rho}\rightarrow 0$, thus to prove $\Delta_{5k} \stackrel{\mathbb{P}}{\longrightarrow}0$, it suffices to show that
	$$\max_{0\leq k_0<h(k)}\Big|\frac{c}{1+\rho}-\frac{cA(k-k_0)}{k(1+\rho)k^\delta g(Y_{(n-k,n)})}\Big|\stackrel{\mathbb{P}}{\longrightarrow}0$$
	In this direction, we observe that
	\begin{eqnarray*}
		\max_{0\leq k_0 \leq h(k)}\Big|\frac{c}{1+\rho}-\frac{cA(k-k_0)}{k(1+\rho)k^\delta g(Y_{(n-k,n)})}\Big|&\leq& \frac{|c|}{1+\rho}\max_{0\leq k_0<h(k)}\Bigg(\Big|1-\frac{ A}{k^\delta g(Y_{(n-k,n)})}\Big|+\frac{ Ak_0}{k^{\delta+1} g(Y_{(n-k,n)})}\Bigg)\\
		&\leq&\frac{|c|}{1+\rho}\Bigg(\Big|1-\frac{ A}{k^\delta g(Y_{(n-k,n)})}\Big|+\frac{ Ah(k)}{k^{\delta+1}g(Y_{(n-k,n)})}\Bigg) \stackrel{\mathbb{P}}{\longrightarrow}0
	\end{eqnarray*}
	since $h(k)/k \rightarrow 0$ and by  Relation \eqref{e:del-g}, $A/k^\delta g(Y_{(n-k,n)}) \stackrel{\mathbb{P}}{\longrightarrow} 1$. This completes the proof.
\end{proof}

\begin{lemma}
	\label{lem:s-def}
	Assumption \eqref{e:L-behav} implies
	\begin{equation}
	\label{e:rem-1}
	\max_{0\leq k_0 \leq k}\Big(	\frac{k-k_0}{kg(Y_{(n-k,n)})}|R_{k_0,k}-S_{k_0,k}|\Big)\stackrel{\mathbb{P}}{\longrightarrow}0
	\end{equation}
	where $R_{k_0,k}$ and $S_{k_0,k}$ are defined in Relations \eqref{e:tail-3} and \eqref{e:s-def}, respectively.
\end{lemma}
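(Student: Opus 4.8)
The plan is to expand the normalized difference into an average of $k$ elementary error terms, each of which is precisely the quantity bounded by the second--order condition \eqref{e:L-behav}. Collecting in $R_{k_0,k}$ (from \eqref{e:tail-3}) and $S_{k_0,k}$ (from \eqref{e:s-def}) the coefficient of the $Y_{(n-k_0,n)}$--term, I would first record the identity
\begin{equation*}
(k-k_0)\big(R_{k_0,k}-S_{k_0,k}\big)=(k_0+1)D_{k_0}+\sum_{m=k_0+1}^{k-1}D_m,\qquad
D_m:=\log\frac{L(t x_m)}{L(t)}-c\,g(t)\int_1^{x_m}\nu^{-\rho-1}\,d\nu,
\end{equation*}
where $t:=Y_{(n-k,n)}$ and $x_m:=Y_{(n-m,n)}/Y_{(n-k,n)}\ge 1$ for $m=k_0,\dots,k-1$ (so $m\mapsto x_m$ is nonincreasing). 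Since the coefficients sum to $(k_0+1)+(k-1-k_0)=k$, the quantity to control is $\frac{1}{k\,g(t)}\big|(k_0+1)D_{k_0}+\sum_{m=k_0+1}^{k-1}D_m\big|\le\frac{1}{k\,g(t)}\big((k_0+1)|D_{k_0}|+\sum_{m=k_0+1}^{k-1}|D_m|\big)$.

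Next, I would fix $\varepsilon\in(0,1)$ and restrict to the event $A_n(\varepsilon):=\{Y_{(n-k,n)}\ge t_\varepsilon\}$. Since $k/n\to 0$, the law of large numbers together with Lemma \ref{lem:ratio-conv} gives $Y_{(n-k,n)}\stackrel{d}{=}(\Gamma_{k+1}/\Gamma_{n+1})^{-1}\to\infty$ in probability, so $\mathbb{P}(A_n(\varepsilon))\to1$ for every fixed $t_\varepsilon$; on $A_n(\varepsilon)$ the bound \eqref{e:L-behav} applies with $t=Y_{(n-k,n)}$ and $x=x_m$ for each $m$. When $\rho>0$ this yields $|D_m|\le\varepsilon\,g(Y_{(n-k,n)})$ for every $m$, hence on $A_n(\varepsilon)$
\begin{equation*}
\max_{0\le k_0\le k}\Big(\frac{k-k_0}{k\,g(Y_{(n-k,n)})}|R_{k_0,k}-S_{k_0,k}|\Big)\le \frac{(k_0+1)+(k-1-k_0)}{k}\,\varepsilon=\varepsilon,
\end{equation*}
a deterministic bound, uniform in $k_0$; letting $\varepsilon\downarrow0$ gives \eqref{e:rem-1}.

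In the boundary case $\rho=0$ the condition \eqref{e:L-behav} only yields $|D_m|\le\varepsilon\,g(Y_{(n-k,n)})\,x_m^{\varepsilon}$, so on $A_n(\varepsilon)$ the quantity is at most $\frac{\varepsilon}{k}\big((k_0+1)x_{k_0}^{\varepsilon}+\sum_{m=k_0+1}^{k-1}x_m^{\varepsilon}\big)$. Here I would use that $m\mapsto x_m$ is nonincreasing, so $(k_0+1)x_{k_0}^{\varepsilon}\le\sum_{m=0}^{k_0}x_m^{\varepsilon}$, and therefore the bound is dominated by $\varepsilon\cdot\frac1k\sum_{m=0}^{k-1}x_m^{\varepsilon}$, which no longer depends on $k_0$. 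It then remains to check that $\frac1k\sum_{m=0}^{k-1}x_m^{\varepsilon}\stackrel{d}{=}\frac1k\sum_{m=0}^{k-1}(\Gamma_{k+1}/\Gamma_{m+1})^{\varepsilon}$ is $O_{\mathbb{P}}(1)$; since $\varepsilon\in(0,1)$, the argument of Lemma \ref{lem:gam-int} shows this average converges a.s.\ to $\int_0^1 x^{-\varepsilon}\,dx=1/(1-\varepsilon)$. Hence, on $A_n(\varepsilon)$, $\max_{0\le k_0\le k}\big(\frac{k-k_0}{k\,g(Y_{(n-k,n)})}|R_{k_0,k}-S_{k_0,k}|\big)\le\varepsilon\,C_n$ with $C_n=O_{\mathbb{P}}(1)$; given $\eta>0$, choosing $\varepsilon$ small enough that $\varepsilon/(1-\varepsilon)<\eta$ and using $\mathbb{P}(A_n(\varepsilon)^{c})\to0$ yields \eqref{e:rem-1}. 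The only delicate points are the passage to the asymptotically certain event $A_n(\varepsilon)$ — which is where $k/n\to0$ enters, guaranteeing $Y_{(n-k,n)}$ eventually exceeds $t_\varepsilon$ — and, in the case $\rho=0$, the monotonicity trick that absorbs the multiplicity $k_0+1$ so that the residual reduces to the averaged quantity handled by Lemma \ref{lem:gam-int}; for $\rho>0$ the estimate is essentially a single deterministic line.
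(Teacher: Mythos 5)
Your proposal is correct and follows essentially the same route as the paper: the same term-by-term decomposition with the top term carrying multiplicity $k_0+1$, restriction to the event $\{Y_{(n-k,n)}>t_\varepsilon\}$ (whose probability tends to one since $Y_{(n-k,n)}\stackrel{d}{=}(\Gamma_{k+1}/\Gamma_{n+1})^{-1}\sim n/k\to\infty$), a deterministic $\varepsilon$-bound for $\rho>0$, and for $\rho=0$ the same monotonicity trick absorbing the multiplicity followed by showing $\frac1k\sum_{i}(Y_{(n-i+1,n)}/Y_{(n-k,n)})^{\varepsilon}$ is $O_{\mathbb P}(1)$ (the paper does this via the uniform order-statistics representation and the weak law, you via the argument of Lemma \ref{lem:gam-int}; both give the limit $1/(1-\varepsilon)$ and suffice).
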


\begin{proof}
	The proof of Relation \eqref{e:rem-1} involves two cases:  $\rho>0$ and $\rho=0$.\vspace{2mm}
	
	\noindent{\em Case $\rho>0$:} 	Since $Y_{(n-i+1,n)}/Y_{(n-k,n)}>1$, $i=1,\cdots, k$,  therefore, over the event $\{Y_{(n-k,n)}>t_\varepsilon\}$, by Relation \eqref{e:L-behav}, we have
	\vspace{-2mm}	
	\begin{eqnarray*}
		(k-k_0)|R_{k_0,k}-S_{k_0,k}|&\leq &(k_0+1)\left|\log \frac{L(Y_{(n-k_0,n)})}{L(Y_{(n-k,n)})}-cg(Y_{(n-k,n)}) \int_{1}^{Y_{(n-k_0,n)}/Y_{(n-k,n)}} \nu^{-\rho-1} d\nu \right|\\
		&+&\sum_{i=k_0+2}^{k}\Bigg|\log \frac{L(Y_{(n-i+1,n)})}{L(Y_{(n-k,n)})}-cg(Y_{(n-k,n)} \int_{1}^{Y_{(n-i+1,n)}/Y_{(n-k,n)}} \nu^{-\rho-1} d\nu \Bigg|\\	
		&\leq&(k_0+1)g(Y_{(n-k,n)})\varepsilon+\sum_{i=k_0+2}^{k}g(Y_{(n-k,n)})\varepsilon =g(Y_{(n-k,n)})k\varepsilon.\vspace{-5mm}
	\end{eqnarray*} 
	Therefore, over the event $\{Y_{(n-k,n)}>t_\varepsilon\}$  
	\begin{equation}\label{e:lem:s-def-1}
	\max_{0\leq k_0 \leq k}\Big(	\frac{k-k_0}{kg(Y_{(n-k,n)})}|R_{k_0,k}-S_{k_0,k}|\Big) \leq \varepsilon.
	\end{equation}
	From Relation \eqref{e:Pareto-order}, we have $Y_{(n-k,n)}\stackrel{d}{=}(\Gamma_{k+1}/\Gamma_{n+1})^{-1}$ where ${(\Gamma_{k+1}/\Gamma_{n+1})}^{-1}\stackrel{a.s.}{\sim}n/k$ by Lemma \ref{lem:ratio-conv}. Since $n/k \to \infty$, therefore 	$$\mathbb{P}(Y_{(n-k,n)} >t_{\varepsilon})\rightarrow 1$$ which completes the proof. \vspace{2mm}
	
	\noindent{\em Case $\rho=0$:} As in the previous case, over the event $\{Y_{(n-k,n)}>t_\varepsilon\}$, by Relation \eqref{e:L-behav} we have	\begin{eqnarray}
	\label{e:diff-0-1}
	(k-k_0)|R_{k_0,k}-S_{k_0,k}|&=&(k_0+1)\left|\log \frac{L(Y_{(n-k_0,n)})}{L(Y_{(n-k,n)})}-cg(Y_{(n-k,n)} \int_{1}^{Y_{(n-k_0,n)}/Y_{(n-k,n)}} \frac{d\nu}{\nu}  \right|\\\nonumber
	&+&\sum_{i=k_0+2}^{k}\left|\log \frac{L(Y_{(n-i+1,n)})}{L(Y_{(n-k,n)})}-cg(Y_{(n-k,n)} \int_{1}^{Y_{(n-i+1,n)}/Y_{(n-k,n)}} \frac{d\nu}{\nu} \right|\\\nonumber
	&\leq&\varepsilon\Bigg((k_0+1)g(Y_{(n-k,n)})\Big(\frac{Y_{(n-k_0,n)}}{Y_{(n-k,n)}}\Big)^{\varepsilon}+\sum_{i=k_0+2}^{k}g(Y_{(n-k,n)})\Big( \frac{Y_{(n-i+1,n)}}{Y_{(n-k,n)}}\Big)^{\varepsilon}\Bigg)\nonumber
	\end{eqnarray}
	Since $Y_{(n-i+1,n)}\geq Y_{(n-k_0,n)}$ for $i=1, \cdots, k_0+1$,  we further obtain
	\begin{equation}\label{e:rho-0-second-instance}
	\max_{0\leq k_0 \leq k}\Big(	\frac{(k-k_0)}{kg(Y_{(n-k,n)})}|R_{k_0,k}-S_{k_0,k}|\Big)\leq \frac{\varepsilon}{k} \sum_{i=1}^{k}\Big(\frac{Y_{(n-i+1,n)}}{Y_{(n-k,n)}}\Big)^{\varepsilon}
	\end{equation}
	over the event $\{Y_{(n-k,n)}>t_\varepsilon\}$. The upper bound in \eqref{e:rho-0-second-instance} can be bounded by $2\varepsilon$ over the event $\{(1/k) \sum_{i=1}^{k}(Y_{(n-i+1,n)}/Y_{(n-k,n)})^{\varepsilon}<2\}$. 
	
	We have already proved that $\mathbb{P}(Y_{(n-k,n)} >t_{\varepsilon})\rightarrow 1$. Thus, to complete the proof of Relation \eqref{e:rem-1}, it only remains to show that
	\begin{equation}
	\label{e:y-eps-con}
	\mathbb{P}\left(\Big\{\frac{1}{k} \sum_{i=1}^{k}(\frac{Y_{(n-i+1,n)}}{Y_{(n-k,n)}})^{\varepsilon}<2\Big\}\right) \rightarrow 1.
	\end{equation}
	In this direction, from Relation \eqref{e:Pareto-order}, we observe that  
	$$\frac{1}{k} \sum_{i=1}^{k}\Big(\frac{Y_{(n-i+1,n)}}{Y_{(n-k,n)}}\Big)^{\varepsilon}\stackrel{d}{=}\frac{1}{k}\sum_{i=1}^{k}\Big(\frac{\Gamma_{i+1}}{\Gamma_{k+1}}\Big)^{-\varepsilon}=\frac{1}{k}\sum_{i=1}^{k}U_{i,k}^{-\varepsilon}\stackrel{\mathbb{P}}{\longrightarrow}\frac{1}{1-\varepsilon}$$
	where the last convergence follows from weak law of large numbers. Thus, Relation \eqref{e:y-eps-con} holds as long as $\varepsilon<0.5$.
	
	This completes the proof for $\rho=0$.
\end{proof}

\begin{lemma}
	\label{lem:g-asy}
	Suppose $g$ is $-\rho$-varying for $\rho\geq 0$ and $Y_{(n-k,n)}$ is the $(k+1)^{th}$ order statistic for $n$ observations from ${\rm Pareto}(1,1)$, then 
	\begin{equation}
	\label{e:g-asy}
	\frac{g(Y_{(n-k,n)})}{g(n/k)}\stackrel{\mathbb{P}}{\longrightarrow}1
	\end{equation}
	provided $k\rightarrow \infty$, $n \rightarrow \infty$ and $k/n \rightarrow \infty$.
\end{lemma}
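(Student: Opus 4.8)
The plan is to reduce the statement to a law-of-large-numbers estimate for a ratio of Gamma random variables and then to transfer that convergence through the regularly varying function $g$ using the uniform convergence theorem for regular variation. Concretely, by Relation \eqref{e:Pareto-order} (as already used in the proof of Lemma \ref{lem:s-def}), the order statistic admits the representation $Y_{(n-k,n)}\stackrel{d}{=}\Gamma_{n+1}/\Gamma_{k+1}$, where $\Gamma_m=E_1+\cdots+E_m$ and the $E_i$ are i.i.d.\ standard exponential. Since $g(Y_{(n-k,n)})/g(n/k)$ is a measurable function of $Y_{(n-k,n)}$ alone (the quantity $n/k$ being deterministic), it has the same distribution as $g(\Gamma_{n+1}/\Gamma_{k+1})/g(n/k)$; as convergence in probability to a constant is equivalent to convergence in distribution to it, it suffices to prove $g(\Gamma_{n+1}/\Gamma_{k+1})/g(n/k)\to 1$ in probability on the canonical space carrying the $E_i$.

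The next step is to localize $Y_{(n-k,n)}$. Writing
\[
\frac{\Gamma_{n+1}/\Gamma_{k+1}}{n/k}=\frac{(n+1)k}{(k+1)n}\cdot\frac{\Gamma_{n+1}/(n+1)}{\Gamma_{k+1}/(k+1)},
\]
the first factor converges to $1$, while applying Relation \eqref{e:gam-conv-0} of Lemma \ref{lem:ratio-conv} with $\rho=-1$ (i.e.\ the uniform-in-$m$ strong law $\sup_{m\ge M}|\Gamma_m/m-1|\to 0$ a.s.) gives $\Gamma_{n+1}/(n+1)\to 1$ and $\Gamma_{k+1}/(k+1)\to 1$ a.s.\ as $k,n\to\infty$. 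Hence $(\Gamma_{n+1}/\Gamma_{k+1})/(n/k)\to 1$ a.s., and since the stated scaling entails $n/k\to\infty$ we also obtain $\Gamma_{n+1}/\Gamma_{k+1}\to\infty$ a.s.

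Finally, fix $\varepsilon>0$. By the uniform convergence theorem for $-\rho$-varying functions, equivalently by Potter's bounds (Theorem~1.5.2 in \cite{bingham1989regular}), there is $x_0=x_0(\varepsilon)$ such that $|g(\lambda x)/g(x)-\lambda^{-\rho}|<\varepsilon$ for all $x\ge x_0$ and all $\lambda\in[1/2,2]$. On the event $\{n/k\ge x_0\}\cap\{(\Gamma_{n+1}/\Gamma_{k+1})/(n/k)\in[1/2,2]\}$, which has probability tending to $1$ by the previous paragraph, taking $x=n/k$ and $\lambda=(\Gamma_{n+1}/\Gamma_{k+1})/(n/k)$ yields $|g(\Gamma_{n+1}/\Gamma_{k+1})/g(n/k)-\lambda^{-\rho}|<\varepsilon$, while $\lambda^{-\rho}\to 1$ a.s.\ by the continuous mapping theorem. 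Combining, $g(\Gamma_{n+1}/\Gamma_{k+1})/g(n/k)\to 1$ in probability, which by the first paragraph is the assertion. The one genuinely delicate point — and the reason this last step is not a one-line continuity argument — is that $g$ need not be continuous, so transferring the convergence through $g$ requires regular-variation machinery together with the divergence $Y_{(n-k,n)}\to\infty$; some care is also needed because both $k$ and $n$ diverge, which is exactly why one invokes the uniform-in-$m$ form of the strong law supplied by Lemma \ref{lem:ratio-conv} rather than the ordinary SLLN.
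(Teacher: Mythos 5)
Your proof is correct and follows essentially the same route as the paper: represent $Y_{(n-k,n)}\stackrel{d}{=}\Gamma_{n+1}/\Gamma_{k+1}$, use the law of large numbers to get $Y_{(n-k,n)}/(n/k)\to 1$, and then transfer through $g$ via the uniform convergence theorem for regular variation (Theorem 1.5.2 of \cite{bingham1989regular}). The only cosmetic difference is that the paper first factors $g(t)=t^{-\rho}l(t)$ and applies the uniform convergence theorem to the slowly varying part $l$ on a shrinking $\lambda$-window, whereas you apply it directly to the $-\rho$-varying function $g$ on the fixed window $\lambda\in[1/2,2]$; both are valid.
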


\begin{proof}
	Since $g$ is $-\rho$ varying, $g$ may be expressed as $g(t)=t^{-\rho}l(t)$, for some slowly varying function $l(\cdot)$. Thus, we have
	$$\frac{g(Y_{(n-k,n)})}{g(n/k)}={\Big(\frac{Y_{(n-k,n)}}{n/k}\Big)}^{-\rho}\frac{l(Y_{(n-k,n)})}{l(n/k)}$$
	From Relation \eqref{e:Pareto-order}, we have $Y_{(n-k,n)}\stackrel{d}{=}\Gamma_{n+1}/\Gamma_{k+1}$ and therefore, by weak law of large numbers, we have $Y_{(n-k,n)}/(n/k)\stackrel{\mathbb{P}}{\longrightarrow}1$.
	
	Thus to prove Relation \eqref{e:g-asy}, it suffices to show $l(Y_{(n-k,n)})/l(n/k)\stackrel{\mathbb{P}}{\longrightarrow}1$. In this direction, observe that for any $\delta>0$, we have
	\begin{eqnarray*}
		\label{e:conv-p}
		\mathbb{P}\Big(\Big|\frac{l(Y_{(n-k,n)})}{l(n/k)}-1\Big|>\varepsilon\Big)&\leq&\mathbb{P}\Big(\Big|\frac{l(Y_{(n-k,n)})}{l(n/k)}-1\Big|>\varepsilon, \Big|\frac{Y_{(n-k,n)}}{n/k}-1\Big|\leq \delta\Big)+\mathbb{P}\Big(\Big|\frac{Y_{(n-k,n)}}{n/k}-1\Big|>\delta\Big)\\
		&\leq& \mathbb{P}\Big( \sup_{\lambda \in [1-\delta,1+\delta)}\Big|\frac{l(\lambda n/k)}{l(n/k)}-1\Big|>\varepsilon\Big)+\mathbb{P}\Big(\Big|\frac{Y_{(n-k,n)}}{n/k}-1\Big|>\delta\Big)
	\end{eqnarray*}
	For $\delta$ small enough, the first term on the right hand side  goes to 0 by Theorem 1.5.2 on page 22 in \cite{bingham1989regular}. Also, for $\delta$ small enough, the second term goes to 0 since $Y_{(n-k,n)}/(n/k)\stackrel{\mathbb{P}}{\longrightarrow}1$.	
	
	This completes the proof.
\end{proof}

\begin{lemma}
	\label{lem:r-def}
	\begin{equation}
	\label{e:R-conv-2}
	\max_{0 \leq k_0 <k}\Big|\frac{k-k_0}{kg(Y_{(n-k,n)})}S_{k_0,k}+\frac{c}{1+\rho}\left(\frac{k_0}{k}\right)^{1+\rho}-\frac{c}{1+\rho}\Big|\stackrel{\mathbb{P}}{\longrightarrow}0.
	\end{equation}
	where $S_{k_0,k}$ is defined in Relation \eqref{e:s-def}.
\end{lemma}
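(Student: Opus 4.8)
The plan is to put the statistic in closed form, pass to partial sums of i.i.d.\ exponentials via Relation \eqref{e:Pareto-order}, and then reduce the claim to two uniform‑in‑$k_0$ limit statements, each of which is handled by splitting the range of $k_0$ into a bulk part $k_0\ge\delta k$ and a boundary part $k_0<\delta k$.

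First I would evaluate the inner integrals in \eqref{e:s-def}: $\int_1^x\nu^{-\rho-1}\,d\nu=(1-x^{-\rho})/\rho$ for $\rho>0$ and $=\log x$ for $\rho=0$. Substituting, the factor $g(Y_{(n-k,n)})$ cancels and, using the counting identity $(k_0+1)+\sum_{i=k_0+2}^k 1=k$, for $\rho>0$ one obtains
\[
\frac{k-k_0}{k\,g(Y_{(n-k,n)})}\,S_{k_0,k}=\frac{c}{\rho}-\frac{c}{\rho}\left[\frac{k_0+1}{k}\Big(\tfrac{Y_{(n-k_0,n)}}{Y_{(n-k,n)}}\Big)^{-\rho}+\frac1k\sum_{i=k_0+2}^{k}\Big(\tfrac{Y_{(n-i+1,n)}}{Y_{(n-k,n)}}\Big)^{-\rho}\right].
\]
By Relation \eqref{e:Pareto-order}, jointly over all $k_0$ we may replace $(Y_{(n-k_0,n)}/Y_{(n-k,n)})^{-\rho}$ by $(\Gamma_{k_0+1}/\Gamma_{k+1})^{\rho}$ and $(Y_{(n-i+1,n)}/Y_{(n-k,n)})^{-\rho}$ by $(\Gamma_i/\Gamma_{k+1})^{\rho}$, where $\Gamma_j=E_1+\cdots+E_j$ with $E_i$ i.i.d.\ standard exponential. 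Since $c/\rho-c/(1+\rho)=c/(\rho(1+\rho))$, the lemma for $\rho>0$ is equivalent to: uniformly in $0\le k_0<k$,
\[
\frac{k_0+1}{k}\Big(\frac{\Gamma_{k_0+1}}{\Gamma_{k+1}}\Big)^{\rho}+\frac1k\sum_{i=k_0+2}^{k}\Big(\frac{\Gamma_i}{\Gamma_{k+1}}\Big)^{\rho}\;\stackrel{P}{\longrightarrow}\;\frac{1}{1+\rho}+\frac{\rho}{1+\rho}\Big(\frac{k_0}{k}\Big)^{1+\rho}.
\]
The $i=k_0+1$ term contributes at most $1/k$, so one may equivalently start the sum at $k_0+1$; and since $\int_{k_0/k}^1 u^\rho\,du=(1-(k_0/k)^{1+\rho})/(1+\rho)$ and $(k_0/k)^{1+\rho}+\int_{k_0/k}^1u^\rho du$ equals the right‑hand side above, it suffices to prove the two uniform limits $\max_{0\le k_0<k}\big|\tfrac{k_0+1}{k}(\Gamma_{k_0+1}/\Gamma_{k+1})^{\rho}-(k_0/k)^{1+\rho}\big|\stackrel{P}{\longrightarrow}0$ and $\max_{0\le k_0<k}\big|\tfrac1k\sum_{i=k_0+1}^{k}(\Gamma_i/\Gamma_{k+1})^{\rho}-\int_{k_0/k}^1 u^\rho\,du\big|\stackrel{P}{\longrightarrow}0$.

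For each of these I would fix $\varepsilon>0$, choose $\delta>0$ small, and split on $\{k_0\ge\delta k\}$ versus $\{k_0<\delta k\}$. On $\{k_0\ge\delta k\}$ all indices tend to infinity, so Relations \eqref{e:gam-conv-0}--\eqref{e:gam-conv-1} of Lemma \ref{lem:ratio-conv} let me replace $\Gamma_j/j$ by $1$ uniformly: the first quantity becomes $(k_0/k)^{1+\rho}(1+o(1))$ uniformly, and the sum becomes the Riemann sum $\tfrac1k\sum_{i=k_0+1}^k(i/k)^{\rho}$, which converges to $\int_{k_0/k}^1 u^{\rho}\,du$ uniformly in $k_0$ because $u\mapsto u^{\rho}$ is monotone and bounded on $[0,1]$ (the right‑endpoint Riemann‑sum error is then $O(1/k)$ uniformly). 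On $\{k_0<\delta k\}$, monotonicity of $j\mapsto j\,\Gamma_j^{\rho}$ and of $j\mapsto\Gamma_j$ gives $\max_{k_0<\delta k}\tfrac{k_0+1}{k}(\Gamma_{k_0+1}/\Gamma_{k+1})^{\rho}\le\tfrac{\lceil\delta k\rceil}{k}(\Gamma_{\lceil\delta k\rceil}/\Gamma_{k+1})^{\rho}\to\delta^{1+\rho}$ and likewise $\max_{k_0<\delta k}\tfrac1k\sum_{i=1}^{k_0}(\Gamma_i/\Gamma_{k+1})^{\rho}\le\delta^{1+\rho}$, while the intended limits $(k_0/k)^{1+\rho}$ and $\int_0^{k_0/k}u^{\rho}\,du$ are also $O(\delta^{1+\rho})$ there; hence the boundary contribution is $\le C\delta^{1+\rho}$ with probability tending to $1$, which is $<\varepsilon$ for $\delta$ small. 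Combining the two ranges and then letting $\delta\downarrow0$ yields the two limits, hence the lemma for $\rho>0$.

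For $\rho=0$ the same scheme applies with $\log$ in place of the antiderivative, reducing the claim to $\tfrac1k\big[(k_0+1)\log(\Gamma_{k+1}/\Gamma_{k_0+1})+\sum_{i=k_0+2}^k\log(\Gamma_{k+1}/\Gamma_i)\big]\stackrel{P}{\longrightarrow}1-k_0/k$ uniformly in $k_0$; the two pieces there converge to $-(k_0/k)\log(k_0/k)$ and $\int_{k_0/k}^1\log(1/u)\,du=1-k_0/k+(k_0/k)\log(k_0/k)$, whose sum is $1-k_0/k$. The only extra point is that $\log(1/u)$ is unbounded at $0$, so on $\{k_0<\delta k\}$ I would dominate $\log(\Gamma_{k+1}/\Gamma_i)\le\varepsilon^{-1}(\Gamma_{k+1}/\Gamma_i)^{\varepsilon}$ for a fixed small $\varepsilon\in(0,1)$ and use $\tfrac1k\sum_{i=1}^{k_0}(\Gamma_i/\Gamma_{k+1})^{-\varepsilon}\stackrel{d}{=}\tfrac1k\sum_{i=1}^{k_0}U_{i,k}^{-\varepsilon}=O((k_0/k)^{1-\varepsilon})=O(\delta^{1-\varepsilon})$, exactly as in the proof of Lemma \ref{lem:s-def}; together with $\int_0^{k_0/k}\log(1/u)\,du=O(\delta\log(1/\delta))$ this makes the boundary part negligible. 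The main obstacle throughout is precisely this uniformity down to $k_0=0$: for small $k_0$ the partial sums $\Gamma_i$ with $i$ bounded do not concentrate, so pointwise convergence of the individual terms fails; the resolution is that in that regime both the statistic and its target limit are of order $\delta^{1+\rho}$ (resp.\ $\delta\log(1/\delta)$) and hence uniformly negligible, with the $\rho=0$ case additionally requiring the power‑domination bound to tame the logarithm.
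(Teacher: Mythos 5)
Your argument is correct, and after the shared opening move---evaluating the integrals in \eqref{e:s-def} and passing to Gamma variables via \eqref{e:Pareto-order}, which reproduces the paper's identity \eqref{e:lem:r-def-1}---it follows a genuinely different route. You recast the claim as two uniform Riemann-sum limits and split at a proportional threshold $\delta k$: the bulk $k_0\ge\delta k$ is handled by the uniform LLN of Lemma \ref{lem:ratio-conv} plus a monotone Riemann-sum bound, and the boundary $k_0<\delta k$ by noting that statistic and limit are both $O(\delta^{1+\rho})$ there. The paper instead isolates a $k_0$-free term $A_k$ (controlled by Lemma \ref{lem:gam-int}) and a correction $B_{k_0,k}$, splits at a fixed threshold $M$, and lets $k\to\infty$ before $M\to\infty$; for $\rho=0$ it avoids logarithmic Riemann sums altogether by recognizing the expression as $c\big((k-k_0)/k\big)\big(\widehat{\xi}^{**}_{k_0,k}-1\big)$ and invoking \eqref{e:xi-jt-big}, reducing everything to $\max_{0\le k_0<k}|\Gamma_{k-k_0}-(k-k_0)|/k\to 0$, whereas you work with the logarithms directly and tame the singularity of $\log(1/u)$ at $0$ by the domination $\log x\le\varepsilon^{-1}x^{\varepsilon}$, as in Lemma \ref{lem:s-def}. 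Your version is more elementary and makes the limiting integrals explicit; the paper's is shorter, especially at $\rho=0$. Two small points to tighten, neither a real gap: in the boundary step for the sum you compare initial segments ($\tfrac1k\sum_{i\le k_0}$ against $\int_0^{k_0/k}u^{\rho}\,du$), which tacitly uses the full-sum limit $\tfrac1k\sum_{i=1}^{k}(\Gamma_i/\Gamma_{k+1})^{\rho}\to 1/(1+\rho)$ (exactly the content of Lemma \ref{lem:gam-int}, or a one-line LLN via \eqref{e:gam-u}); alternatively compare with the value at $k_0'=\lceil\delta k\rceil$, where your bulk estimate already applies. Similarly, the claim $\tfrac1k\sum_{i\le k_0}U_{(i,k)}^{-\varepsilon}=O_{\mathbb{P}}(\delta^{1-\varepsilon})$ uniformly in $k_0<\delta k$ needs one more line than the full-sum WLLN used in Lemma \ref{lem:s-def} (e.g., monotonicity in $k_0$ together with an expectation/Markov bound).
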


\begin{proof}
	The proof of Relation \eqref{e:R-conv-2} involves two cases:  $\rho>0$ and $\rho=0$.\vspace{2mm}
	
	\noindent{\em Case $\rho>0$:} Using the expression of $S_{k_0,k}$ in Relation \eqref{e:s-def}, we get
	\begin{eqnarray}\label{e:lem:r-def-1}
	\frac{k-k_0}{kg(Y_{(n-k,n)})} S_{k_0,k} &=&-\frac{c}{k\rho}\Bigg((k_0+1)\Big(\frac{Y_{(n-k_0,n)}}{Y_{(n-k,n)}}\Big)^{-\rho}+\sum_{i=k_0+2}^{k}\Big(\frac{Y_{(n-i+1,n)}}{Y_{(n-k,n)}}\Big)^{-\rho}-k\Bigg) \\
	&=&\frac{c}{k\rho}\sum_{i=1}^{k_0}
	\left\{\Big(\frac{Y_{(n-i+1,n)}}{Y_{(n-k,n)}}\Big)^{-\rho}-\Big(\frac{Y_{(n-k_0,n)}}{Y_{(n-k,n)}}\Big)^{-\rho}\right\} -\frac{c}{k\rho}\sum_{i=1}^{k}\left\{\Big(\frac{Y_{(n-i+1,n)}}{Y_{(n-k,n)}}\Big)^{-\rho}-1\right\}\nonumber
	\end{eqnarray}
	Expressing the order statistics of Pareto in terms of Gamma random variables as in Relation \eqref{e:Pareto-order}, we get
	\
	\begin{eqnarray*}
		\frac{k-k_0}{kg(Y_{(n-k,n)})} S_{k_0,k}+\frac{c}{1+\rho}{\left(\frac{k_0}{k}\right)}^{1+\rho} -\frac{c}{1+\rho}&\stackrel{d}{=}&
		\underbrace{\frac{c}{1+\rho}{\left(\frac{k_0}{k}\right)}^{1+\rho}+\frac{c}{k\rho}\sum_{i=1}^{k_0}\Big\{\Big(\frac{\Gamma_{i+1}}{\Gamma_{k+1}}\Big)^{\rho}-\Big(\frac{\Gamma_{k_0+1}}{\Gamma_{k+1}}\Big)^{\rho}\Big\}}_{B_{k_0,k}}\\&-&\underbrace{\left(\frac{c}{k\rho}\sum_{i=1}^{k}\Big\{\Big(\frac{\Gamma_{i+1}}{\Gamma_{k+1}}\Big)^{\rho}-1\Big\}+\frac{c}{1+\rho}\right)}_{A_k}
	\end{eqnarray*}
	\noindent In view of the above result, to prove \eqref{e:R-conv-2}, we first show that $\max_{0 \leq k_0 <k}|A_k| \stackrel{a.s.}{\longrightarrow}0.$ 
	
	Note that, by Relation \eqref{e:sum-rho-conv}, we have $|(1/k)\sum_{i=1}^{k}(\Gamma_{i+1}/\Gamma_{k+1})^{\rho}-1/(1+\rho)| \stackrel{a.s.}{\longrightarrow} 0 $. This implies that there  exists  $\Omega$ with $\mathbb{P}(\Omega)=1$ such that for any $\omega \in \Omega$,   
	$$\Big|A_k(\omega)+\frac{c}{1+\rho}\Big|=\Big|\frac{c}{\rho k}\sum_{i=1}^k\Big(\frac{\Gamma_{i+1}}{\Gamma_{k+1}}\Big)^{\rho}(\omega)-\frac{c}{\rho}+\frac{c}{1+\rho}\Big|=\Big|\frac{c_{\rho}}{k}\sum_{i=1}^k\Big(\frac{\Gamma_{i+1}}{\Gamma_{k+1}}\Big)^{\rho}(\omega)-\frac{c_\rho}{1+\rho}\Big|{\rightarrow} 0$$
	We next show that $\max_{0 \leq k_0 <k}B_{k_0,k} \stackrel{a.s.}{\longrightarrow}0$.  For this observe that for any $\omega \in \Omega$,
	\begin{eqnarray}
	\label{e:b-1}
	\max_{0 \leq k_0 <M}B_{k_0,k}(\omega)&\leq& \max_{0 \leq k_0 <M} \left\{\frac{c}{1+\rho}{\left(\frac{k_0}{k}\right)}^{1+\rho}+\frac{c}{k\rho}\sum_{i=1}^{k_0}\Big|\Big(\frac{\Gamma_{i+1}}{\Gamma_{k+1}}\Big)^{\rho}(\omega)-\Big(\frac{\Gamma_{k_0+1}}{\Gamma_{k+1}}\Big)^{\rho}(\omega)\Big|\right\}\\\nonumber
	&\leq&  \max_{0 \leq k_0 <M}\left\{\frac{c}{1+\rho}{\left(\frac{k_0}{k}\right)}^{1+\rho}+\frac{2ck_0}{k\rho}\right\}\hspace{2mm}(\textmd{since ${(\Gamma_i /\Gamma_{k+1})}^{\rho} \leq 1$, $1\leq i\leq k$,  $\rho>0$})\\\nonumber
	&\leq& \frac{cM^{1+\rho}/(1+\rho)+2cM/\rho}{k} =\frac{B_{0M}}{k}.
	\end{eqnarray}
	Additionally, we have
	\begin{eqnarray}
	\label{e:b-2}
	\max_{M \leq k_0 <k} B_{k_0,k}(\omega)&\leq& \max_{M \leq k_0 <k}  \left|c_{1+\rho}{\left(\frac{k_0}{k}\right)}^{1+\rho}+\frac{c_\rho}{k}\sum_{i=1}^{k_0}\Big\{\Big(\frac{\Gamma_{i+1}}{\Gamma_{k+1}}\Big)^{\rho}(\omega)-\Big(\frac{\Gamma_{k_0+1}}{\Gamma_{k+1}}\Big)^{\rho}(\omega)\Big\}\right|\\\nonumber
	&\leq&  \max_{M \leq k_0 <k}{\left(\frac{k_0}{k}\right)}^{1+\rho}\left|\frac{c_{1+\rho}}{c_\rho}+{\left(\frac{k_0}{k}\right)}^{\rho}\frac{1}{k_0}\sum_{i=1}^{k_0}\Big\{\Big(\frac{\Gamma_{i+1}}{\Gamma_{k+1}}\Big)^{\rho}(\omega)-\Big(\frac{\Gamma_{k_0+1}}{\Gamma_{k+1}}\Big)^{\rho}(\omega)\Big\}\right|\\\nonumber
	&\leq & \max_{M \leq k_0 <k}\Bigg|\frac{\rho}{1+\rho}+\Big(\frac{\Gamma_{k_0+1}/k_0}{\Gamma_{k+1}/k}\Big)^{\rho}(\omega)\Big\{\underbrace{\frac{1}{k_0}\sum_{i=1}^{k_0}\Big(\frac{\Gamma_{i+1}}{\Gamma_{k_0+1}}\Big)^{\rho}(\omega)}_{C_{k_0}(\omega)}-1\Big\}\Bigg|\\\nonumber
	&=&\max_{M \leq k_0 <k}\left|\frac{\rho}{1+\rho}+(C_{k_0}(\omega)-1)+(C_{k_0}(\omega)-1)\Big\{\Big(\frac{\Gamma_{k_0+1}/k_0}{\Gamma_{k+1}/k}\Big)^{\rho}-1\Big\}\right|.\nonumber
	\end{eqnarray}

	Since $\Gamma_{i+1}<\Gamma_{k_0+1}$ and $\rho>0$, thereby $|C_{k_0}|<1$. This allows us to simplify Relation \eqref{e:b-2} as
	\begin{eqnarray*}
		\max_{M \leq k_0 <k} B_{k_0,k}(\omega)&\leq&\underbrace{ \sup_{M \leq k_0 }\left|C_{k_0}(\omega)-\frac{1}{1+\rho}\right|}_{B_{1M}(\omega)}+2\underbrace{\sup_{M \leq k_0,k}\left|\Big(\frac{\Gamma_{k_0+1}/k_0}{\Gamma_{k+1}/k}\Big)^{\rho}(\omega)-1\right|}_{B_{2M}(\omega)}.
	\end{eqnarray*}
	Thus, we obtain $$\max_{0 \leq k_0 <k}B_{k_0,k}(\omega) \leq \frac{B_{0M}}{k}+B_{1M}(\omega)+B_{2M}(\omega).$$
	Taking $\limsup$ w.r.t to $k$ on both sides, we get
	\begin{equation}
	\label{e:k-sup}
	\limsup_{k \rightarrow \infty}\max_{0 \leq k_0 <k}B_{k_0,k}(\omega)\leq B_{1M}(\omega)+B_{2M}(\omega).
	\end{equation}
	Using Lemmas  \ref{lem:sum-conv} and \ref{lem:ratio-conv} shows that $B_{1M}(\omega) \rightarrow 0$  and $B_{2M}(\omega) \rightarrow 0$ for all $\omega \in \Omega$ with $\mathbb{P}(\Omega)=1$.
	
	Thus, taking  $\limsup$ w.r.t $M$ on both sides of Relation \eqref{e:k-sup} completes the proof for $\rho<0$.
	\vspace{2mm}
	
	\noindent {\em Case $\rho=0$:}  Using the expression of $S_{k_0,k}$ in Relation \eqref{e:s-def}, we get
	\begin{eqnarray}\label{e:rho-0-case}
	\frac{k-k_0}{kg(Y_{(n-k,n)})} S_{k_0,k}+\frac{ck_0}{k}-c &=&\frac{c}{k}\Bigg((k_0+1)\log \Big(\frac{Y_{(n-k_0,n)}}{Y_{(n-k,n)}}\Big)+\sum_{i=k_0+2}^{k}\log\Big(\frac{Y_{(n-i+1,n)}}{Y_{(n-k,n)}}\Big)\Bigg)-\frac{c(k-k_0)}{k}\nonumber\\
	&\stackrel{d}{=}&\frac{c(k-k_0)}{k}\widehat{\xi}^{**}_{k_0,k}-\frac{c(k-k_0)}{k}\nonumber\\
	&\stackrel{d}{=}&c\Big(\frac{\Gamma_{k-k_0}}{k}-\frac{k-k_0}{k}\Big),
	\end{eqnarray}
	where $\widehat{\xi}^{**}_{k_0,k}$ is the trimmed Hill estimator in Relation \eqref{e:xi-opt} with $X_i$'s replaced by the i.i.d. ${\rm Pareto}(1,1)$. The last distribution equality in Relation \eqref{e:rho-0-case} follows from Relation \eqref{e:xi-jt-big}.
	
	Thus, to prove Relation \eqref{e:R-conv-2}, we shall next show $\max_{0\leq k_0<k}|\Gamma_{k-k_0}-(k-k_0)|/k\stackrel{a.s.}{\longrightarrow}0$. In this direction, we have
	\begin{eqnarray}
	\label{e:max-gam}
	\max_{0\leq k_0<k}\frac{|\Gamma_{k-k_0}(\omega)-(k-k_0)|}{k}&=&\max_{0\leq k_0<k}\frac{(k-k_0)}{k}\Big|\frac{\Gamma_{k-k_0}}{k-k_0}(\omega)-1\Big|\\\nonumber
	&\leq&\frac{M}{k}\max_{0\leq k-k_0<M}\Big|\frac{\Gamma_{k-k_0}}{k-k_0}(\omega)-1\Big|+\sup_{k-k_0\geq M}\Big|\frac{\Gamma_{k-k_0}}{k-k_0}(\omega)-1\Big|\\
	&\leq&\frac{M}{k}\underbrace{\sup_n\Big|\frac{\Gamma_{n}}{n}(\omega)-1\Big|}_{B_0(\omega)}+\underbrace{\sup_{n\geq M} \Big|\frac{\Gamma_{n}}{n}(\omega)-1\Big|}_{B_{1M}(\omega)}
	\end{eqnarray}
	By SLLN, there exists $\Omega$ with $\mathbb{P}(\Omega)=1$ such that for every $\omega \in \Omega$,  $|\Gamma_n(\omega)/n-1| \to 0$ as $n \rightarrow \infty$. This also implies that $\sup_{n\geq M}|\Gamma_n(\omega)/n-1| \to 0$ as $M \to \infty$. Therefore, $B_0(\omega)$ is bounded and $B_{1M}(\omega)$ converges to 0.
	
	Thus, first taking $\limsup$ with respect to $k$  followed by $\lim$ with respect to $M$ on both sides of Relation \eqref{e:max-gam}, the proof follows.
\end{proof}

\subsubsection{Consistency of the Weighted Sequential Testing}
\label{sec:proof-sec-cons}

\begin{proof}[{\bf Proof of Theorem \ref{prop:T-conv}}]
	From Relation \eqref{e:T-i-k}, we have
	\begin{eqnarray*}
		\label{e:T-conv-mod}
		k^{\delta}\max_{0\leq k_0<h(k)}|T_{k_0,k}-T^*_{k_0,k}|&=&k^\delta \max_{0\leq k_0 <h(k)}\frac{k-k_0-1}{k-k_0}\Big|\frac{\widehat{\xi}_{k_0+1,k}}{\widehat{\xi}_{k_0,k}}-\frac{\widehat{\xi}^*_{k_0+1,k}}{\widehat{\xi}^*_{k_0,k}}\Big|\\
		&\leq&\frac{k^{\delta}}{1-h(k)/k}\max_{0\leq k_0 <h(k)}\underbrace{\Big|\frac{\widehat{\xi}_{k_0+1,k}}{\widehat{\xi}_{k_0,k}}-\frac{\widehat{\xi}^*_{k_0+1,k}}{\widehat{\xi}^*_{k_0,k}}\Big|}_{W_{k_0,k}}
	\end{eqnarray*}
	We shall show that the upper bound in the above relation converges to 0. To see this, note that $h(k)=o(k)$ which implies $1-h(k)/k \to 1$. Thus, it only remains to prove $
	k^\delta \max_{0\leq k_0 <h(k)}W_{k_0,k} \stackrel{\mathbb{P}}{\longrightarrow}0$. To this end, we observe that
	\begin{eqnarray*}
		W_{k_0,k} &\leq&\Big|\frac{\widehat{\xi}_{k_0+1,k}}{\widehat{\xi}_{k_0,k}}-\frac{\widehat{\xi}^*_{k_0+1,k}}{\widehat{\xi}_{k_0,k}}-\frac{cAk^{-\delta}}{(1+\rho)\widehat{\xi}_{k_0,k}}\Big|+\frac{|c|Ak^{-\delta}}{(1+\rho)\widehat{\xi}_{k_0,k}}\Big|1-\frac{\widehat{\xi}^*_{k_0+1,k}}{\widehat{\xi}^*_{k_0,k}}\Big|\\
		&+&\Big|\frac{\widehat{\xi}^*_{k_0+1,k}}{\widehat{\xi}_{k_0,k}}-\frac{\widehat{\xi}^*_{k_0+1,k}}{\widehat{\xi}^*_{k_0,k}}+\frac{cAk^{-\delta}}{(1+\rho)\widehat{\xi}_{k_0,k}}\frac{\widehat{\xi}^*_{k_0+1,k}}{\widehat{\xi}^*_{k_0,k}}\Big|\\
		&=&\frac{1}{\widehat{\xi}_{k_0,k}}\Bigg(\Big|R_{k_0,k}-\frac{cAk^{-\delta}}{1+\rho}\Big|+\frac{|c|Ak^{-\delta}}{(1+\rho)}\Big|1-\frac{\widehat{\xi}^*_{k_0+1,k}}{\widehat{\xi}^*_{k_0,k}}\Big|+\frac{\widehat{\xi}^*_{k_0+1,k}}{\widehat{\xi}^*_{k_0,k}}\Big|\frac{cAk^{-\delta}}{(1+\rho)}-R_{k_0+1,k}\Big|\Bigg)\\
	\end{eqnarray*}
	\normalsize
	where $R_{k_0,k}$ is defined in Relation \eqref{e:tail-3}. Thus, the quantity $
	k^\delta \max_{0\leq k_0 <h(k)}W_{k_0,k}$ is bounded above as follows
	\begin{eqnarray}
	\label{e:w-m-b}
	\max_{0\leq k_0 <h(k)}k^\delta W_{k_0,k}&\leq& \Big(M_{1k}+\frac{|c|A}{(1+\rho)}\max_{0\leq k_0  h(k)}|1-B_{k_0,k}|+M_{1k}\max_{0\leq k_0 \leq h(k)}B_{k_0,k}\Big)\max_{0\leq k_0 <h(k)} \frac{1}{\widehat{\xi}_{k_0,k}}\nonumber\\
	&=&\Big(M_{1k}\max_{0\leq k_0 \leq h(k)}(1+B_{k_0,k})+\frac{|c|A}{(1+\rho)}\max_{0\leq k_0 \leq h(k)}|1-B_{k_0,k}|\Big)\max_{0\leq k_0 <h(k)} \frac{1}{\widehat{\xi}_{k_0,k}}\nonumber\\
	\end{eqnarray}
	\vspace{-3mm}
	where $$M_{1k}:=k^\delta \max_{0\leq k_0<h(k)}\Big|R_{k_0,k}-\frac{cAk^{-\delta}}{1+\rho}\Big|\quad \mbox{and}\quad B_{k_0,k}:=\frac{\widehat{\xi}^*_{k_0+1,k}}{\widehat{\xi}^*_{k_0,k}}.$$
	Theorem \ref{prop:E-conv} implies $M_{1k}\stackrel{\mathbb{P}}{\longrightarrow}0$ as $k \to \infty$. On the other hand, by Relation \eqref{e:xi-jt}, 
	\begin{eqnarray*}
		\label{e:B-conv}
		\max_{0\leq k_0 \leq h(k)}|1-B_{k_0,k}|&\stackrel{d}{=}&\max_{0\leq k_0 \leq h(k)}\Big|1-\frac{\Gamma_{k-k_0-1}/(k-k_0-1)}{\Gamma_{k-k_0}/(k-k_0)}\Big|\\\nonumber
		&\leq &\frac{1}{1-h(k)/k}\max_{k-h(k)\leq i \leq k}\Big|\frac{\Gamma_{i}/i}{\Gamma_{i+1}/(i+1)}-1\Big|\stackrel{a.s.}{\longrightarrow}0,
	\end{eqnarray*} where the last convergence is a consequence of   \eqref{e:gam-conv-1}. Using a similar argument, one can also show that $\max_{0\leq k_0 \leq h(k)}(1+B_{k_0,k})=O_{\mathbb{P}}(1)$.
	
	We shall end the proof by showing that that $\min_{0\leq k_0 <h(k) }|\widehat{\xi}_{k_0,k}|$ is bounded away from 0 in probability and thus in view of Relation \eqref{e:w-m-b}, the convergence of 	$\max_{0\leq k_0 <h(k)}k^\delta W_{k_0,k}$ to 0 in probability shall follow. To this end, we have,
	\begin{eqnarray}
	\label{e:xi-bound}
	\min_{0\leq k_0 <h(k) }\widehat{\xi}_{k_0,k}\geq \min_{0\leq k_0 <h(k) }\widehat{\xi}^*_{k_0,k}-\max_{0\leq k_0 <h(k) }|\widehat{\xi}_{k_0,k}-\widehat{\xi}^*_{k_0,k}|
	\end{eqnarray}
	For $\delta>0$,  Theorem \ref{prop:E-conv} implies $\max_{0\leq k_0 <h(k) }|\widehat{\xi}_{k_0,k}-\widehat{\xi}^*_{k_0,k}| \stackrel{\mathbb{P}}{\longrightarrow}0$. Therefore $\min_{0 \leq k_0 <h(k)}\widehat{\xi}_{k_0,k}$ is bounded away from 0 as long as  $\min_{0 \leq k_0 <h(k)}\widehat{\xi}^*_{k_0,k}$ is bounded away from 0. This is easy to show because
	$$\min_{0 \leq k_0 <h(k)}\widehat{\xi}^*_{k_0,k}\stackrel{d}{=}\min_{0 \leq k_0 <h(k)}\frac{\Gamma_{k-k_0}}{k-k_0}\geq 1-\max_{k-h(k) \leq i <k}\Big|\frac{\Gamma_{i}}{i}-1\Big|\stackrel{a.s.}{\longrightarrow}1$$
	where the last convergence is a direct consequence of Relation \eqref{e:gam-conv-0}. This completes the proof.
\end{proof}
\vspace{-1mm}

\begin{proof}[{\bf Proof of Theorem \ref{prop:U-conv}}]
	{\em Proof of Relation \eqref{e:U-conv}}:	By Relation \eqref{e:U-i-k}, we have
	\begin{eqnarray}
	\label{e:u-conv-1}
	k^{\delta-1} \max_{0\leq k_0<h(k)}|U_{k_0,k}-U^*_{k_0,k}|&=&	2k^{\delta-1}\max_{0\leq k_0<h(k)}\Big||{(T_{k_0,k})}^{k-k_0-1}-0.5|-|{(T^*_{k_0,k})}^{k-k_0-1}-0.5|\Big|\nonumber\\
	&\leq&	2k^{\delta-1} \max_{0\leq k_0<h(k)}\Big|{(T_{k_0,k})}^{k-k_0-1}-{(T^*_{k_0,k})}^{k-k_0-1}\Big|\\
	&\leq&	2k^{\delta-1}\max_{0\leq k_0<h(k)}\Big|\Big(\frac{T_{k_0,k}}{T^*_{k_0,k}}\Big)^{k-k_0-1}-1\Big|\nonumber
	\end{eqnarray}
	where the last bound follows $T^*_{k_0,k}\leq 1$ (see Proposition \ref{T-def}). In view of Relation \eqref{e:u-conv-1}, to prove Relation \eqref{e:U-conv}, it suffices to show
	\begin{equation}
	\label{e:U-conv-mod}
	k^{\delta-1}\max_{0\leq k_0<h(k)}\Big|\Big(\frac{T_{k_0,k}}{T^*_{k_0,k}}\Big)^{k-k_0-1}-1\Big|\stackrel{\mathbb{P}}{\longrightarrow}0.
	\end{equation}
	To this end, we begin by showing  \begin{equation} \label{e:TU-conv-mod}
	k^{\delta}\max_{0\leq k_0<h(k)}\Big|\frac{T_{k_0,k}}{T^*_{k_0,k}}-1\Big|\stackrel{\mathbb{P}}{\longrightarrow}0.\end{equation}
	In this direction, observe that
	$$k^{\delta}\max_{0\leq k_0<h(k)}\Big|\frac{T_{k_0,k}}{T^*_{k_0,k}}-1\Big|\leq \frac{1}{\min_{0\leq k_0< h(k)}T^*_{k_0,k}}\max_{0\leq k_0<h(k)}\underbrace{k^{\delta}|T_{k_0,k}-T^*_{k_0,k}|}_{\Delta_{k}},$$
	where $\Delta_{k}\stackrel{\mathbb{P}}{\longrightarrow}0$ by Relation  \eqref{e:T-conv}. Thus, Relation \eqref{e:TU-conv-mod} holds as long as $\min_{0\leq k_0< h(k)}T^*_{k_0,k}$ is bounded away from 0 in probability as shown next.
	$$\min_{0 \leq k_0 <h(k)}T^*_{k_0,k}\stackrel{d}{=}\min_{0 \leq k_0 <h(k)}\frac{\Gamma_{k-k_0-1}/(k-k_0-1)}{\Gamma_{k-k_0}/(k-k_0)}\geq 1-\max_{k-h(k) \leq i <k}\Big|\frac{\Gamma_{i}/i}{\Gamma_{i+1}/(i+1)}-1\Big|\stackrel{a.s.}{\longrightarrow}1,$$
	where the last convergence is a direct consequence of Relation \eqref{e:gam-conv-1}.  
	
	Finally to prove Relation \eqref{e:U-conv-mod}, we shall equivalently show that  for every subsequence $\{k_l\}$, there exists a further subsequence ${\widetilde{k}}$ such that
	\begin{equation}
	\label{e:U-conv-as}
	{\widetilde{k}}^{\delta-1}\max_{0\leq k_0<h({\widetilde{k}})}\Big|\Big(\frac{T_{k_0,{\widetilde{k}}}}{T^*_{k_0,{\widetilde{k}}}}\Big)^{{\widetilde{k}}-k_0-1}-1\Big|\stackrel{a.s.}{\longrightarrow}0.
	\end{equation}
	This is shown next as follows.	In view of Relation \eqref{e:TU-conv-mod}, for every subsequence $\{k_l\}$, there exists a further subsequence ${\widetilde{k}}$ such that
	$$\widetilde{k}^{\delta}\max_{0\leq k_0<h(\widetilde{k})}\Big|\frac{T_{k_0,\widetilde{k}}}{T^*_{k_0,\widetilde{k}}}-1\Big|\stackrel{a.s.}{\longrightarrow}0.$$
	Hence, there is event an $\Omega$ with $\mathbb{P}(\Omega)=1$ such that for every $\epsilon>0$, there exist a $M=M(\omega, \epsilon)$
	\begin{equation} \label{e:ineq-1}
	1-\frac{\epsilon}{\widetilde{k}^\delta}\leq\Big(\frac{T_{k_0,\widetilde{k}}}{T^*_{k_0,\widetilde{k}}}\Big)(\omega)\leq 1+\frac{\epsilon}{\widetilde{k}^\delta},\hspace{5mm} \textmd{for all} \:\:  \widetilde{k}\geq M,\: 0\leq k_0<h(\widetilde{k}) \:\mbox{and}\: \omega \in \Omega\end{equation}
	Therefore,
	\begin{eqnarray*}
		\underbrace{\widetilde{k}^{\delta-1}\Big (\Big(1-\frac{\epsilon}{\widetilde{k}^\delta}\Big)^{\widetilde{k}-h(\widetilde{k})-1}-1\Big)}_{-a_{\widetilde{k}}}\leq& \widetilde{k}^{\delta-1}\Big({\Big(\frac{T_{k_0,\widetilde{k}}}{T^*_{k_0,\widetilde{k}}}\Big)}^{\widetilde{k}-k_0-1}(\omega)-1\Big)&\leq \underbrace{\widetilde{k}^{\delta-1}\Big (\Big(1+\frac{\epsilon}{\widetilde{k}^\delta}\Big)^{\widetilde{k}-1}-1\Big)}_{b_{\widetilde{k}}}\end{eqnarray*}
	which equivalently implies
	\begin{equation}
	\label{e:T-lim-1}
	\widetilde{k}^{{\delta-1}}\max_{0\leq k_0<h(\widetilde{k})}\Big|\Big(\frac{T_{k_0,\widetilde{k}}}{T^*_{k_0,\widetilde{k}}}\Big)^{\widetilde{k}-k_0-1}(w)-1\Big|\leq a_{\widetilde{k}}\vee b_{\widetilde{k}}
	\end{equation}
	Note that both the sequences $a_{\widetilde{k}}$ and $b_{\widetilde{k}}$ converge to $\epsilon$ as $\widetilde{k}\rightarrow \infty$. Thereby, taking limsup w.r.t $\widetilde{k}$ on both sides of Relation \eqref{e:T-lim-1}, we get
	\begin{equation}
	\label{e:T-lim-2}\limsup_{\widetilde{k}\rightarrow \infty}\widetilde{k}^{{\delta-1}}\max_{0\leq k_0<h(\widetilde{k})}\Big|\Big(\frac{T_{k_0,\widetilde{k}}}{T^*_{k_0,\widetilde{k}}}\Big)^{\widetilde{k}-k_0-1}(w)-1\Big| \leq \epsilon
	\end{equation}
	Since Relation \eqref{e:T-lim-2} holds for all $\epsilon>0$ and $\omega \in \Omega$ with $\mathbb{P}(\Omega)=1$, we have
	$$\widetilde{k}^{{\delta-1}}\max_{0\leq k_0<h(\widetilde{k})}\Big|\Big(\frac{T_{k_0,\widetilde{k}}}{T^*_{k_0,\widetilde{k}}}\Big)^{\widetilde{k}-k_0-1}(w)-1\Big|\rightarrow 0$$
	This entails the proof of the convergence in probability of Relation  \eqref{e:U-conv-mod}.
	
	\noindent{\em Proof of Relation \eqref{e:type-I-gen}}. To this end, we show that  $\mathbb{P}_{\mathcal{H}_0}(\widehat{k}_0=0)\rightarrow 1-q$. We first provide an upper bound on $\mathbb{P}_{\mathcal{H}_0}(\widehat{k}_0=0)$ as follows.
	\begin{eqnarray*}
		\mathbb{P}_{\mathcal{H}_0}(\widehat{k}_0=0)&=&\mathbb{P}_{\mathcal{H}_0}\Big(\underbrace{\bigcap_{i=0}^{k-2}\{U_{i,k}<(1-q)^{ca^{k-i-1}}\}}_{A_k}\Big)\\
		&\leq & \mathbb{P}_{\mathcal{H}_0}\Big(A_k \cap \underbrace{ \{k^{\delta-1}\max_{0\leq i\leq k-2}(U_{i,k}-U^*_{i,k})<\epsilon\}}_{B_{1k}}\Big)+\mathbb{P}_{\mathcal{H}_0}( \{k^{\delta-1}\max_{0\leq i\leq k-2}(U_{i,k}-U^*_{i,k})>\epsilon\})\\
		&\leq& \mathbb{P}_{\mathcal{H}_0}\Big(\underbrace{\bigcap_{i=0}^{k-2}\{U^*_{i,k}<(1-q)^{ca^{k-i-1}}+\epsilon k^{1-\delta}\}}_{A^*_{1k}}\Big)+\mathbb{P}_{\mathcal{H}_0}(B_{1k}^c) \hspace{4mm}(\textmd{since $A_k \cap B_{1k} \implies A^*_{1k}$})
	\end{eqnarray*}
	By Relation \eqref{e:U-conv}, $\mathbb{P}_{\mathcal{H}_0}(B_{1k}^c) \rightarrow 0$. Additionally, we have shown below that $\limsup_{k \to \infty} \mathbb{P}_{\mathcal{H}_0}(A^*_{1k}) \leq 1-q$ which implies $\limsup_{k \to \infty}\mathbb{P}(A_k) \leq 1-q$.
	
	Since $U_{i,k}^*$ are i.i.d. $U(0,1)$, therefore
	\begin{eqnarray*}
		\mathbb{P}_{\mathcal{H}_0}(A^*_{1k}) &=& (1-q)\prod_{i=0}^{k-2}\Big(1+\frac{\epsilon k^{1-\delta}}{(1-q)^{ca^{k-i-1}}}\Big) \\
		&\leq& (1-q)\underbrace{\Big(1+\frac{\epsilon k^{1-\delta}}{(1-q) }\Big)^{k-2}}_{c_{1k}} \quad  (\mbox{	since $(1-q)^{ca^{k-i-1}}\geq (1-q)$
		})	\end{eqnarray*}
	
	For $\delta\geq 2$, $k-2=O(k^{\delta-1})$ which implies  $\limsup_{k \rightarrow \infty}c_{1k} \leq(1+ M\epsilon/(1-q))$ for some $M>0$. Since this holds for all $\epsilon>0$, we have
	$$\limsup_{k \rightarrow \infty}\mathbb{P}_{\mathcal{H}_0}(A^*_{1k}) \leq (1-q)$$
	Finally, we provide a lower bound on $\mathbb{P}_{\mathcal{H}_0}(\widehat{k}_0=0)=\mathbb{P}_{\mathcal{H}_0}(A_k)$ as follows:
	\begin{eqnarray*}
		\mathbb{P}_{\mathcal{H}_0}(A_k)&\geq& \mathbb{P}_{\mathcal{H}_0}\Big(A_k \cap \underbrace{\{k^{\delta-1}\max_{0\leq i< k}(U_{i,k}-U^*_{i,k})>-\epsilon\}}_{B_{2k}}\Big)\\
		&\geq& \mathbb{P}_{\mathcal{H}_0}\Big(\underbrace{\bigcap_{i=0}^{k-2}\{U^*_{i,k}<(1-q)^{ca^{k-i-1}}-\epsilon k^{1-\delta}\}}_{A^*_{2k}} \cap B_{2k} \Big)=\mathbb{P}_{\mathcal{H}_0}(A_{2k}^*)-\mathbb{P}_{\mathcal{H}_0}(B_{2k}^c),\hspace{4mm}
	\end{eqnarray*}
	since $A^*_{2k}\cap B_{2k} \implies A_k\cap B_{2k} $.  By Relation \eqref{e:U-conv},  $\mathbb{P}(B_{2k}^c) \rightarrow 0$. Additionally, it has been shown below that $\liminf_{k \to \infty} \mathbb{P}_{\mathcal{H}_0}(A^*_{2k}) \geq 1-q$ which implies $\liminf_{k \to \infty} \mathbb{P}_{\mathcal{H}_0}\geq 1-q$.
	\begin{eqnarray*}
		\mathbb{P}_{\mathcal{H}_0}(A^*_{2k})&=&(1-q)\Big(1-\frac{\epsilon k^{1-\delta}}{(1-q)^{ca^{k-i-1}}}\Big)\\
		&\geq& (1-q)\underbrace{\Big(1-\frac{\epsilon}{(1-q) k^{\delta-1}}\Big)^{k-2}}_{c_{2k}}  \hspace{2mm}(\textmd{since $(1-q)^{ca^{k-i-1}}\geq (1-q)$})
	\end{eqnarray*}	
	For $\delta\geq 2$,  $k=O(k^{\delta-1})$ which implies $\liminf_{k \rightarrow \infty}c_{1k} \geq(1- M\epsilon/(1-q))$ for some $M>0$. Since this 	holds for every $\epsilon >0$, we have
	$$
	\liminf_{k \rightarrow \infty}\mathbb{P}_{\mathcal{H}_0}(A^*_{2k}) \geq (1-q)
	$$ 
	This completes the proof.
\end{proof}

{\small
	 }

\end{document}